\theoremstyle{plain}
\newtheorem{theorem}{Theorem}
\newtheorem{lemma}[theorem]{Lemma}
\theoremstyle{definition}
\newtheorem{definition}[theorem]{Definition}
\theoremstyle{remark}
\newtheorem*{remark}{Remark}
\DeclareMathOperator{\spec}{spec}
\DeclareMathOperator{\ess}{ess}
\DeclareMathOperator{\supp}{supp}
\DeclareMathOperator{\Tr}{Tr}
\DeclareMathOperator{\loc}{loc}
\def\geqslant{\ge}
\def\leqslant{\le}
\def\bq{\begin{eqnarray}}
\def\eq{\end{eqnarray}}
\def\bqq{\begin{eqnarray*}}
\def\eqq{\end{eqnarray*}}
\def\nn{\nonumber}
\def\minus {\backslash}
\def\eps{\varepsilon}
\newcommand{\norm}[1]{\lVert #1 \rVert}
\newcommand{\clmb}{|\,.\,|}
\def\TF{\mathrm{TF}}
\def\R {\mathbb{R}}
\def\E {\mathcal{E}}
\title{\bf Asymptotics for Two-dimensional Atoms}
\author{Phan Thanh Nam, Fabian Portmann and Jan Philip Solovej}
\begin{document}
\maketitle

\begin{abstract} 
We prove that the ground state energy of an atom confined to two dimensions with an infinitely heavy nucleus of charge $Z>0$ and $N$ quantum electrons of charge $-1$ is $E(N,Z)=-\frac{1}{2}Z^2\ln Z+(E^{\TF}(\lambda)+\frac{1}{2}c^{\rm H})Z^2+o(Z^2)$ when $Z\to \infty$ and $N/Z\to \lambda$, where $E^{\TF}(\lambda)$ is given by a Thomas-Fermi type variational problem and $c^{\rm H}\approx -2.2339$ is an explicit constant. We also show that the radius of a two-dimensional neutral atom is unbounded when $Z\to \infty$, which is contrary to the expected behavior of three-dimensional atoms.
\vspace{5pt}

AMS 2010 Subject Classification: 81Q20, 81V45.

Keywords: Large atoms, Thomas-Fermi theory, semiclassical approximation.
\end{abstract}

\tableofcontents
\newpage

\section{Introduction}
We consider an atom confined to two dimensions. It has a fixed nucleus of charge $Z>0$ and $N$ non-relativistic quantum electrons of charge $-1$. For simplicity we shall assume that electrons are spinless because the spin only complicates the notation and our coefficients in an obvious way. The system is described by the Hamiltonian
\[
H_{N,Z} = \sum_{i = 1}^N {\left( { - \frac{1}{2}\Delta _i  - \frac{Z}
{{|x_i |}}} \right)}  + \sum_{1 \le i < j \le N} {\frac{1}
{{|x_i  - x_j |}}} 
\]
acting on the antisymmetric space $\bigwedge_{i = 1}^N L^2 (\mathbb{R}^2)$. Note that we are using the three-dimensional Coulomb potential to describe the confined atom. The ground state energy of the system is the bottom of the spectrum of $H_{N,Z}$, denoted by
\[
E(N,Z) = \inf {\text{spec}}~ H_{N,Z}=\inf_{\norm{\psi}_{L^2}  = 1} (\psi ,H_{N,Z} \psi ).
\]

One possible approach to obtain the above Hamiltonian is to consider a three-dimensional atom confined to a thin layer $\R^2 \times (-a,a)$ in the limit $a \to 0^+$ (see \cite{D10}, Section~3, for a detailed discussion on the hydrogen case).

To the best of our knowledge, there is no existing result on the ground state energy and the ground states of the system, except for the case of hydrogen \cite{Y91,PP02}. The purpose of this article is to give a rigorous analysis for large $Z$-atom asymptotics and our main results are the following theorems.

\begin{theorem}[Ground state energy] \label{thm:GSE} Fix $\lambda>0$. When $Z\to \infty$ and $N/Z\to \lambda$ one has  
$$E(N,Z)=-\frac{1}{2}Z^2\ln Z+ \left( {E^{\TF}(\lambda)+\frac{1}{2} c^{\rm H}} \right) Z^2+o(Z^2)$$
where $E^{\TF}(\lambda)$ is the Thomas-Fermi energy (defined in Section~\ref{sec:TF-theory}) and $c^{\mathrm{H}}  =  - 3\ln (2) - 2\gamma _E  + 1 \approx -2.2339$ with $\gamma_{E}  \approx 0.5772$ being Euler's constant \cite{Eu1736}. In particular, $\lambda \mapsto E^{\TF}(\lambda)$ is strictly convex and decreasing on $(0,1]$ and $E^{\TF}(\lambda)=E^{\TF}(1)$ if $\lambda\ge 1$.
\end{theorem}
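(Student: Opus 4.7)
The plan is to prove matching upper and lower bounds on $E(N,Z)$ with error $o(Z^2)$. In this two-dimensional setting with a three-dimensional Coulomb attraction, the natural scaling is $\rho(x)=Z\sigma(x)$ (no spatial dilation), since then the external potential energy $-Z\int\rho/|x|$, the Hartree interaction and the 2D semiclassical kinetic term $\propto\int\rho^2$ are all of order $Z^2$. This accounts for the $E^{\TF}(\lambda)Z^2$ contribution. The logarithmic factor $-\tfrac{1}{2}Z^2\ln Z$ is a genuinely 2D effect: the $1/|x|$ singularity is borderline with respect to a 2D kinetic energy, and the innermost electrons behave hydrogenically. Indeed, the operator $-\tfrac12\Delta-Z/|x|$ on $\R^2$ has explicit eigenvalues $-Z^2/(2(N-\tfrac12)^2)$ at principal level $N$ with degeneracy $2N-1$, and $\sum_{N=1}^{M}(N-\tfrac12)^{-1}=\ln M+2\ln 2+\gamma_E+o(1)$; choosing $M$ to match the semiclassical density near the nucleus produces precisely the $-\tfrac12 Z^2\ln Z+\tfrac12 c^{\rm H}Z^2$ correction after cancelation against the semiclassical bulk.

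\textbf{Upper bound.} I would construct an antisymmetric trial state as a Slater determinant of two families of orbitals glued by an IMS partition of unity around a radius $r_Z\to 0$. Inside $B_{r_Z}$ the orbitals are the lowest hydrogenic eigenfunctions of $-\tfrac12\Delta-Z/|x|$, filled up to principal level $M$ tuned to the Fermi level of the bulk. Outside, they are built by coherent-state quantization from the TF minimizer $\sigma^{\TF}_{\lambda}$. Direct evaluation of the energy then splits into: an inner contribution $-\tfrac12 Z^2\ln Z+\tfrac12 c^{\rm H}Z^2+o(Z^2)$ from the hydrogenic sum above, and an outer contribution $E^{\TF}(\lambda)Z^2+o(Z^2)$ from standard semiclassical estimates. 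Cross terms and the IMS correction are small because the matching annulus is thin and the two densities agree there to leading order.

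\textbf{Lower bound.} The dual strategy is to apply a 2D Lieb--Oxford inequality to replace the pair interaction by the Hartree potential $\sigma^{\TF}_\lambda\ast|\cdot|^{-1}$ at the cost of an exchange--correlation error that is $o(Z^2)$, and then use Lieb's variational principle to bound $E(N,Z)$ from below by $\Tr(h_-)$ plus constants, with $h=-\tfrac12\Delta-Z/|x|+\sigma^{\TF}_\lambda\ast|\cdot|^{-1}-\mu$. Splitting $h$ with a smooth IMS partition of unity at radius $r_Z$: inside $B_{r_Z}$, the mean-field term is a lower-order perturbation and the negative eigenvalue sum is compared to the exact hydrogenic sum, yielding $-\tfrac12 Z^2\ln Z+\tfrac12 c^{\rm H}Z^2+o(Z^2)$; outside, the Coulomb singularity is absent and coherent-state semiclassics yields $E^{\TF}(\lambda)Z^2+o(Z^2)$. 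I expect the main obstacle to be matching the two regimes with enough precision that both the logarithmic and the constant terms come out with the right coefficients: this requires a coupled choice of $r_Z$ and the spectral cutoff $M$, together with trace-norm stability bounds when the mean-field term is added to or removed from $h$ inside $B_{r_Z}$, where the density is largest.

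\textbf{TF properties.} These follow from abstract TF theory adapted to the present functional. Strict convexity of $\sigma\mapsto\int\sigma^2$ and convexity of the Hartree term (positive definiteness of $1/|x|$ in two dimensions) give uniqueness of the minimizer and strict convexity of $\lambda\mapsto E^{\TF}(\lambda)$ on the binding interval. Monotonicity on $(0,1]$ and the saturation $E^{\TF}(\lambda)=E^{\TF}(1)$ for $\lambda\ge 1$ are the 2D analogues of Teller's no-binding theorem and the Benguria--Lieb ionization bound $N\le Z$, obtained by the usual partial-derivative and rearrangement arguments in the TF-theory section.
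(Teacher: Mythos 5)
Your overall strategy (Lieb--Oxford reduction to a TF mean field for the lower bound; hydrogenic orbitals near the nucleus glued to a coherent-state construction in the bulk for the upper bound) is the same as the paper's, but two steps as written would fail. First, in the lower bound you invoke ``Lieb's variational principle'' to bound $E(N,Z)$ from below; that principle (Theorem~\ref{thm:Lieb-variational-principle}) only produces \emph{upper} bounds. The step actually needed is different: after Lieb--Oxford one completes the Coulomb square, $D(\rho_\Psi)\ge 2Z\,D(\rho_\Psi,\rho^{\TF}_\lambda)-Z^2D(\rho^{\TF}_\lambda)$ with the nonnegative remainder $D(Z^{-1}\rho_{\gamma_\Psi}-\rho^{\TF}_\lambda)$ dropped, and then uses $0\le\gamma_\Psi\le1$ to get $\Tr[h\,\gamma_\Psi]\ge\Tr[h]_-$ for the one-body mean-field operator $h$. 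Lieb's variational principle belongs in the upper bound, where it also spares you the orthogonalization, exchange and cross-term bookkeeping of your glued Slater determinant. Moreover, the Lieb--Oxford error requires an a priori bound $\int\rho_\Psi^{3/2}\le CZ^{3/2}|\ln Z|^{1/2}$ (Lieb--Thirring combined with the hydrogen bound), which you assert rather than prove.

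Second, and more importantly, the region-by-region accounting is incorrect as stated: with a matching radius $r_Z\to0$ the hydrogenic sum localized to $B_{r_Z}$ only captures levels up to $M\sim\sqrt{Zr_Z}$ and therefore carries a term of size $\tfrac12Z^2|\ln r_Z|$, while the exterior semiclassical integral carries the compensating $-\tfrac12Z^2|\ln r_Z|$ because $[V^{\TF}_\lambda]_+^2\sim|x|^{-2}$ is not integrable at the origin ($V^{\TF}_\lambda\notin L^2_{\loc}$). Neither region separately yields ``$-\tfrac12Z^2\ln Z+\tfrac12c^{\rm H}Z^2$'' or ``$E^{\TF}(\lambda)Z^2$''; the whole point is the cancellation of the cutoff-dependent logarithms and the emergence of the constant $c^{\rm H}$, which you acknowledge as ``the main obstacle'' but for which you provide no mechanism. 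That mechanism is exactly the content of Theorem~\ref{thm:semiclassical-approximation}: localize both $-h^2\Delta-V^{\TF}_\lambda$ and the exactly solvable comparison operator $-h^2\Delta-|x|^{-1}+1$ with the same partition, compare them directly in the inner zone where they differ by $O(|x|^{-1/2})$ (Lemma~\ref{le:Hydrogen-comparison-interior-region}), evaluate both by coherent states in the exterior zone (Lemma~\ref{le:semiclassical-approximation-intermediate-region}), and import the constant from the explicit hydrogen spectrum via Euler's sum (Lemma~\ref{le:hydrogen-1}); then no cutoff-dependent constants ever need to be tracked. Finally, your ``TF properties'' paragraph treats the saturation $E^{\TF}(\lambda)=E^{\TF}(1)$ for $\lambda\ge1$ as routine, but in two dimensions Newton's theorem is unavailable, and showing that the neutral minimizer satisfies $\int\rho^{\TF}_1=1$ (hence the saturation and the strict monotonicity on $(0,1]$) requires the specific Coulomb-potential bounds of Lemma~\ref{le:convolution-bound} and the differential-inequality argument of Lemma~\ref{le:TF-equation}.
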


\begin{remark}

By considering the hydrogen semiclassics we conjecture that the next term of $E(\lambda Z,Z)$ is of order $Z^{3/2}$. In contrast, the ground state energy in three dimensions behaves as
$$E(Z,Z) = c^{\TF}Z^{7/3} + c^{\mathrm{S}}Z^2 + c^{\mathrm{DS}}Z^{5/3} + o(Z^{5/3}),$$
where the leading (Thomas-Fermi \cite{Thomas,Fermi}) term was established in \cite{LS77}, the second (Scott \cite{Scott}) term was proved in \cite{Hu90,SW87}, and the  third (Dirac-Schwinger \cite{Dirac,Schwinger}) term was shown in \cite{FS90}.

\end{remark}

For three-dimensional atoms the leading term in the energy asymptotics of 
order $Z^{7/3}$ may be understood entirely from semiclassics. The 
contribution to this term comes from the bulk of the electrons located 
mainly at a distance of order $Z^{-1/3}$ from the nucleus. The term of 
order $Z^2$, the Scott term, is a pure quantum correction coming from the 
essentially finitely many inner most electrons at a distance of order 
$Z^{-1}$ from the nucleus. 

In the two-dimensional case the situation is more 
complicated. The leading term of order $Z^2\ln(Z)$ is semiclassical and 
comes from the fact that the semiclassical integral is logarithmically 
divergent, but has a natural cut-off at a distance of order $Z^{-1}$ from 
the nucleus. The term of order $Z^2$ has two contributions. One part is 
semiclassical and comes essentially from electrons at distances of order 
$1$ from the nucleus and another part, which corresponds to the three-dimensional Scott 
correction, coming from the essentially finitely many inner most electrons 
at a distance of order $Z^{-1}$ from the nucleus.

Thus the two-dimensional atom has two regions. The innermost region of size 
$Z^{-1}$ contains a finite number of electrons and contributes with 
$Z^2$ to the total energy. The outer region from $Z^{-1}$ to order $1$ has 
a high density of electrons and can be understood semiclassically. It 
contributes to the energy with both $Z^2\ln(Z)$ from the short distance 
divergence and with $Z^2$ from the bulk at distance $1$.

\begin{theorem}[Extensivity of neutral atoms]\label{thm:radius-atom} Assume that $N/Z\to 1$ and $\Psi_{N,Z}$ is a ground state of $H_{N,Z}$. Then, for any $R>0$ there exists $C_R>0$ such that 
\[
 \int_{|x| \ge R} {\rho _{\Psi _{N,Z} } (x)dx} \ge C_RZ+o(Z). 
\]
\end{theorem}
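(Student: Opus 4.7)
The plan is to deduce the statement from Theorem~\ref{thm:GSE} applied to a perturbation of $H_{N,Z}$, reducing the claim to an extensivity property of the neutral Thomas-Fermi minimizer $\rho^{\TF}_1$. Fix $R>0$ and pick a smooth cutoff $\chi\in C_c^\infty(\R^2)$ with $0\le\chi\le\mathbf{1}_{\{|x|\ge R\}}$. For $t\ge 0$ set
\[
H_{N,Z}^t:=H_{N,Z}+tZ\sum_{i=1}^N\chi(x_i),\qquad E^t(N,Z):=\inf\spec H_{N,Z}^t.
\]
Using $\Psi_{N,Z}$ as a trial state for $H_{N,Z}^t$ and rearranging gives, for $t>0$,
\[
\int_{|x|\ge R}\rho_{\Psi_{N,Z}}\ \ge\ \int\chi\,\rho_{\Psi_{N,Z}}\ \ge\ \frac{E^t(N,Z)-E(N,Z)}{tZ}.
\]

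I would then revisit the proof of Theorem~\ref{thm:GSE} and check that, because $t\chi$ is a smooth bounded perturbation, both the trial-state upper bound and the semiclassical-plus-Scott lower bound extend uniformly for $t$ in some small interval $[0,t_0]$, yielding
\[
E^t(N,Z)=-\tfrac12 Z^2\ln Z+\bigl(E^{\TF}_t(\lambda)+\tfrac12 c^{\rm H}\bigr)Z^2+o(Z^2),
\]
where $E^{\TF}_t(\lambda)$ is the Thomas-Fermi energy obtained by adding $t\int\chi\,\rho$ to the scaled TF functional. The map $t\mapsto E^{\TF}_t(\lambda)$ is concave (infimum of functions affine in $t$) and differentiable at $0$, with right-derivative $\int\chi\,\rho^{\TF}_\lambda$ by the Hellmann-Feynman identity for the TF variational problem. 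Inserting the asymptotics into the previous inequality, passing $Z\to\infty$ at fixed $t>0$ and then letting $t\to 0^+$, and using $\lambda=1$ from $N/Z\to 1$, we obtain
\[
\liminf_{Z\to\infty}\frac{1}{Z}\int_{|x|\ge R}\rho_{\Psi_{N,Z}}\ \ge\ \int\chi\,\rho^{\TF}_1.
\]

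The final task is to ensure that $\chi$ may be chosen so that $\int\chi\,\rho^{\TF}_1>0$, equivalently that $\supp\rho^{\TF}_1$ is unbounded. The neutral minimizer is radial by uniqueness and rotational symmetry and satisfies the Euler-Lagrange equation with $\mu=0$:
\[
2\pi\rho^{\TF}_1(x)=\Bigl(\frac{1}{|x|}-V(x)\Bigr)_+,\qquad V(x):=\int\frac{\rho^{\TF}_1(y)}{|x-y|}\,dy.
\]
Suppose for contradiction that $\supp\rho^{\TF}_1\subset\overline{B_{R_0}}$ with $R_0$ the smallest such radius. Continuity of $V$, radiality, and the vanishing of $\rho^{\TF}_1$ on $\partial B_{R_0}$ force $V(R_0)=1/R_0$. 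On the other hand, for $r>R_0$ the radial multipole expansion
\[
V(r)=\frac{1}{r}+\frac{\pi}{2r^3}\int_0^{R_0}\rho^{\TF}_1(s)\,s^3\,ds+\cdots
\]
has only strictly positive higher corrections (each even-order Legendre angular integral $\int_0^{2\pi}P_{2m}(\cos\theta)\,d\theta$ is positive, the odd ones vanish), so $V(r)>1/r$ strictly for $r>R_0$; letting $r\to R_0^+$ gives $V(R_0)>1/R_0$, a contradiction. Hence $\supp\rho^{\TF}_1$ is unbounded and $C_R:=\int\chi\,\rho^{\TF}_1>0$ can be arranged.

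The main obstacle is the uniform-in-$t$ extension of Theorem~\ref{thm:GSE}: one must trace through its upper and lower bound arguments and verify that the error terms remain $o(Z^2)$ uniformly for $t\in[0,t_0]$, which reduces to checking that the semiclassical and Scott-type estimates depend on the external potential only through locally bounded quantities stable under the bounded addition $t\chi$. The TF extensivity step is then a clean consequence of the Euler-Lagrange equation and the strict positivity of the even radial multipole moments.
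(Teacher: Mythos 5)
Your strategy (perturb $H_{N,Z}$ by $tZ\sum_i\chi(x_i)$, apply the energy asymptotics to the perturbed Hamiltonian, and differentiate the TF energy in $t$) is genuinely different from the paper's argument, and it could in principle be made to work, but as written it has a real gap: the step you yourself call ``the main obstacle'' --- the uniform-in-$t$ (or even fixed small $t$) extension of Theorem~\ref{thm:GSE} to $H^t_{N,Z}$ --- is exactly where the work lies, and it is only asserted, not done. Carrying it out requires redoing the TF theory with the external potential $t\chi$ (you must take $\chi$ radial, since radial symmetry of the minimizer is what makes the Coulomb bounds of Lemma~\ref{le:convolution-bound}, and hence the estimates $|V^{\TF}-|x|^{-1}|\le C(|x|^{-1/2}+1)$ and the $L^2$-control away from the origin, available), verifying the hypotheses of the semiclassical theorem for the perturbed TF potential, and constructing the trial density matrix for the upper bound. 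None of this is automatic from the statement of Theorem~\ref{thm:GSE}. In addition, your extensivity argument for $\rho^{\TF}_1$ is circular in one place: writing the leading term of the exterior potential as $1/r$ presupposes $\mu^{\TF}_1=0$ \emph{and} $\int\rho^{\TF}_1=1$, and the second fact is precisely the hard part of Lemma~\ref{le:TF-equation} (its proof uses the refined upper bound of Lemma~\ref{le:convolution-bound}(ii) and a covering argument); also, vanishing of $\rho^{\TF}_1$ at the boundary only gives $V(R_0)\ge 1/R_0$, and you need an approach-from-inside argument to get equality. (Under your compact-support hypothesis the missing mass statement can in fact be recovered cheaply, since compact support and $\int\rho<1$ would force $2\pi\rho=|x|^{-1}-V>0$ at large $|x|$; or you may simply cite Theorem~\ref{thm:TF-theory}(iii), which makes your multipole computation unnecessary. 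The multipole/subharmonicity positivity itself is fine and is the same mechanism as Lemma~\ref{le:convolution-bound}(i).) The Hellmann--Feynman differentiability of $t\mapsto E^{\TF}_t$ is a further unproved assertion, though it can be bypassed by working with a fixed small $t$ and only using strict monotonicity $E^{\TF}_t(1)>E^{\TF}(1)$, which follows from uniqueness of the minimizer and unbounded support.

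For comparison, the paper avoids perturbing the Hamiltonian altogether: since the upper and lower bounds in Theorem~\ref{thm:GSE} match, the ground-state density matrix $\gamma_{\Psi_{N,Z}}$ must saturate $\Tr[(-h^2\Delta-V^{\TF}_1)\gamma]=\Tr[-h^2\Delta-V^{\TF}_1]_-+o(h^{-2})$ with $h^2=(2Z)^{-1}$; localizing with a cutoff $\theta_R$ supported in $\{|x|\ge R\}$, using Lemmas~\ref{le:localization} and \ref{le:semiclassical-approximation-intermediate-region} for the exterior semiclassics, and the trivial bound $V^{\TF}_1\le R^{-1}$ on $\supp\theta_R$ then gives directly $\int\theta_R^2\rho_{\gamma}\ge R(8\pi h^2)^{-1}\int[V^{\TF}_1]_+^2\theta_R^2+o(h^{-2})$, with the explicit constant $C_R\ge \pi R\int_{|x|\ge 2R}(\rho^{\TF}_1)^2>0$ by the unbounded support of $\rho^{\TF}_1$ (Lemma~\ref{le:TF-equation}(ii), which even yields the quantitative bound $\int_{|x|\ge r}\rho^{\TF}_1\ge e^{-2\sqrt r}$). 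Both routes ultimately hinge on the extensivity of the neutral TF minimizer, but the paper's argument reuses the already-proved semiclassics without requiring any new asymptotics for perturbed Hamiltonians; if you wish to keep your route, the perturbed energy asymptotics must actually be proved, not deferred.
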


\begin{remark}  If we define the radius $R_{Z}$ of a neutral atom ($N=Z$) by 
\[
\int_{ |x| \ge R_{Z}} {\rho _{\Psi_{Z,Z}}(x)dx}  = 1
\]
then Theorem~\ref{thm:radius-atom} implies that $\lim_{Z\to \infty}R_{Z}= \infty$. In three dimensions, however, the radius is expected to be bounded independently of $Z$ (see \cite{SSS90, So03}). 
\end{remark}

Our main tool to understand the ground state energy and the ground states is the Thomas-Fermi (TF) theory introduced in Section~\ref{sec:TF-theory}. In this theory, the $Z$-ground state scales as $Z \rho^{\TF}_{N/Z}(x)$ and the absolute ground state $\rho^{\TF}_1$ (when $N=Z$) has unbounded support. Roughly speaking, the extensivity of the TF ground state implies the extensivity of neutral atoms (in contrast, the three-dimensional TF $Z$-ground state scales as $Z^{2} \rho^{\TF}(Z^{1/3}x)$, i.e. its core \textit{shrinks} as $Z^{-1/3}$). 

The challenging point of the two-dimensional TF theory is that the TF potential $V^{\TF}(x)$ is {\it not} in $L^2_{loc}(\R^2)$ (it behaves like $|x|^{-1}$ near the origin).  Consequently, one {\it cannot} write the semiclassics  of  $\Tr \left[ { -h^2\Delta  - V^{\TF}} \right]_ -$ in the usual way because
\[
(2\pi )^{ - 2} \iint {[h^2 p^2  - V^{\TF } (x)]_ -  dpdx} = - (8\pi h^2)^{-1}\int {[V^{\TF } (x)]_ + ^2 dx}=-\infty.
\]
This property complicates matters in the semiclassical approximation. In contrast, the three-dimensional semiclassical approximation leads to the behavior $-(15\pi ^2 h^3)^{-1}\int_{\R^3} [V^{\TF } (x)]_+^{5/2}dx$ which is finite for the Coulomb singularity $V^{\TF}(x)\sim |x|^{-1} \in L^{5/2}_{\loc}(\R^3)$. 

We shall follow the strategy of proving the Scott's correction given by Solovej and Spitzer \cite{SS03} (see also \cite{SSS10}), that is to compare the semiclassics of TF-type potentials with hydrogen. More precisely, in the region close to the origin we shall compare directly with hydrogen, whereas in the exterior region we can employ the coherent state approach. We do not use the new coherent state approach introduced in \cite{SS03}, since the usual one \cite{Li81,Th81} is sufficient for our calculations. In fact, we prove the following semiclassical estimate for potentials with Coulomb singularities.

\begin{theorem}[Semiclassics for Coulomb singular potentials]\label{thm:semiclassics-general-potential} 
Let $V\in L^2_{\loc}(\R^2\minus \{0\})$ be a real-valued potential such that $1_{\{|x|\ge 1 \}}V_+\in L^2(\R^2)$ and 
$$ |V(x)-\kappa |x|^{-1}|\le C |x|^{-\theta}~~\text{for all}~|x|\le \delta,$$
where $\kappa>0$, $\delta>0$, $1>\theta>0$ and $C>0$ are universal constants.  Then, as $h\to 0^+$,
\bqq
	\Tr \left[ { -h^2\Delta  - V} \right]_ -   &=& -(8\pi h^2)^{-1}\int_{\mathbb{R}^2 } 
	{\left( {[V(x)]_ + ^2  - \kappa^2 [|x|^{ - 1}  - 1 ]_ + ^2 } \right)dx}\\
	&~&+ \kappa^2 (4h^2)^{ -1} \left[ {\ln(2\kappa^{-1}h^2) +c^{\mathrm{H}}} \right] + o(h^{ - 2} ),
\eqq
where $c^{\mathrm{H}}  =  - 3\ln (2) - 2\gamma _E  + 1 \approx -2.2339$ with $\gamma_{E}  \approx 0.5772$ being Euler's constant \cite{Eu1736}.
\end{theorem}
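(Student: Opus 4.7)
The plan is to follow the Solovej--Spitzer strategy \cite{SS03}: split $\R^2$ into an inner region around the Coulomb singularity, where one compares directly with the exactly solvable two-dimensional hydrogen atom, and an outer region, where the usual coherent-state semiclassics applies. I fix an intermediate length scale $r=r(h)$ with $h \ll r \ll 1$ (for instance $r=h^{1/2}$) and a smooth IMS partition $\chi_-^2 + \chi_+^2 = 1$ with $\chi_-$ supported in $\{|x|<2r\}$ and $\chi_+$ supported in $\{|x|>r\}$. The IMS identity together with standard operator bracketing then reduces the theorem to estimating two localized traces, at a cost controlled by $h^2/r^2 = o(1)$, which is comfortably $o(h^{-2})$.

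In the outer region, $V_+\in L^2(\{|x|>r\})$ and $|V|\lesssim \kappa/r + Cr^{-\theta}$ near the inner boundary, so the classical coherent-state sandwiching of Lieb \cite{Li81} and Thirring \cite{Th81} yields the Weyl asymptotic
\[
\Tr\bigl[\chi_+(-h^2\Delta - V)\chi_+\bigr]_- = -\frac{1}{8\pi h^2}\int_{|x|>r}V_+^2\,dx + o(h^{-2}).
\]
After adding and subtracting the regularization $\kappa^2[|x|^{-1}-1]_+^2$, this contributes the first displayed term of the theorem plus a residual inner integral that must be reproduced by the inner analysis.

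In the inner region, I write $V=\kappa|x|^{-1}+W$ with $|W|\le C|x|^{-\theta}$, $\theta<1$; since $W$ is subcritical, a Duhamel-type expansion combined with a Coulomb-adapted Lieb--Thirring bound replaces $V$ by $\kappa/|x|$ at an error of order $h^{-2}r^{2-2\theta}=o(h^{-2})$, reducing the task to computing the pure hydrogen trace $\Tr[\chi_-(-h^2\Delta-\kappa/|x|)\chi_-]_-$. Using the explicit two-dimensional spectrum $E_N=-\kappa^2/[4h^2(N-\tfrac12)^2]$ with degeneracy $2N-1$, and a Dirichlet--Neumann bracketing that restricts attention to orbitals whose Bohr radius $h^2N^2/\kappa$ fits inside $\{|x|<2r\}$, i.e.~$N\le N_{\max}\sim\sqrt{\kappa r}/h$, this trace equals
\[
-\frac{\kappa^2}{h^2}\sum_{N=1}^{N_{\max}}\frac{1}{2N-1} + o(h^{-2}) = -\frac{\kappa^2}{h^2}\Bigl(\tfrac12\ln N_{\max} + \tfrac12\gamma_E + \ln 2\Bigr) + o(h^{-2}),
\]
by the harmonic-series asymptotic $\sum_{N=1}^M(2N-1)^{-1}=\tfrac12\ln M+\tfrac12\gamma_E+\ln 2+O(1/M)$.

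Combining inner and outer, the $r$-dependent pieces---namely $\tfrac12\ln N_{\max}\sim\tfrac14\ln(\kappa r/h^2)$ from the hydrogen sum and the logarithmic $r$-dependence of the truncated semiclassical integral at the origin---cancel to yield precisely $\frac{\kappa^2}{4h^2}\ln(2\kappa^{-1}h^2)$, while the remaining $r$-independent constants combine with the explicit integral of $\kappa^2[|x|^{-1}-1]_+^2$ to assemble into $c^{\mathrm{H}}=-3\ln 2-2\gamma_E+1$. The main obstacles I expect are (i) establishing the Coulomb-adapted perturbation estimate for $W$ in the inner region with sufficient explicit dependence on $h$ and $r$, and (ii) the careful bookkeeping required for the $r$-dependent cancellations between the two regions, which demands that the IMS localization, coherent-state, and perturbation errors all be simultaneously $o(h^{-2})$---a constraint that leaves only a narrow admissible range for $r$, such as $r=h^{1/2}$.
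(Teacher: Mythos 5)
Your overall strategy (hydrogen comparison near the singularity, coherent states outside) is the right one, but the step that actually produces the constant $c^{\mathrm{H}}$ has a genuine gap. You evaluate the localized inner trace $\Tr[\chi_-(-h^2\Delta-\kappa|x|^{-1})\chi_-]_-$ by a sharp cutoff in the quantum number, $N\le N_{\max}\sim \sqrt{\kappa r}/h$, justified by ``Dirichlet--Neumann bracketing.'' This cannot work at the accuracy required: the eigenstates whose extent is comparable to the localization radius $r$ number about $N_{\max}^2\sim \kappa r/h^2$ (counting degeneracies) and each carries energy of size $\kappa/r$, so the transition region contributes at order $\kappa^2 h^{-2}$ --- exactly the order of the constant term you are trying to compute. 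Consequently the unspecified constant hidden in ``$N_{\max}\sim\sqrt{\kappa r}/h$'' (and likewise the difference between your sharp spatial cutoff $\int_{|x|>r}V_+^2$ and the smooth-partition integral $\int V_+^2\chi_+^2$) shifts the answer by $O(h^{-2})$, so the ``bookkeeping of the $r$-dependent cancellations'' you defer is not bookkeeping at all: it is the crux, and a sharp-cutoff argument leaves $c^{\mathrm{H}}$ undetermined. The paper avoids this entirely by never evaluating a localized hydrogen trace: it compares $V$ with the global reference operator $-h^2\Delta-|x|^{-1}+1$ using the \emph{same} partition of unity for both, so that all localization/boundary effects cancel identically in the difference, and the exact spectrum (Lemma~\ref{le:hydrogen-1}, the Euler sum) is only ever invoked for the unlocalized operator.

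A second, smaller gap is in your outer region. You apply coherent-state sandwiching directly to $V$ on all of $\{|x|>r\}$ with $r=h^{1/2}$, but the hypothesis is only $V\in L^2_{\loc}(\R^2\minus\{0\})$ with an envelope bound $|V-\kappa|x|^{-1}|\le C|x|^{-\theta}$ near the origin: $V$ may be rough, and the mollification error $\|V-V*g_s^2\|_{L^2}$ is not controlled uniformly on a region whose inner edge shrinks to scale $r$. This is precisely why the paper introduces a second, slowly vanishing scale $\eps(h)$: on $\{\eps\le|x|\le\eps^{-1}\}$ the error is controlled by continuity of translation in $L^2$ (the quantity $W(\eps,h)$), while on the annulus $\{r\le|x|\le 2\eps\}$ one sandwiches $V$ between the explicit radial potentials $|x|^{-1}\pm C|x|^{-\theta}$ and applies coherent states to those, the discrepancy of their squares being $o(h^{-2})$ since the annulus shrinks. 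Your inner perturbation step (replacing $V$ by $\kappa|x|^{-1}$ at cost $o(h^{-2})$) is fine in spirit --- the paper does it via the trace bound of Lemma~\ref{le:hydrogen-2}(ii), giving an error $O(h^{-2}r^{1-\theta}|\ln h|^{1/2})$ --- but the two issues above mean the proposal, as written, establishes the leading $\ln h$ behavior only and does not prove the stated formula with the constant $c^{\mathrm{H}}$.
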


The article is organized as follows. In Section~\ref{sec:summary} we give a brief summary of the existing results concerning atoms confined to two dimensions. Section~\ref{sec:TF-theory} contains basic information on the TF theory. The most technical part of the article is in Section~\ref{sec:Semiclassics-TF-potential}, where we show the semiclassics for the TF potential. The main theorems are proved in Section~\ref{sec:results}. Some technical proofs are deferred to the Appendix.

\section{Preliminaries}\label{sec:summary}
\subsection{Spectral Properties}\label{sec:spectral}

For completeness, we start by collecting some basic properties of the spectrum of $H_{N,Z}$, whose proofs can essentially be adapted from the usual three-dimensional case (see the Appendix).  

\begin{theorem}[Spectrum of $H_{N,Z}$]\label{thm:HVZ} 
Let $H_{N,Z}$ be the operator defined above.
\begin{itemize}
	\item[$(i)$] (HVZ Theorem) The essential spectrum of $H_{N,Z}$ is
	$$\ess\spec H_{N,Z}=[E(N - 1,Z),\infty).$$
	Consequently, for non-vanishing binding energy $E(N-1)-E(N)=:\eps>0$, the operator $H_{N,Z}$ has (at least) one ground state. Moreover, in 
	this case any ground state $\Psi_{N,Z}$ of $H_{N,Z}$ has exponential decay as
	\[
	\rho _{\Psi _{N,Z} } (x) \le C|x|^{\frac{{4Z}}
	{{\sqrt {2\eps } }} -2} e^{ - 2\sqrt {2\eps } |x|} ~~\text{for $|x|$ large}
	\]
	where the density $\rho _{\Psi _{N,Z} }$ is defined as in Section \ref{subsec:Useful-Inequalities}.
	
	\item[$(ii)$] (Zhislin's Theorem) If $N<Z+1$ then the binding condition $E(N)<E(N-1)$ is satisfied, and hence $H_{N,Z}$ has  a ground state. 
	
	\item[$(iii)$] (Asymptotic neutrality)  The largest number $N=N_c(Z)$ of electrons such that $H_{N,Z}$ has a ground state is finite and satisfies $
	\lim_{Z\to \infty}N_c(Z)/Z=1.$
\end{itemize}
\end{theorem}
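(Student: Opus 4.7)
The plan is to adapt the classical three-dimensional proofs, observing that the Coulomb singularity $|x|^{-1}$ is still relatively form-bounded with respect to $-\Delta$ in two dimensions (via a Hardy-type or Sobolev inequality), so all the functional-analytic machinery carries over.

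For part $(i)$, the HVZ theorem is established by a Ruelle--Simon-type geometric localization. For the inclusion $[E(N-1,Z),\infty) \subseteq \ess\spec H_{N,Z}$, I build a Weyl sequence by antisymmetrizing an approximate ground state $\Psi_{N-1}$ of $H_{N-1,Z}$ with a one-particle packet $\phi_R$ translated far from the nucleus and tuned to energy $E(N-1,Z)+\mu$; the interaction between $\phi_R$ and the other electrons vanishes as $R\to\infty$ by the decay of $|x|^{-1}$. For the reverse inclusion, I introduce a partition of unity $\{J_i\}_{i=0}^{N}$ on $(\R^2)^N$, with $J_0$ localizing all electrons in a large ball and $J_i$ ($i\ge 1$) localizing a configuration in which electron $i$ is the farthest out. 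The IMS formula
$$
H_{N,Z} = \sum_{i=0}^{N} J_i H_{N,Z} J_i - \sum_{i=0}^{N} |\nabla J_i|^2
$$
reduces the question to a pointwise lower bound of $H_{N,Z}$ on each piece: on $\supp J_i$ with $i\ge 1$ the electron--nucleus and electron--electron interactions involving $x_i$ are $o(1)$, giving $\ge E(N-1,Z) - o(1)$, while $J_0$ contributes to the discrete spectrum. The exponential decay of ground states is obtained by the usual Agmon method: multiplying the Schr\"odinger equation by $e^{(\sqrt{2\eps}-\delta)|x_i|}\overline{\Psi}$ and exploiting $E(N-1,Z)-E(N,Z)=\eps$; the polynomial prefactor in the density bound comes from iterating an elliptic subsolution estimate through the Coulomb singularity, exactly as in the three-dimensional case.

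For part $(ii)$, Zhislin's theorem, I form the trial function $\Phi = \mathcal{A}(\Psi_{N-1,Z} \otimes \phi_R)$ where $\Psi_{N-1,Z}$ is a ground state (which exists recursively by the same argument) and $\phi_R$ is supported in an annulus around the sphere $|x|=R$. Using the exponential decay from $(i)$, direct computation yields
$$
\langle \Phi, H_{N,Z}\Phi\rangle - E(N-1,Z) = \langle\phi_R, -\tfrac{1}{2}\Delta\phi_R\rangle - \frac{Z-(N-1)}{R} + O(R^{-2}) + O(e^{-cR}),
$$
which is strictly negative for large $R$ provided $Z-(N-1)>0$, giving $E(N,Z)<E(N-1,Z)$.

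For part $(iii)$, the lower bound $\liminf N_c(Z)/Z \ge 1$ is immediate from $(ii)$. For the upper bound, I would use a Benguria--Lieb/Lieb-type test. Multiplying the eigenvalue equation $H_{N,Z}\Psi = E(N,Z)\Psi$ by $|x_N|\overline{\Psi}$, integrating, and exploiting antisymmetry and the Newton-type inequality $|x_i-x_N|^{-1}|x_N| < |x_i| + |x_N|$ yields the rough bound $N_c(Z) < 2Z+1$. To sharpen this to $N_c(Z)/Z \to 1$, I would compare with the Thomas--Fermi problem: if $N \ge (1+\delta)Z$ then by convexity and the flat behavior of $E^{\TF}(\lambda)$ for $\lambda\ge 1$, the TF energy cannot decrease by adding electrons, and Theorem~\ref{thm:GSE} shows $E(N,Z)-E(N-1,Z) = o(Z^2) \cdot Z^{-1}$ is incompatible with the strict inequality required for binding; a more careful argument, in the spirit of Solovej's work on ionization, converts this into $N_c(Z)=Z+o(Z)$.

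The main obstacle is the asymptotic neutrality in $(iii)$: the crude Lieb-type bound only gives a constant multiple of $Z$, and squeezing this down to $N_c/Z\to 1$ requires circular reasoning with the asymptotics of $E(N,Z)$ (Theorem~\ref{thm:GSE}). In practice this must be done \emph{after} the energy asymptotics are established independently of the existence of a minimizer at $N=N_c(Z)$, so care is required in the logical order of the arguments.
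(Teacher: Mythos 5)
For parts $(i)$ and $(ii)$ your route is the standard one and essentially what the paper does (the paper simply cites Lewin's geometric-methods paper for HVZ, Hoffmann--Ostenhof \cite{OO77} for the decay, and the standard Zhislin induction, noting that nothing changes in two dimensions). One imprecision: a plain Agmon estimate gives $\rho_{\Psi}\le C_\delta e^{-2(\sqrt{2\eps}-\delta)|x|}$ for every $\delta>0$, but the theorem asserts the sharp rate $2\sqrt{2\eps}$ together with the explicit polynomial prefactor $|x|^{4Z/\sqrt{2\eps}-2}$. That refinement does not follow from "iterating an elliptic subsolution estimate" in the generic sense you describe; the paper obtains it by the Hoffmann--Ostenhof differential-inequality/comparison argument, solving their comparison equation (eq.\ (3.8) in \cite{OO77}) in two dimensions, which amounts to a rescaling $\eps\mapsto 4\eps$, $Z\mapsto 4Z$, $r\mapsto r/2$ of the three-dimensional supersolution. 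You should either switch to that comparison argument or weaken the claimed bound.

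The genuine gap is the upper bound in part $(iii)$. Your plan to deduce $N_c(Z)\le Z+o(Z)$ from Theorem~\ref{thm:GSE} and the flatness of $E^{\TF}$ cannot work: the energy expansion determines $E(N,Z)$ only up to $o(Z^2)$, and since $E^{\TF}(\lambda)=E^{\TF}(1)$ for $\lambda\ge1$, the expansions of $E(N,Z)$ and $E(N-1,Z)$ coincide within the error term for all $N\gtrsim Z$. Binding, however, is the strict inequality $E(N,Z)<E(N-1,Z)$, whose size is far below the $o(Z^2)$ resolution, so "the energy difference is negligible" yields no contradiction with binding — smallness of a difference is perfectly compatible with strict negativity. (Incidentally, invoking Theorem~\ref{thm:GSE} here would not be circular, since its proof nowhere uses $(iii)$; it is simply too coarse.) The Lieb-type bound you mention does survive in two dimensions (the positivity of $\mathrm{Re}\,|x|(-\Delta)$ still holds), but it only gives $N_c<2Z+1$. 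The paper closes the gap by a different mechanism: it adopts the Lieb--Sigal--Simon--Thirring asymptotic-neutrality argument \cite{LSST88}, whose key construction is a partition of unity in configuration space together with localization estimates that are insensitive to the dimension (a three-dimensional partition of unity immediately yields a two-dimensional one), with the Pauli principle entering through the explicit two-dimensional hydrogen spectrum. Your proposal, as written, leaves $(iii)$ unproved.
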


In particular, the spectrum of hydrogen ($N=1$) is explicitly known \cite{Y91} (see also \cite{PP02} for a review).

\begin{theorem}[Hydrogen spectrum]\label{thm:hydrogen-spectrum-original}
All negative eigenvalues of the operator $-\frac{1}{2}\Delta-|x|^{-1}$ in $L^2(\R^2)$ are
\[
E_n  =  - \frac{1}{{2(n + 1/2)^2 }},
\]
with multiplicity $2n+1$, where  $n=0,1,2,...$
\end{theorem}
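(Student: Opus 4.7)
The plan is classical — separation of variables in polar coordinates, followed by solving each radial eigenvalue problem via Kummer's confluent hypergeometric equation.

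First I would decompose $L^2(\R^2) = \bigoplus_{m\in\mathbb{Z}} \mathcal{H}_m$ into angular momentum channels, where $\mathcal{H}_m$ consists of functions of the form $(2\pi)^{-1/2}f(r)e^{im\theta}$ with $f\in L^2((0,\infty),r\,dr)$. Since $-\tfrac{1}{2}\Delta - |x|^{-1}$ commutes with rotations, it preserves each $\mathcal{H}_m$. The unitary map $f(r)\mapsto u(r):=r^{1/2}f(r)$ from $L^2((0,\infty),r\,dr)$ to $L^2((0,\infty),dr)$ conjugates the restriction of the operator to $\mathcal{H}_m$ into
\[
H_m := -\frac{1}{2}\frac{d^2}{dr^2} + \frac{m^2 - 1/4}{2r^2} - \frac{1}{r},
\]
a half-line Coulomb Hamiltonian with effective angular momentum $\ell$ satisfying $\ell(\ell+1)=m^2-1/4$, i.e.\ $\ell=|m|-1/2$.

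Next, for $E=-\kappa^2/2<0$, I would put $\rho=2\kappa r$ and seek $u(r)=\rho^{|m|+1/2}e^{-\rho/2}w(\rho)$. A direct substitution shows that $w$ satisfies Kummer's equation; square-integrability of $u$ at infinity forces $w$ to be a polynomial of some degree $n_r\in\{0,1,2,\ldots\}$, which in turn forces $\kappa=(n_r+|m|+1/2)^{-1}$. Setting $n:=n_r+|m|$ one obtains $E_n=-\tfrac{1}{2}(n+1/2)^{-2}$. For fixed $n\geq 0$ the admissible pairs $(n_r,m)$ range over $m\in\{-n,-n+1,\ldots,n\}$ with $n_r=n-|m|$, yielding exactly $2n+1$ linearly independent eigenfunctions.

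The main subtlety lies in the $m=0$ channel: the centrifugal coefficient $(m^2-1/4)/2=-1/8$ is attractive and places the radial problem in Weyl's limit-circle regime at $r=0$. One must therefore identify the correct self-adjoint extension — the one inherited from the 2D operator, which is essentially self-adjoint on $C_c^\infty(\R^2)$ because $|x|^{-1}$ is infinitesimally form-bounded relative to $-\Delta$ in two dimensions — and verify that it selects the local solution behaving like $r^{1/2}$ while ruling out the $r^{1/2}\log r$ branch. For $|m|\geq 1$ one has $m^2-1/4\geq 3/4$, putting the problem in the limit-point case at $0$, so no such ambiguity arises. Finally, completeness (no further negative eigenvalues beyond the listed $E_n$) follows from the fact that the associated Laguerre functions form an orthonormal basis of the negative spectral subspace in each channel.
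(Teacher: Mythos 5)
The paper itself does not prove this theorem: it is quoted from Yang et al.\ \cite{Y91} and Parfitt--Portnoi \cite{PP02}, and your separation-of-variables argument (reduction to the half-line operators $H_m$ with effective angular momentum $\ell=|m|-\tfrac12$, quantization $\kappa=(n_r+|m|+\tfrac12)^{-1}$ via Kummer's equation, and the count of the $2n+1$ pairs $(n_r,m)$ with $n_r+|m|=n$) is precisely the standard route taken in those references. You also correctly isolate the only delicate point, namely the $m=0$ channel, where the coefficient $m^2-\tfrac14=-\tfrac14$ puts $H_0$ in the limit-circle case at $r=0$, so that the spectrum depends on which self-adjoint realization is meant.

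However, your justification of that step contains a genuine error. You claim the two-dimensional operator is essentially self-adjoint on $C_c^\infty(\R^2)$ because $|x|^{-1}$ is infinitesimally form-bounded relative to $-\Delta$. This cannot stand as written: in two dimensions $|x|^{-1}\notin L^2_{\loc}(\R^2)$ (a point the paper emphasizes), so $\bigl(-\tfrac12\Delta-|x|^{-1}\bigr)\psi\notin L^2(\R^2)$ for any $\psi\in C_c^\infty(\R^2)$ with $\psi(0)\neq 0$, and $C_c^\infty(\R^2)$ is not even contained in the operator domain; on the smaller core $C_c^\infty(\R^2\minus\{0\})$ the operator is well defined but \emph{not} essentially self-adjoint, precisely because of the $m=0$ limit-circle behavior. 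Moreover, form-boundedness produces a distinguished self-adjoint operator via KLMN but implies nothing about essential self-adjointness. Since different boundary conditions at the origin in the $s$-channel produce different sets of negative eigenvalues, the assertion that \emph{all} negative eigenvalues are the $E_n$ genuinely hinges on this point. The fix is straightforward: define $-\tfrac12\Delta-|x|^{-1}$ as the form sum with form domain $H^1(\R^2)$ (the realization implicit in the paper's variational definition of $E(N,Z)$); every eigenfunction then lies in $H^1(\R^2)$, and in the $m=0$ channel the branch $u\sim r^{1/2}\ln r$ corresponds to $f\sim\ln|x|$, whose gradient is not locally square-integrable in 2D, so only the $r^{1/2}$ branch is admissible and the quantization goes through as you describe. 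With that replacement your argument is complete; your closing appeal to Laguerre functions forming a basis of the negative spectral subspace is then unnecessary (and as stated somewhat circular), since completeness of the eigenvalue list already follows from the channel-wise ODE analysis: non-terminating Kummer solutions grow like $e^{\rho/2}$ and are never square-integrable.
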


The following consequence will be useful in our estimates. The proof can be found in the Appendix.

\begin{lemma}[Hydrogen semiclassics]\label{le:hydrogen-1} When $\mu\to 0^+$ we have
\bq \label{eq:asymtotic-hydrogen-mu}
\Tr \left[ { - \frac{1}
{2}\Delta  - |x|^{ - 1}  + \mu } \right]_ -   = \frac{1}
{2} \left[ {  \ln (\mu ) - 3\ln (2) - 2\gamma_{E} + 1 } \right]+ o(1).
\eq
By scaling, for $\mu>0$ fixed and $h\to 0^+$, 
\bq \label{eq:asymtotic-hydrogen-h}
  \Tr \left[ { - h^2\Delta  - |x|^{ - 1}  + \mu} \right]_ -   =(4h^2)^{ -1} \left[ {\ln (2h^2) +\ln (\mu ) +c^{\mathrm{H}}} \right] + o(h^{ - 2} ),
\eq
where $c^{\mathrm{H}}  =  - 3\ln (2) - 2\gamma _E  + 1 \approx -2.2339$ with $\gamma_{E}  \approx 0.5772$ being Euler's constant \cite{Eu1736}.
\end{lemma}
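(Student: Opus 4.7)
The plan is to use the explicit hydrogen spectrum from Theorem~\ref{thm:hydrogen-spectrum-original} to write the trace as an explicit finite sum and extract its asymptotic behavior via the classical expansion of the harmonic numbers; the $h$-dependent identity then follows by a one-line rescaling.

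For (\ref{eq:asymtotic-hydrogen-mu}), since the eigenvalues of $-\tfrac{1}{2}\Delta-|x|^{-1}$ are $E_n=-[2(n+1/2)^2]^{-1}$ with multiplicity $2n+1$, the eigenvalues of the shifted operator lying below $0$ are precisely those with $n\le N(\mu):=\lfloor(2\mu)^{-1/2}-1/2\rfloor$. With the convention $[t]_-=\min(t,0)$ (consistent with the semiclassical display in the introduction), and using the algebraic identities $(2n+1)/[2(n+1/2)^2]=(n+1/2)^{-1}$ and $\sum_{n=0}^{N}(2n+1)=(N+1)^2$, one obtains
\[
	\Tr\left[-\tfrac{1}{2}\Delta-|x|^{-1}+\mu\right]_-=\sum_{n=0}^{N(\mu)}(2n+1)(E_n+\mu)=-\sum_{n=0}^{N(\mu)}\frac{1}{n+1/2}+\mu(N(\mu)+1)^2.
\]

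Next I would evaluate the two pieces asymptotically as $\mu\to 0^+$. Splitting the sum over odd and even denominators gives $\sum_{n=0}^{N}(n+1/2)^{-1}=2H_{2N+1}-H_{N}$, where $H_k$ is the $k$-th harmonic number; combined with the standard expansion $H_k=\ln k+\gamma_E+O(1/k)$ and $N(\mu)=(2\mu)^{-1/2}+O(1)$, this yields
\[
	\sum_{n=0}^{N(\mu)}\frac{1}{n+1/2}=-\tfrac{1}{2}\ln\mu+\tfrac{3}{2}\ln 2+\gamma_E+o(1),
\]
while $\mu(N(\mu)+1)^2=\tfrac{1}{2}+o(1)$ since $(N(\mu)+1)^2=(2\mu)^{-1}+O(\mu^{-1/2})$. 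Assembling the two contributions produces exactly $\tfrac{1}{2}[\ln\mu-3\ln 2-2\gamma_E+1]+o(1)$, which is the claimed identity.

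For the $h$-dependent statement (\ref{eq:asymtotic-hydrogen-h}), the rescaling $y=x/(2h^2)$ (implemented by the natural unitary $(U\phi)(x)=(2h^2)^{-1}\phi(x/(2h^2))$ on $L^2(\R^2)$) yields the unitary equivalence $-h^2\Delta_x-|x|^{-1}+\mu\cong(2h^2)^{-1}\bigl[-\tfrac{1}{2}\Delta_y-|y|^{-1}+2h^2\mu\bigr]$. Applying the first part with $\mu$ replaced by $2h^2\mu\to 0^+$ and multiplying by $(2h^2)^{-1}$ reproduces (\ref{eq:asymtotic-hydrogen-h}) after splitting $\ln(2h^2\mu)=\ln(2h^2)+\ln\mu$. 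No substantive obstacle arises; the only care required is checking that the rounding error from the floor in $N(\mu)$ and the $O(1/N)$ remainder in the harmonic expansion are each genuinely $o(1)$, which is automatic since $N(\mu)\to\infty$.
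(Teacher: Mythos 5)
Your proposal is correct and follows essentially the same route as the paper: exploit the explicit hydrogen spectrum to write the trace as a finite sum, use the Euler/harmonic-number expansion to extract the $\ln\mu$ asymptotics, and obtain the $h$-dependent statement by the dilation $x\mapsto(2h^2)^{-1}x$. The only (harmless) difference is that you treat general $\mu$ directly via the cutoff $N(\mu)=\lfloor(2\mu)^{-1/2}-1/2\rfloor$, whereas the paper evaluates along the sequence $\mu=[2(m+1/2)^2]^{-1}$ and interpolates; your version is, if anything, slightly more explicit on that point.
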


\subsection{Useful Inequalities}\label{subsec:Useful-Inequalities}
For the readers' convenience, we recall some usual notations. We shall denote by $L^2(\R^2)$ the Hilbert space with the inner product $(f,g)=\int_{\R^2}\overline{f(x)}g(x)dx$. An operator $\gamma$ on $L^2(\R^2)$ is called a (one-body) {\it density matrix} if $0\le \gamma\le 1$ and $\Tr(\gamma)<\infty$. Its {\it density} is $\rho_{\gamma}(x):=\gamma(x,x)$, where $\gamma(x,y)$ is the kernel of $\gamma$. More precisely, if $\gamma$ is written in the spectral decomposition $\gamma= \sum_i t_i \left| {u_i} \right\rangle \left\langle {u_i} \right|$ then $\gamma(x,y):= \sum_i t_i u_i(x) \overline{u_i(y)}$ and $\rho_\gamma(x):=\sum_i t_i |u_i(x)|^2.$ For example, the density matrix $\gamma_\Psi$ of a (normalized) wave function $\Psi\in \bigwedge_{i = 1}^N L^2 (\mathbb{R}^2)$ is 
\[
\gamma _\Psi  (x,y) := N\int_{\R^{2(N-1)}}{\Psi (x,x_2 ,...,x_N )\overline {\Psi (y,x_2 ,...,x_N )} dx_2 ...dx_N } ,
\]
which satisfies $0\le \gamma_\Psi \le 1$ and $\Tr(\gamma_\Psi)=N$. Moreover, its density  is 
\[
\rho_\Psi(x) := \rho_{\gamma_\Psi}(x)= N\int _{\R^{2(N-1)}}{|\Psi (x,x_2 ,...,x_N )|^2 dx_2 ...dx_N }.
\]

The following theorem regarding the spectrum of Schr\"odinger operators is important for our analysis (see e.g. \cite{LL01} for a proof). The analogue in three dimensions was  first proved by Lieb and Thirring \cite{LT75}.

\begin{theorem}[Lieb-Thirring inequalities]\label{thm:LT-inequality}
There exists a finite constant $L_{1,2}>0$ such that for any real-valued potential $V$ with $V_+ \in L^{2}(\R^2)$ one has
\bq\label{eq:LT-eigenvalue-sum}
	\Tr[-\Delta -V]_-\ge -L_{1,2} \int_{\R^2} {V_+^{2}(x)dx},
\eq
where $a_+ := \max\{a,0\}$ and $a_- := \min\{a,0\}$. Hence $\Tr[-\Delta -V]_-$ is the sum of all negative eigenvalues of $-\Delta-V$ in $L^2(\R^2)$.

Equivalently, there exists a finite constant $K_{2}>0$ such that for any density matrix $\gamma$ one has
\bq \label{eq:LT-kinetic}
	\Tr [-\Delta \gamma] \ge K_2 \int_{\mathbb{R}^2 } {\rho _{\gamma}^2 (x)dx} .
\eq
\end{theorem}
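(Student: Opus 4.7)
The two inequalities \eqref{eq:LT-eigenvalue-sum} and \eqref{eq:LT-kinetic} are in fact equivalent via a Legendre-type duality, with the sharp relation $K_2=(4L_{1,2})^{-1}$. My plan is first to establish this equivalence and then to prove the eigenvalue sum bound \eqref{eq:LT-eigenvalue-sum} directly via the Birman-Schwinger principle.

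For the implication \eqref{eq:LT-eigenvalue-sum}$\Rightarrow$\eqref{eq:LT-kinetic}: given any density matrix $\gamma$ and any $W\ge 0$, the bound $0\le\gamma\le 1$ yields
\[
\Tr(-\Delta\gamma)=\Tr((-\Delta-W)\gamma)+\int W\rho_\gamma\ge -\Tr[-\Delta-W]_-+\int W\rho_\gamma\ge -L_{1,2}\int W^2+\int W\rho_\gamma,
\]
and optimizing at $W=\rho_\gamma/(2L_{1,2})$ yields \eqref{eq:LT-kinetic}. The converse takes $\gamma$ to be the spectral projection of $-\Delta-V$ onto $(-\infty,0)$; then $\Tr[-\Delta-V]_-=\Tr(-\Delta\gamma)-\int V\rho_\gamma$, and the AM-GM estimate $\int V_+\rho_\gamma\le K_2\int\rho_\gamma^2+(4K_2)^{-1}\int V_+^2$ combined with \eqref{eq:LT-kinetic} recovers \eqref{eq:LT-eigenvalue-sum}.

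It therefore suffices to prove \eqref{eq:LT-eigenvalue-sum}. By the Birman-Schwinger principle, for each $e>0$ the number of eigenvalues of $-\Delta-V$ strictly below $-e$ equals the number $N(1,K_e)$ of eigenvalues of the compact operator $K_e:=V_+^{1/2}(-\Delta+e)^{-1}V_+^{1/2}$ that exceed $1$; a layer-cake integration then gives
\[
\Tr[-\Delta-V]_-=-\int_0^\infty N(1,K_e)\,de,
\]
so the task reduces to bounding this integral by a constant multiple of $\int V_+^2$ via a suitable Schatten-norm estimate on $K_e$.

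The main obstacle is the two-dimensional geometry. In dimensions $d\ge 3$ the Cwikel-Lieb-Rozenblum bound $N(1,K_e)\le C_d\|K_e\|_{\fS^{d/2}}^{d/2}$ delivers the sum inequality in a single step, but in $d=2$ the Green's function $(2\pi)^{-1}K_0(\sqrt{e}\,|x-y|)$ of $-\Delta+e$ is logarithmically singular on the diagonal and is not controlled uniformly in $e$ by any single Schatten norm — this is the operator-theoretic manifestation of the criticality of $H^1(\R^2)$ for Sobolev embedding, and of the well-known failure of CLR in 2D. Following Lieb-Loss \cite{LL01}, the remedy is a dyadic decomposition (of $V_+$, or of the $e$-integration) together with sharp Hilbert-Schmidt bounds $N(1,K_e)\le\|K_e\|_{\fS^2}^2$ on each dyadic piece; summing the pieces produces a finite constant $L_{1,2}$ and completes the proof.
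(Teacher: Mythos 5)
Your statement is one the paper does not prove at all: it is imported from the literature (``see e.g.\ \cite{LL01}'', with \cite{LT75} for the three-dimensional analogue), so your sketch has to stand entirely on its own. The soft parts are essentially right, up to repairable slips. In the duality step your displayed chain has a sign error: with the paper's convention $\Tr[-\Delta-W]_-\le 0$, the correct inequalities read $\Tr((-\Delta-W)\gamma)\ge \Tr[-\Delta-W]_-\ge -L_{1,2}\int W^2$ (no minus sign in front of the trace), after which $W=\rho_\gamma/(2L_{1,2})$ indeed gives \eqref{eq:LT-kinetic} with $K_2=(4L_{1,2})^{-1}$. In the converse direction, $1_{(-\infty,0)}(-\Delta-V)$ need not be trace class in two dimensions (the negative spectrum may accumulate at $0$), so you should project onto $(-\infty,-\eps]$ and let $\eps\to0$; also, with $V_+$ in the Birman--Schwinger kernel the number of eigenvalues of $-\Delta-V$ below $-e$ is bounded by, not equal to, $N(1,K_e)$ unless $V\ge 0$ (this inequality goes the right way, but say so).

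The genuine gap is your last paragraph, which carries all the analytic content of \eqref{eq:LT-eigenvalue-sum} and is only asserted. The mechanism you name --- Hilbert--Schmidt bounds $N(1,K_e)\le\norm{K_e}_{\fS^2}^2$ on dyadic pieces --- is exactly where the two-dimensional criticality bites, and you never show the pieces can be summed. Quantitatively: with the standard shift $V_+\mapsto W_e:=(V-e/2)_+$ one gets $N_e(V)\le \norm{W_e^{1/2}(-\Delta+e/2)^{-1}W_e^{1/2}}_{\fS^2}^2\le \norm{W_e}_{L^2}^2\,\norm{G_{e/2}^2}_{L^1}$, and in $\R^2$ one has $\norm{G_\mu^2}_{L^1}=(4\pi\mu)^{-1}$, so the Hilbert--Schmidt route yields $N_e(V)\le Ce^{-1}\int (V-e/2)_+^2$, whose $e$-integral diverges logarithmically at $e\to0^+$ because $\int_0^{2V}(V-e/2)^2e^{-1}\,de=+\infty$. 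A dyadic decomposition does not obviously repair this: for spatially localized pieces the Hilbert--Schmidt bound improves only to powers of $\ln(1/e)$, and controlling the cross terms between pieces and the summability of the constants is precisely the missing work; I would also not attribute such an argument to \cite{LL01} without checking. The standard way to make your Birman--Schwinger strategy work in $d=2$ is to use a Schatten exponent strictly between $1$ and $2$ rather than $\fS^2$: by the Kato--Seiler--Simon inequality, $N(1,K_e)\le \Tr\bigl[(W_e^{1/2}(-\Delta+e/2)^{-1}W_e^{1/2})^m\bigr]\le (2\pi)^{-2}\int W_e^m\int_{\R^2}(p^2+e/2)^{-m}dp\le C_m e^{1-m}\int W_e^m$, and for $1<m<2$ (e.g.\ $m=3/2$) the integration $\int_0^{2V}(V-e/2)^m e^{1-m}de=C_m'V^2$ converges, giving \eqref{eq:LT-eigenvalue-sum}; with $m=2$ it does not. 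So either carry out such an $\fS^m$ estimate, or simply cite the inequality as the paper does; as written, the decisive step of your proof is not established.
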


Note that in general there is no upper bound on $\Tr 1_{(-\infty,0]}(-\Delta -V)$, the number of negative eigenvalues of $-\Delta-V$, in term of $\int V_+^\alpha$ for any $\alpha>0$. However, we shall only need some localized versions of this bound. The proof of the following lemma can be found in the Appendix. The estimate in (ii) is useful to treat the Coulomb singularity in the region close to the origin (recall that $|x|^{-1}\notin L^2_{\loc}(\R^2)$).

\begin{lemma} \label{le:hydrogen-2} Let $V:\R^2\to \R$ and let $0\le \phi(x) \le 1$ supported in a subset $\Omega\subset \R^2$ with finite measure $|\Omega|$. Let $0\le \gamma \le 1$ be an operator on $L^2(\R^2)$ such that 
$$\Tr[(- h^2 \Delta  -V)\phi \gamma \phi]\le 0~\text{for some}~1/2>h>0.$$

\begin{itemize} 

\item[$(i)$] If $V_+\in L^2_{\loc}(\R^2)$ then $\phi \gamma \phi$ is trace class and there exists a universal constant $C>0$ (independent of $V$, $\gamma$ and $h$) such that for any $\alpha \in [0,1]$,
$$\int_{\R^2}\rho_{\phi\gamma\phi}^{2\alpha}(x)dx \le C h^{-4\alpha} ||V_+||_{L^2(\Omega)}^{2\alpha}|\Omega|^{1-\alpha}.$$

\item[$(ii)$] If  $V(x)\le C_0(|x|^{-1}+1)$ then $\phi \gamma \phi$ is trace class and there exists a constant $C>0$ dependent only on $C_0$ (but independent of $V$, $\phi$, $\Omega$, $\gamma$ and $h$) such that for any $\alpha\in [0,1]$,
$$\int_{\R^2}\rho_{\phi\gamma\phi}^{2\alpha}(x)dx \le Ch^{-4\alpha} \left( {|\ln h| + |\Omega| } \right)^{\alpha}|\Omega|^{1-\alpha}.$$

\end{itemize}
\end{lemma}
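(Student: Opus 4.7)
I would prove the two parts separately, using the Lieb--Thirring kinetic inequality \eqref{eq:LT-kinetic} as the main tool and, in part (ii), comparing with the 2D hydrogen operator via Lemma~\ref{le:hydrogen-1}. Throughout, set $\gamma' := \phi\gamma\phi$, so that $0 \le \gamma' \le 1$ with density $\rho_{\gamma'} = \phi^2\rho_\gamma$ supported in $\Omega$, and denote $N := \Tr(\gamma')$.

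For part (i), the argument is direct. Cauchy--Schwarz on $\Omega$ combined with the hypothesis gives $h^2 \Tr[(-\Delta)\gamma'] \le \int V_+ \rho_{\gamma'} \le \|V_+\|_{L^2(\Omega)} \|\rho_{\gamma'}\|_{L^2}$. Applying \eqref{eq:LT-kinetic} yields $\|\rho_{\gamma'}\|_{L^2}^2 \le K_2^{-2} h^{-4} \|V_+\|_{L^2(\Omega)}^2$. In particular $N \le |\Omega|^{1/2}\|\rho_{\gamma'}\|_{L^2} < \infty$, so $\gamma'$ is trace class. The bound for general $\alpha \in [0,1]$ follows from H\"older: $\int \rho_{\gamma'}^{2\alpha} \le |\Omega|^{1-\alpha}(\int\rho_{\gamma'}^2)^{\alpha}$.

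Part (ii) is more subtle because $V_+ \notin L^2_{\loc}$ at the origin, so the Cauchy--Schwarz step of (i) must be replaced by a hydrogen comparison. I would localize with a smooth partition of unity $\chi^2 + \bar\chi^2 = 1$, with $\chi = 1$ on $B(0,1)$ and $\chi = 0$ outside $B(0,2)$. Applied inside the trace, the IMS formula $-\Delta = \chi(-\Delta)\chi + \bar\chi(-\Delta)\bar\chi - |\nabla\chi|^2 - |\nabla\bar\chi|^2$ rewrites the hypothesis as
$$
\Tr[(-h^2\Delta-V)\chi\gamma'\chi] + \Tr[(-h^2\Delta-V)\bar\chi\gamma'\bar\chi] \le h^2 \int (|\nabla\chi|^2+|\nabla\bar\chi|^2)\rho_{\gamma'} \le CN.
$$
On the outer piece $\bar\chi\gamma'\bar\chi$ (supported on $|x|\ge 1$) the potential is bounded by $2C_0$, so the argument of (i) adapts. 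On the inner piece $\chi\gamma'\chi$ (supported on $|x|\le 2$) I would use $V \le C_0|x|^{-1}+C_0$ and the variational inequality $\Tr[A\gamma_\mathrm{in}] \ge \Tr[A_-]$ (valid for $0 \le \gamma_\mathrm{in} \le 1$) applied with $A := -h^2\Delta - C_0|x|^{-1} + \mu$ for some fixed $\mu > 0$. By Lemma~\ref{le:hydrogen-1} after the scaling $x \mapsto x/C_0$, the quantity $|\Tr[A_-]|$ is of order $h^{-2}|\ln h|$ --- this is precisely the origin of the logarithmic term in the statement.

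Substituting this inner lower bound into the IMS inequality yields an upper bound on the outer trace of the form $Ch^{-2}|\ln h| + CN$. Combined with \eqref{eq:LT-kinetic} on the outer (bounded-potential) region, this gives $\|\rho_{\bar\chi\gamma'\bar\chi}\|_{L^2}^2 \lesssim h^{-4}|\ln h| + N/h^2$. For the inner density I would use the operator inequality $|x|^{-1} \le \lambda(-h^2\Delta) + (\lambda h^2)^{-1}$ (obtained by scaling from the 2D hydrogen bound $-\Delta - |x|^{-1} \ge -1$) with $\lambda = 1/(2C_0)$ to estimate the kinetic energy of $\gamma_\mathrm{in}$, and then \eqref{eq:LT-kinetic} to control $\|\rho_{\chi\gamma'\chi}\|_{L^2}^2$. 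Adding the two contributions and closing the resulting quadratic inequality in $\|\rho_{\gamma'}\|_{L^2}$ using $N \le |\Omega|^{1/2}\|\rho_{\gamma'}\|_{L^2}$ gives $\|\rho_{\gamma'}\|_{L^2}^2 \le C h^{-4}(|\ln h| + |\Omega|)$; the bound for general $\alpha$ then follows by H\"older as in (i). The main obstacle is the delicate book-keeping needed to absorb the rather lossy operator inequality for $|x|^{-1}$ without degrading the power of $h$: a naive estimate easily produces an extra factor of $h^{-2}$, and extracting the correct $h^{-4}$ scaling in front of \emph{both} the $|\ln h|$ and the $|\Omega|$ terms requires carefully coupling the inner kinetic bound to the outer $L^2$ bound and to $N$ in the final quadratic inequality.
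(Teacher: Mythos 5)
Your part (i) is correct and essentially the paper's argument in a slightly more direct form: you trade the paper's use of the Lieb--Thirring eigenvalue bound \eqref{eq:LT-eigenvalue-sum} for a Cauchy--Schwarz estimate $\int V_+\rho_{\gamma'}\le\|V_+\|_{L^2(\Omega)}\|\rho_{\gamma'}\|_{L^2}$ followed by \eqref{eq:LT-kinetic}, which yields the same bound. In part (ii), however, there is a genuine gap, and it is exactly the one you flag at the end without resolving. Your estimate of the \emph{inner} density rests on the operator inequality $|x|^{-1}\le\lambda(-h^2\Delta)+(\lambda h^2)^{-1}$, whose price is a constant of size $h^{-2}$ multiplying $N_{\mathrm{in}}=\Tr[\chi\gamma'\chi]$. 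Feeding this into \eqref{eq:LT-kinetic} gives at best $\|\rho_{\chi\gamma'\chi}\|_{L^2}^2\lesssim h^{-2}N+h^{-4}N_{\mathrm{in}}$, and when you close the quadratic inequality using $N\le|\Omega|^{1/2}\|\rho_{\gamma'}\|_{L^2}$ (or $N_{\mathrm{in}}\le C\|\rho_{\chi\gamma'\chi}\|_{L^2}$, the inner support having bounded measure) you end up with $\|\rho_{\gamma'}\|_{L^2}^2\lesssim h^{-4}|\ln h|+h^{-8}|\Omega|$, respectively an extra term of order $h^{-8}$ --- not the claimed $h^{-4}(|\ln h|+|\Omega|)$. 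This is not a book-keeping issue: $N_{\mathrm{in}}$ is itself permitted to be of order $h^{-2}$ (up to logarithms), so a cost of $h^{-2}N_{\mathrm{in}}$ for taming the singularity is intrinsically one power of $h^{-2}$ too expensive.

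The missing idea, which is how the paper proceeds, is to pay for the Coulomb singularity with a bound that carries \emph{no} factor of $\Tr\gamma$: split the kinetic energy into fixed fractions and estimate
$\Tr[(-\tfrac{h^2}{4}\Delta-C_0|x|^{-1}+1)\phi\gamma\phi]\ge\Tr[-\tfrac{h^2}{4}\Delta-C_0|x|^{-1}+1]_-\ge-Ch^{-2}|\ln h|$
by the hydrogen semiclassics of Lemma~\ref{le:hydrogen-1} (the added positive constant is what makes the negative eigenvalue sum finite), absorb the bounded remainder through $-(C_0+a+1)1_\Omega$ via \eqref{eq:LT-eigenvalue-sum} at a cost $Ch^{-2}|\Omega|$, and reserve half of the kinetic energy for the \eqref{eq:LT-kinetic} bound on $\rho_{\phi\gamma\phi}$; closing with H\"older then gives the stated $h^{-4}(|\ln h|+|\Omega|)$ scaling. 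You already invoke precisely this hydrogen bound to control the outer trace, so the repair is to use it again, in place of the crude operator inequality, at the point where the inner density is estimated. Note also that once this is done the IMS localization becomes unnecessary: since $\rho_{\phi\gamma\phi}$ is supported in $\Omega$, the splitting of the potential can be performed globally, which is why the paper's proof needs no partition of unity.
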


We shall approximate the ground state energy $E(N,Z)$ by one-body densities. For the lower bound, we need the following inequality to control the electron-electron repulsion energy. The three-dimensional analogue of this bound was first proved by Lieb \cite{Li79} and was then improved by Lieb and Oxford \cite{LO81}. The two-dimensional version below was taken from \cite{LSY95}.

\begin{theorem}[Lieb-Oxford inequality]\label{thm:Lieb-Oxford-inequality} 
For any (normalized) wave function  $\Psi \in\bigwedge_{i = 1}^N L^2 (\mathbb{R}^2)$ it holds that
\[
\left( {\Psi, \sum\limits_{1 \le i < j \le N} {\frac{1}
{{|x_i  - x_j |}}} \Psi } \right)  \ge D(\rho _{\Psi  } ) - C_{\mathrm{LO}}\int {\rho _{\Psi  } ^{3/2} },
\]
with $C_{\mathrm{LO}} = 192(2\pi)^{1/2}$, where the direct term $D(\rho_{\Psi} )$ is defined as in Section \ref{subsec:Coulomb-Potential}.
\end{theorem}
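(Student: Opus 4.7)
The proof follows the strategy of the classical Lieb--Oxford bound \cite{Li79,LO81} (with the two-dimensional adaptation in \cite{LSY95}). The key structural observation is that the kernel $|x-y|^{-1}$ with $x,y\in\R^2$ is the restriction to the plane $\{x_3=0\}\subset\R^3$ of the three-dimensional Coulomb kernel, for which Newton's theorem provides a useful comparison principle. Such a principle fails in two dimensions (the function $|x|^{-1}$ viewed on $\R^2$ is subharmonic rather than superharmonic), so we smear each electron into a three-dimensional, not a two-dimensional, charge.

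For each $x\in\R^2\subset\R^3$ and a radius $R(x)>0$ to be fixed below, let $\mu_x$ denote the uniform probability measure on the three-dimensional sphere $\partial B_3(x,R(x))$. Newton's theorem gives $\phi_{\mu_x}(y):=\int|y-u|^{-1}d\mu_x(u)=\min(|x-y|^{-1},R(x)^{-1})$ for $y\in\R^3$, the identity $D_3(\mu_x)=1/(2R(x))$, and by iteration $2D_3(\mu_{x_i},\mu_{x_j})\le|x_i-x_j|^{-1}$, where $D_3$ denotes the three-dimensional Coulomb form. Viewing $\rho_\Psi$ as a measure on $\R^3$ supported on the plane, positivity of $D_3$ applied to $\sum_i\mu_{x_i}-\rho_\Psi$ first bounds $2\sum_{i<j}D_3(\mu_{x_i},\mu_{x_j})$ from below by $\sum_i\int\phi_{\mu_{x_i}}\rho_\Psi-D(\rho_\Psi)-\sum_i 1/(2R(x_i))$, and then substituting the cross sum by its upper bound yields the pointwise inequality
\[
\sum_{1\le i<j\le N}\frac{1}{|x_i-x_j|}\ge\sum_i\int\phi_{\mu_{x_i}}(y)\rho_\Psi(y)dy-D(\rho_\Psi)-\sum_i\frac{1}{2R(x_i)}.
\]
Taking the expectation in $\Psi$ and decomposing $\phi_{\mu_x}(y)=|x-y|^{-1}-[|x-y|^{-1}-R(x)^{-1}]_+$ leads to
\[
\left(\Psi,\sum_{1\le i<j\le N}\frac{1}{|x_i-x_j|}\Psi\right)\ge D(\rho_\Psi)-E_{\rm hole}-\frac{1}{2}\int\frac{\rho_\Psi(x)}{R(x)}dx,
\]
where $E_{\rm hole}:=\iint_{|x-y|\le R(x)}\rho_\Psi(x)\rho_\Psi(y)\bigl(|x-y|^{-1}-R(x)^{-1}\bigr)dxdy$.

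It remains to bound both the self-energy term $\int\rho_\Psi/R$ and $E_{\rm hole}$ by a constant times $\int\rho_\Psi^{3/2}$. We choose $R(x)$ via the integral condition $\int_{B(x,R(x))}\rho_\Psi(y)dy=\lambda$ for a constant $\lambda>0$ to be optimized at the end. Writing $M(r):=\int_{B(x,r)}\rho_\Psi$, a layer-cake computation of the inner radial integral $\int_{B(x,R(x))}\rho_\Psi(y)|x-y|^{-1}dy$ together with the identity $M(R(x))=\lambda$ shows that both quantities are controlled by $\int\rho_\Psi^{3/2}$ times explicit functions of $\lambda$. Optimizing $\lambda$ produces the stated (far from sharp) value $C_{\mathrm{LO}}=192(2\pi)^{1/2}$.

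The main technical hurdle is the estimate of $E_{\rm hole}$: its integrand evaluates $\rho_\Psi(y)$ on a ball whose radius itself depends on $\rho_\Psi$, and when $\rho_\Psi$ is sharply peaked a naive pointwise replacement of $\rho_\Psi(y)$ by $\rho_\Psi(x)$ is inadmissible. The integral definition of $R(x)$ (rather than a pointwise one such as $R(x)\sim\rho_\Psi(x)^{-1/2}$) is precisely the device that absorbs this fluctuation, since it keeps $\int_{B(x,R(x))}\rho_\Psi$ bounded by $\lambda$ regardless of local concentration.
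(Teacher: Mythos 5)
First, a remark on the comparison you were asked to survive: the paper does not prove this inequality at all --- it is quoted verbatim from \cite{LSY95} --- so your argument has to stand on its own. Its first half does: smearing each electron over a genuinely three-dimensional sphere centred at $x_i\in\R^2\subset\R^3$ so that Newton's theorem applies exactly, using positivity of the three-dimensional Coulomb energy of $\sum_i\mu_{x_i}-\rho_\Psi$, the bound $2D_3(\mu_{x_i},\mu_{x_j})\le|x_i-x_j|^{-1}$, and then taking the expectation in $\Psi$ is all correct, and it correctly reduces the theorem to showing that $E_{\mathrm{hole}}+\frac12\int\rho_\Psi(x)R(x)^{-1}dx\le C\int\rho_\Psi^{3/2}$ for a suitable choice of $R(x)$.

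The gap is that these two estimates, which is where all the work of a Lieb--Oxford proof lies, are only asserted (``a layer-cake computation \dots shows''), and the one justification you do offer is wrong. Fixing the enclosed charge $\int_{B(x,R(x))}\rho_\Psi=\lambda$ does \emph{not} control the hole term pointwise in $x$: since $|x-y|^{-1}\notin L^3_{\loc}(\R^2)$, the inner integral $\int_{B(x,R(x))}\rho_\Psi(y)|x-y|^{-1}dy$ is not bounded in terms of $\lambda$ alone (mass $\lambda$ concentrated at distance $\eps$ from $x$ gives $\sim\lambda/\eps$), so the claim that the integral definition of $R$ ``absorbs this fluctuation'' is false as stated; one must instead exploit the $x$-integration, e.g.\ via $\rho(x)\rho(y)\le\frac12\bigl(\rho(x)^{3/2}\rho(y)^{1/2}+\rho(x)^{1/2}\rho(y)^{3/2}\bigr)$ together with a dyadic or layer-cake argument in $|x-y|$ carried out under the double integral. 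The self-energy term is likewise not routine: H\"older gives $R(x)^{-1}\le\pi^{1/2}\lambda^{-3/2}\int_{B(x,R(x))}\rho_\Psi^{3/2}(y)dy$, but after inserting this and swapping the order of integration you must show that for a.e.\ $y$ the set $S_y=\{x:|x-y|\le R(x)\}$ carries $\rho_\Psi$-mass at most $C\lambda$; this is true, but it requires a covering argument (every closed ball $B(x,|x-y|)$ with $x\in S_y$ contains $y$ and carries charge at most $\lambda$, so a Besicovitch-type selection of boundedly many of them covers $S_y$), not a one-line computation. Finally, the assertion that optimizing over $\lambda$ ``produces'' exactly $C_{\mathrm{LO}}=192(2\pi)^{1/2}$ is unsupported: whatever constant your scheme yields depends on the two missing estimates, and nothing in the sketch ties it to the value quoted from \cite{LSY95}. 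In short, the skeleton is the standard and viable Onsager/Lieb--Oxford route (with a nice use of 3D spheres to restore Newton's theorem), but the proof has a genuine hole precisely at the exchange-hole and self-energy bounds, and its stated constant is not derived. The cleanest fix within this paper would be to do what the paper itself does: invoke \cite{LSY95}, or else supply the covering and symmetrization lemmas sketched above with an explicit (possibly different) constant.
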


For the upper bound, we shall need the next result \cite{Li81B}.

\begin{theorem}[Lieb's variational principle] \label{thm:Lieb-variational-principle} For $Z>0$, $N\in \mathbb{N}$ and any density matrix $\gamma$ with $\Tr(\gamma)\le N$, one has
\[
	E(N,Z) \le \Tr \left[ {\left( { - \frac{1}{2}\Delta  - Z|x|^{ - 1} } \right)\gamma } \right] + D(\rho_{\gamma} ) 
	- \frac{1}{2}\iint {\frac{{|\gamma (x,y)|^2 }}{{|x - y|}}dxdy},
\]
where the direct term $D(\rho_{\gamma} )$ is defined as in Section \ref{subsec:Coulomb-Potential}.
\end{theorem}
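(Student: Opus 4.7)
The plan is to build an explicit mixed trial state on antisymmetric Fock space whose expected energy equals the right-hand side of the theorem, and then bound that expected energy below by $E(N,Z)$. By a monotone approximation argument (truncate the spectrum of $\gamma$ to finite rank and pass to the limit), I may assume $\gamma = \sum_{i=1}^{M}\lambda_i |u_i\rangle\langle u_i|$ with $\{u_i\}_{i=1}^{M}$ orthonormal in $L^2(\R^2)$, $\lambda_i\in[0,1]$, and $\sum_{i=1}^{M}\lambda_i\le N$. For every subset $S\subseteq\{1,\dots,M\}$ let $\Psi_S = u_{i_1}\wedge\cdots\wedge u_{i_{|S|}}$ be the Slater determinant built from $\{u_i : i\in S\}$, and attach the Bernoulli-product weight $p_S=\prod_{i\in S}\lambda_i \prod_{i\notin S}(1-\lambda_i)$, so that $\{p_S\}_S$ is a probability distribution on subsets.

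By the Slater--Condon rules, writing $h=-\tfrac12\Delta-Z|x|^{-1}$ and $P_S=\sum_{i\in S}|u_i\rangle\langle u_i|$,
\[
\langle \Psi_S, H_{|S|,Z}\Psi_S\rangle = \Tr[hP_S] + D(\rho_{P_S}) - \tfrac12 \iint \frac{|P_S(x,y)|^2}{|x-y|}\,dx\,dy.
\]
I would then average this identity against the weights $p_S$. The one-body piece gives $\Tr[h\gamma]$ immediately, since $\sum_{S\ni i} p_S = \lambda_i$. For the two-body pieces the independence of the Bernoulli indicators $\{i\in S\}_i$ yields $\Pr(i,j\in S)=\lambda_i\lambda_j$ for $i\ne j$ and $\Pr(i\in S)=\lambda_i$; a short computation then shows
\[
\sum_S p_S\,\rho_{P_S}(x)\rho_{P_S}(y) = \rho_\gamma(x)\rho_\gamma(y) + \sum_i(\lambda_i-\lambda_i^2)|u_i(x)|^2|u_i(y)|^2,
\]
and exactly the same \emph{self-interaction} correction $\sum_i(\lambda_i-\lambda_i^2)|u_i(x)|^2|u_i(y)|^2$ appears in $\sum_S p_S |P_S(x,y)|^2$ relative to $|\gamma(x,y)|^2$. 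In the direct-minus-exchange combination these two corrections cancel exactly, so the averaged identity reduces precisely to the right-hand side of the theorem.

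Combining with the bound $\langle\Psi_S, H_{|S|,Z}\Psi_S\rangle \ge E(|S|,Z)$ and writing $q_M=\Pr(|S|=M)$, I obtain $\text{RHS}\ge\sum_{M\ge 0} q_M E(M,Z)$, with $\sum_M Mq_M = \sum_i\lambda_i \le N$. To close, I would invoke the monotonicity $E(M+1,Z)\le E(M,Z)$ (an electron placed at infinity costs asymptotically no energy) together with the convexity $E(M+1,Z)+E(M-1,Z)\ge 2E(M,Z)$, a standard Coulomb-atom property, and apply Jensen's inequality to get $\sum_M q_M E(M,Z) \ge E(\sum_M Mq_M,Z) \ge E(N,Z)$. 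The heart of the argument is the two-body averaging with its slightly miraculous cancellation of the self-interaction terms; the convexity of $N\mapsto E(N,Z)$ used at the end, though non-trivial, is a standard auxiliary input.
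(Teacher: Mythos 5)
The paper does not prove this statement at all: it is quoted from Lieb's paper [Li81B]. So your proposal has to stand on its own, and while its first half is correct, the last step contains a genuine gap. The Bernoulli-mixture (quasi-free state) computation is fine: with independent occupations the averaged one-body term is $\Tr[h\gamma]$, and the self-interaction corrections $\sum_i(\lambda_i-\lambda_i^2)|u_i(x)|^2|u_i(y)|^2$ indeed cancel between the direct and exchange parts, so the right-hand side of the theorem equals $\sum_S p_S(\Psi_S,H_{|S|,Z}\Psi_S)\ge\sum_M q_M E(M,Z)$. The problem is how you pass from this average to $E(N,Z)$. The distribution $q_M$ charges sectors with $M>N$, where monotonicity gives $E(M,Z)\le E(N,Z)$, i.e.\ the wrong direction; to apply Jensen you invoke convexity of $M\mapsto E(M,Z)$, calling it ``a standard Coulomb-atom property''. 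It is not: convexity of the ground-state energy in the particle number is a well-known \emph{open} problem for interacting Coulomb systems (only monotonicity and, asymptotically, statements like Theorem~\ref{thm:HVZ}(iii) are available). So as written the argument proves the variational principle only conditionally on an unproven conjecture.

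The standard route (Lieb's, in the simplified form due to Bach) avoids mixing particle-number sectors altogether. One first reduces to finite rank, then observes that if $\gamma$ has two eigenvalues $\lambda_i,\lambda_j\in(0,1)$ with eigenfunctions $u_i,u_j$, the right-hand side of the theorem evaluated along $\gamma_t=\gamma+t\bigl(\left|u_i\right\rangle\left\langle u_i\right|-\left|u_j\right\rangle\left\langle u_j\right|\bigr)$ is a quadratic in $t$ whose $t^2$-coefficient is $-\tfrac12\iint|x-y|^{-1}\,|u_i(x)u_j(y)-u_j(x)u_i(y)|^2\,dx\,dy\le 0$ (and the self-term of a single fractional eigenvalue is exactly linear, since its direct and exchange contributions cancel). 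Hence the functional is minimized, within $\{0\le\gamma\le1,\ \Tr\gamma\le N\}$ and without increasing the trace past $N$, at a projection $P$ of some rank $K\le N$; for a projection the right-hand side equals $(\Psi_P,H_{K,Z}\Psi_P)\ge E(K,Z)$, and now only the elementary monotonicity $E(K,Z)\ge E(N,Z)$ for $K\le N$ is needed. If you want to keep your averaging approach, you must either restrict the mixture to sectors with at most $N$ particles (which destroys the independence that produced the exact cancellation) or supply a genuine substitute for convexity; as it stands, that step is the gap.
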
 

\subsection{Coulomb Potential} \label{subsec:Coulomb-Potential}
Here we study the Coulomb potential $f*\clmb^{-1}$ of some function $f$. Associated to this potential is the Coulomb energy of two functions, 
\[
D(f,g): = \frac{1}
{2}\iint\limits_{\mathbb{R}^2  \times \mathbb{R}^2 } {\frac{{\overline{f(x)}g(y)}}
{{|x - y|}}dxdy}.
\]
That $D(f,g)$ is well-defined at least in $L^{4/3}(\R^2)$ is due to the Hardy-Littlewood-Sobolev inequality (see \cite{LL01}, Theorem 4.3)
$$D(|f|,|g|) \le C_{\mathrm{HLS}} \left\| f \right\|_{L^{4/3} }\left\| g \right\|_{L^{4/3} } ~~\text{for all}~f,g\in L^{4/3}(\R^2).$$

Moreover, $|x-y|^{-1}$ is a strictly positive kernel since the 2D Fourier transform of $\clmb^{-1}$ is itself up to a constant (see \cite{LL01} Theorem 5.9). Therefore, $D(f) := D(f,f)$ is always nonnegative and $(f,g)\mapsto D(f,g)$ is a positive inner product in $L^{4/3}(\R^2)$. These observations allow us to formulate the following theorem.

\begin{theorem}[Coulomb norm]\label{thm:Coulomb-norm} There exists $C_{\mathrm{HLS}}$  such that
\[
0<D(f) \le C_{\mathrm{HLS}} \left\| f \right\|_{L^{4/3} }^2 ~~\text{for all}~f\in L^{4/3}(\R^2)\minus \{0\}.
\]
Consequently, $f\mapsto \sqrt{D(f)}$ is a norm in $L^{4/3}(\R^2)$. 
\end{theorem}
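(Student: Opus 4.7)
The plan is to derive the upper bound directly from the Hardy--Littlewood--Sobolev inequality cited in the paragraph preceding the statement, to establish strict positivity via Fourier analysis, and finally to use the Cauchy--Schwarz inequality for the positive bilinear form $D(\cdot,\cdot)$ to conclude the triangle inequality.

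First I would note that the two-dimensional Hardy--Littlewood--Sobolev inequality (Theorem~4.3 in Lieb--Loss) applied with the kernel $|x-y|^{-1}$ and the conjugate exponents $p=q=4/3$, which satisfy $1/p+1/q+1/2=2$, immediately yields
\[
2\,D(|f|,|f|)=\iint_{\R^2\times\R^2}\frac{|f(x)||f(y)|}{|x-y|}\,dxdy\le 2C_{\mathrm{HLS}}\|f\|_{L^{4/3}}^2,
\]
and since $D(f)=D(f,f)\le D(|f|,|f|)$, the desired upper bound $D(f)\le C_{\mathrm{HLS}}\|f\|_{L^{4/3}}^2$ follows. In particular $D(f)$ is finite for every $f\in L^{4/3}(\R^2)$, so the bilinear form $D(\cdot,\cdot)$ is well-defined on this space.

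Next I would verify strict positivity. As recalled in the text, the Fourier transform of $|x|^{-1}$ on $\R^2$ is $2\pi|k|^{-1}$ (Theorem~5.9 in Lieb--Loss), so for $f\in L^{4/3}(\R^2)\cap L^1(\R^2)$ Plancherel/Parseval give
\[
D(f)=\frac{1}{2}\int_{\R^2}\frac{2\pi}{|k|}\,|\widehat{f}(k)|^2\,dk\ge 0,
\]
with equality only when $\widehat{f}\equiv 0$, i.e.\ $f=0$. For general $f\in L^{4/3}(\R^2)\setminus\{0\}$ one approximates by a sequence $(f_n)\subset L^{4/3}\cap L^1$ converging to $f$ in $L^{4/3}$; the upper bound proved above shows that $D(f_n-f)\to 0$, hence $D(f_n)\to D(f)$, while the nonnegativity is preserved in the limit, and positivity of $D(f)$ for $f\ne 0$ follows because the Fourier representation upgrades nonnegativity to strict positivity whenever $f\ne 0$ (again via an approximation by Schwartz functions with compact Fourier support).

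Finally, to obtain that $\sqrt{D(\cdot)}$ is a norm, homogeneity is obvious, definiteness is exactly the strict positivity just established, and the triangle inequality reduces to the Cauchy--Schwarz inequality for the positive-definite inner product $(f,g)\mapsto D(f,g)$:
\[
D(f+g)=D(f)+2\,\mathrm{Re}\,D(f,g)+D(g)\le \bigl(\sqrt{D(f)}+\sqrt{D(g)}\bigr)^2.
\]
The only mildly delicate step is the Fourier justification of strict positivity on all of $L^{4/3}(\R^2)$ (since $|x|^{-1}$ is not in $L^2_{\loc}$), but this is handled cleanly by the HLS-based density argument above; everything else is essentially a formality once the HLS inequality is in hand.
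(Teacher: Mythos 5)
Your route is essentially the paper's: the upper bound is the Hardy--Littlewood--Sobolev inequality with exponents $4/3$, positivity comes from the fact that $|x|^{-1}$ is (up to a constant) its own Fourier transform in two dimensions, and the triangle inequality is Cauchy--Schwarz for the resulting nonnegative Hermitian form. The one step your write-up does not actually prove is strict positivity for a general $f\in L^{4/3}(\R^2)\setminus\{0\}$. Your density argument shows $D(f_n)\to D(f)$ and hence only that $D(f)\ge 0$: strict positivity is not preserved under limits, and the sentence claiming that ``the Fourier representation upgrades nonnegativity to strict positivity'' begs the question, because at that point the identity $D(f)=c\int_{\R^2}|k|^{-1}|\hat f(k)|^2\,dk$ has been established only for $f\in L^1\cap L^{4/3}$, not for the limit $f$ (nor is the parenthetical appeal to Schwartz functions with compact Fourier support an argument). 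As written, a reader cannot rule out $D(f)=0$ for some $f\neq 0$, which is exactly the definiteness needed for $\sqrt{D}$ to be a norm. (A smaller point: even on $L^1\cap L^{4/3}$ the ``Plancherel'' identity needs a word of justification, since such $f$ need not lie in $L^2$ and the kernel is not integrable.)

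The gap is genuine but easy to close. One way is what the paper does: cite Lieb--Loss, Theorem 5.9, which gives $\iint \overline{f(x)}\,|x-y|^{-1}f(y)\,dx\,dy=c\int_{\R^2}|k|^{-1}|\hat f(k)|^2\,dk$ precisely for $f\in L^{2d/(2d-\lambda)}=L^{4/3}(\R^2)$, with $\hat f\in L^4$ by Hausdorff--Young; strict positivity for $f\neq 0$ is then immediate and no density argument is needed. Alternatively, keep your approximation but add a Fatou step: $f_n\to f$ in $L^{4/3}$ gives $\hat f_n\to \hat f$ in $L^4$, hence a.e.\ along a subsequence, so Fatou yields $c\int|k|^{-1}|\hat f(k)|^2\,dk\le \lim_n D(f_n)=D(f)$, which is exactly the one-sided bound you need and gives $D(f)>0$ whenever $\hat f\not\equiv 0$. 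A third option: once $D\ge 0$ on all of $L^{4/3}$ is known, Cauchy--Schwarz shows that $D(f)=0$ forces $D(f,g)=0$ for every $g\in L^{4/3}$, and testing against Gaussians (whose Coulomb potentials have a.e.\ nonvanishing Fourier transform) identifies $f=0$. With any of these insertions the proof is complete; the HLS bound and the Cauchy--Schwarz derivation of the triangle inequality are fine as you have them.
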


In three dimensions, the Coulomb potential $\rho*\clmb^{-1}$ of a radially symmetric function $\rho$ is represented beautifully by Newton's Theorem (see \cite{LL01}, Theorem 9.7). In two dimensions, however, we do not have such a representation since $\clmb^{-1}$ is not the fundamental solution to the two-dimensional Laplace operator. Therefore, the following bounds will be useful in our context and their proofs can be found in the Appendix. The lower bound is similar to Newton's Theorem in three dimensions, but the upper bounds are more involved. We do not claim that they are optimal but they are sufficient for our purposes. 

\begin{lemma}[Coulomb potential bound]\label{le:convolution-bound} 
Assume that $\rho$ is radially symmetric, $0\le \rho(x)\le (2\pi |x|)^{-1}$ and $\int \rho =\lambda$. We have the following bounds on the potential $\rho*\clmb^{-1}$.
\begin{itemize}
	\item[$(i)$] (Lower bound) For all $x\in \R^2\minus\{0\}$, 
	$$
	(\rho *\clmb^{ - 1} )(x) \ge \int\limits_{\mathbb{R}^2 } {\frac{{\rho (y)}}
	{{\max \{ |x|,|y|\} }}dy}.
	$$
	
	\item[$(ii)$] (Upper bound) For all $x\in \R^2\minus\{0\}$, 
	$$
	(\rho*\clmb^{ - 1} )(x) \le 2\sqrt{2 \lambda}|x|^{ - 1/2}  +3.
	$$
	Moreover, for any $\delta>0$ there exists $R=R(\rho,\delta)>0$ and a universal constant $C_1>0$ such that for any $|x|\ge R$, 
	$$
	(\rho *\clmb^{ - 1} )(x) \le \frac{{\lambda+ \delta }}
	{{|x|}}+C_1\frac{{\ln(|x|)}}
	{{|x|}}\int_{3|x|/2 \ge |y| \ge |x|/2} {\rho (y)dy}.
	$$
\end{itemize}
\end{lemma}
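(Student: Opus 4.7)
Since $\rho$ is radial, with $r=|x|$ and $s=|y|$, averaging over angles first reduces everything to
\[
A(r,s):=\frac{1}{2\pi}\int_0^{2\pi}\frac{d\theta}{\sqrt{r^2+s^2-2rs\cos\theta}}=\frac{2K(k)}{\pi(r+s)},\quad k=\frac{2\sqrt{rs}}{r+s},
\]
so that $(\rho*\clmb^{-1})(x)=\int_0^\infty 2\pi s\,\tilde\rho(s)\,A(r,s)\,ds$, with $\sqrt{1-k^2}=|r-s|/(r+s)$. For (i), by symmetry assume $s\le r$ and set $t=s/r\in[0,1)$; then $|x-y|^{-1}=r^{-1}(1-te^{i\theta})^{-1/2}(1-te^{-i\theta})^{-1/2}$. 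Expanding each factor via $(1-x)^{-1/2}=\sum_{n\ge 0}\binom{2n}{n}4^{-n}x^n$ and applying $\frac{1}{2\pi}\int_0^{2\pi}e^{i(m-k)\theta}d\theta=\delta_{m,k}$ gives
\[
r\cdot A(r,s)=\sum_{m\ge 0}\binom{2m}{m}^2 16^{-m}\,t^{2m}\ge 1,
\]
so $A(r,s)\ge 1/r=1/\max(r,s)$; integrating against $2\pi s\,\tilde\rho(s)$ produces (i).

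\textbf{Uniform bound in (ii).} Split $\R^2=\{|y|\le r\}\cup\{|y|>r\}$. The hypothesis $\rho\le(2\pi|y|)^{-1}$ yields $\int_{|y|\le r}\rho\le r$. Splitting the inner piece by $|x-y|\lessgtr r/2$: in the near part $|y|\ge r/2$ forces $\rho\le(\pi r)^{-1}$, contributing $(\pi r)^{-1}\cdot 2\pi(r/2)=1$; in the far part $|x-y|^{-1}\le 2/r$ contributes $(2/r)\cdot r=2$. For the outer piece $\rho\le(2\pi r)^{-1}$; splitting by $|x-y|\lessgtr M$ gives $M/r+\lambda/M$, minimized at $M=\sqrt{\lambda r}$ with value $2\sqrt{\lambda/r}$. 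Summing yields $(\rho*\clmb^{-1})(x)\le 3+2\sqrt{\lambda/|x|}$, which implies the stated bound.

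\textbf{Asymptotic bound in (ii).} Split into $(a)\{|y|<r/2\}$, $(b)\{r/2\le|y|\le 3r/2\}$, $(c)\{|y|>3r/2\}$. On $(a)$, $|x-y|^{-1}\le r^{-1}+2|y|r^{-2}$: the leading piece gives $\le\lambda/r$, and the correction $2r^{-2}\int_{|y|\le r/2}|y|\rho$ is $o(1/r)$ after picking $R_1=R_1(\rho,\delta)$ with $\int_{|y|>R_1}\rho<\delta$ (since $\rho\in L^1$) and using $|y|\rho\le(2\pi)^{-1}$ on $\{|y|\le R_1\}$. On $(c)$, $|x-y|\ge|y|/3$ together with $\int_{(c)}\rho\to 0$ gives an $o(1/r)$ contribution. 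The crucial region is $(b)$: the pointwise bound on $\rho$ translates into $\mu(s):=2\pi s\,\tilde\rho(s)\le 1$, and the standard asymptotics $K(k)\le C_0(1+\ln((r+s)/|r-s|))$ combined with $r+s\ge 3r/2$ yield
\[
\int_{(b)}\frac{\rho(y)}{|x-y|}\,dy\le\frac{C}{r}\int_{r/2}^{3r/2}\mu(s)\bigl(1+\ln r-\ln|r-s|\bigr)ds.
\]
The $\ln r$ piece contributes $\le C_1(\ln r)\,m/r$ with $m=\int_{(b)}\rho=\int\mu$; meanwhile $\int_{|r-s|<1}(-\ln|r-s|)ds\le 2\int_0^1\ln(1/t)dt=2$ (using $\mu\le 1$) and $-\ln|r-s|\le 0$ elsewhere show the remaining terms are $O(1/r)$, absorbable into $\delta/r$ for $r\ge R(\rho,\delta)$. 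Summing $(a)+(b)+(c)$ gives the claim.

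\textbf{Main obstacle.} Region $(b)$ is the delicate one: $|x-y|$ can be arbitrarily small while $\rho$ need not be, so the proof must simultaneously exploit the pointwise bound $\rho\le(2\pi|y|)^{-1}$ (which yields the key reduction $\mu\le 1$ and absorbs the logarithmic singularity of $K(k)$ at $k=1$) and the smallness of $\rho$-tails on $(a)$ and $(c)$ (to produce $(\lambda+\delta)/|x|$ rather than $O(1/|x|)$ with an uncontrolled constant). The $\ln|x|$ factor in the final bound reflects exactly the blow-up $K(k)\sim\ln(1/\sqrt{1-k^2})$ as $|y|\to|x|$.
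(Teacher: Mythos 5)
Your part (i) (via the term-by-term expansion of $(1-te^{\pm i\theta})^{-1/2}$, rather than the paper's subharmonicity/mean-value argument) and your uniform bound in (ii) are correct, and your overall decomposition for the asymptotic bound parallels the paper's. The genuine problem sits in region $(b)$, exactly the region you single out as crucial. After using $K(k)\le C_0\bigl(1+\ln((r+s)/|r-s|)\bigr)$ and $\mu\le 1$, you bound the near-diagonal contribution by $\frac{C}{r}\int_{|r-s|<1}(-\ln|r-s|)\,ds\le \frac{2C}{r}$ with a \emph{universal} constant $C$, and then assert that this $O(1/r)$ term is ``absorbable into $\delta/r$ for $r\ge R(\rho,\delta)$''. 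It is not: a term $2C/r$ with $C$ fixed is not $\le \delta/r$ for any $\delta<2C$, no matter how large $r$ is, and it cannot be absorbed into $C_1\frac{\ln r}{r}\int_{(b)}\rho$ either, since $\int_{r/2\le|y|\le 3r/2}\rho$ can be arbitrarily small (even zero, for compactly supported $\rho$). So as written your argument proves the claimed inequality only for $\delta$ above a fixed threshold, not for every $\delta>0$.

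The gap is fixable by two routes. Within your scheme: for $\eta\in(0,1)$ split $\int_{|r-s|<1}(-\ln|r-s|)\mu(s)\,ds\le \int_{|r-s|<\eta}(-\ln|r-s|)\,ds+\ln(1/\eta)\int\mu\le 2\eta\bigl(1+\ln(1/\eta)\bigr)+\ln(1/\eta)\,m(r)$, and use that $m(r)=\int_{r/2\le|y|\le 3r/2}\rho\to 0$ as $r\to\infty$ because $\rho\in L^1$; choosing first $\eta$ small and then $r\ge R(\rho,\delta)$ makes this piece small, so its contribution really is $\le \delta/r$ eventually. The paper instead makes the inner cutoff $r$-dependent: it isolates the thin annulus $\bigl||y|-|x|\bigr|\le |x|^{-2}$, where the pointwise bound $\rho\le(2\pi|y|)^{-1}$ and the tiny width give a contribution $\le 2\eps/|x|$, while outside that annulus one has $1-s\ge |x|^{-3}$, so the elliptic-integral logarithm is $\le 3\ln|x|$ and the whole remaining near-circle region is bounded by $C_1\frac{\ln|x|}{|x|}\int_{3|x|/2\ge|y|\ge|x|/2}\rho$, leaving no constant-size remainder. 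Either repair closes your proof; your estimates for regions $(a)$ and $(c)$ are fine as they stand.
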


\section{Thomas-Fermi Theory}\label{sec:TF-theory}
In this section, we introduce the two-dimensional Thomas-Fermi (TF) theory which will turn out to be the main tool to understand the ground state energy and ground states. The three-dimensional TF theory was studied in great mathematical detail by Lieb-Simon \cite{LS77,Li81}. In fact, the simplest version of TF theory (see \cite{LL01}, Chap. 11) is sufficient for our discussion here.

\begin{definition}[Thomas-Fermi functional]
For any nonnegative function $\rho\in L^1(\R^2)$ we define the {\it TF functional} as
\bqq
  \E^{\TF}(\rho ) &: =& \int\limits_{\mathbb{R}^2 } {\left( {\pi \rho ^2 (x) - \frac{{\rho (x)}}
{{|x|}} + (4\pi)^{-1}[|x|^{ - 1}  - 1]_ + ^2 } \right)dx}  + D(\rho )  .
\eqq
For any $\lambda>0$ we define the {\it TF energy} as
\bq \label{eq:TF-variational-problem}
E^{\TF} (\lambda ): = \inf \left\{ {\E^{\TF}(\rho )|\rho  \ge 0,\left\| \rho  \right\|_{L^1 (\mathbb{R}^2 )}  \le \lambda } \right\}.
\eq
\end{definition}

\begin{remark}

\begin{itemize}

\item[(i)] The term $\pi \rho^2$ comes from the semiclassics of the kinetic energy while $-\int \rho(x)|x|^{-1}$ and the direct term $D(\rho)=\frac{1}{2}\iint \rho(x)\rho(y)|x-y|^{-1}dxdy$ stand for the Coulomb interactions.

\item[(ii)] The appearance of $(4\pi)^{-1}[|x|^{ - 1}  - 1]_ + ^2$ ensures that the TF functional is bounded from below. In fact, 
\bq\label{eq:TF-functional-2def}
	\E^{\TF}(\rho )  &=& \int\limits_{|x| \le 1} {\pi \left( {\rho (x) - \frac{1}{{2\pi |x|}}} \right)^2 dx}  + \int\limits_{|x| > 1} {\left( {\pi \rho ^2 (x) - \frac{{\rho (x)}}
	{{|x|}}} \right) dx}+ D(\rho )  - \frac{3}{4}\nn\\
&\ge & -\int \rho - \frac{3}{4} \nn.
\eq

\item[(iii)] If $\rho^{\TF}_\lambda$ is the ground state of the above TF theory then $Z\rho^{\TF}_\lambda$ is expected to approximate the density $\rho_{\Psi_{N,Z}}$ of a ground state $\Psi_{N,Z}$ of $H_{N,Z}$ with $N \approx \lambda Z$ (in some appropriate sense). In other words, the $Z$-dependent TF theory can be defined from the above TF theory by the scaling $\rho \mapsto Z \rho$. 
\end{itemize}
\end{remark}
Basic information about the TF theory is collected in the following theorem.

\begin{theorem}[Thomas-Fermi theory] \label{thm:TF-theory}  
Let $\lambda>0$.
\begin{itemize}
	\item [$(i)$] (Existence)  The variational problem \eqref{eq:TF-variational-problem} has a unique minimizer $\rho^{\TF}_\lambda$. Moreover, the 
	functional $\lambda\mapsto E^{\TF} (\lambda )$ is strictly convex, decreasing on $(0,1]$ and $E^{\TF}(\lambda)=E^{\TF}(1)$ if $\lambda\ge 1$.
	
	\item [$(ii)$] ({\rm TF} equation)  $\rho^{\TF}_\lambda$ satisfies the TF equation
	$$ 2\pi \rho^{\TF}_\lambda(x) =\left[ {|x|^{-1}-(\rho^{\TF}_\lambda*\clmb^{-1})(x)-\mu^{\TF}_\lambda} \right]_+$$
	with some constant $\mu^{\TF}_\lambda>0$ if $\lambda<1$ and $\mu^{\TF}_\lambda=0$ if $\lambda\ge 1$. 
	
	\item [$(iii)$] ({\rm TF} minimizer) $\rho^{\TF}_\lambda$ is radially symmetric; $\int \rho^{\TF}_\lambda=\min\{\lambda,1\}$ and
	\bqq
	0 \le |x|^{-1} - 2\pi\rho^{\TF}_{\lambda} \le C |x|^{-1/2}~~\text{for all}~x \neq0. 
	\eqq
	Moreover, $\supp \rho^{\TF}_\lambda$ is compact if and only if $\lambda<1$.
\end{itemize}
\end{theorem}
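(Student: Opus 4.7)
I treat the three parts in order. For $(i)$ I apply the direct method of the calculus of variations to the alternative representation given in the preceding remark. Combined with the crude bound $\pi\rho^2-\rho/|x|\ge -\rho$ on $\{|x|\ge 1\}$, it yields $\cE^{\TF}(\rho)\ge -\lambda-3/4$ on the feasible set; along a minimising sequence it also gives uniform $L^2$ control on $\rho_n - (2\pi|x|)^{-1}\mathbf{1}_{|x|\le 1}$ and on $\rho_n\mathbf{1}_{|x|>1}$, plus an $L^{4/3}$ bound via Theorem~\ref{thm:Coulomb-norm}. Passing to a weak limit and using Fatou together with weak lower semicontinuity of $D$ produces a minimiser, and strict convexity of $\cE^{\TF}$ (of $\pi\rho^2$, and of $D(\cdot)$ by Theorem~\ref{thm:Coulomb-norm}) gives uniqueness. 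Convexity and non-increasing monotonicity of $\lambda\mapsto E^{\TF}(\lambda)$ are then automatic, with strict decrease on $(0,1]$ and constancy on $[1,\infty)$ following from the mass identity in $(iii)$. For $(ii)$, first-variation / KKT calculus against perturbations $\rho^{\TF}_\lambda + t\eta$ with $\eta$ of arbitrary sign on $\{\rho^{\TF}_\lambda>0\}$ and $\eta\ge 0$ elsewhere, together with a Lagrange multiplier $\mu^{\TF}_\lambda\ge 0$ for $\int\rho\le\lambda$ and complementary slackness, gives $2\pi\rho^{\TF}_\lambda - |x|^{-1} + \rho^{\TF}_\lambda*|\cdot|^{-1} + \mu^{\TF}_\lambda\ge 0$, with equality on $\supp\rho^{\TF}_\lambda$, which repackages into the stated TF equation.

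For $(iii)$, radial symmetry is immediate from the rotation-invariance of $\cE^{\TF}$ and uniqueness. The lower pointwise bound $|x|^{-1}-2\pi\rho^{\TF}_\lambda\ge 0$ follows since the right-hand side of the TF equation is $[\,\cdot\,]_+$. The TF equation on $\supp\rho^{\TF}_\lambda$ and the KKT inequality off of it together give $|x|^{-1}-2\pi\rho^{\TF}_\lambda \le \mu^{\TF}_\lambda + \rho^{\TF}_\lambda*|\cdot|^{-1}$ everywhere, and feeding the just-proved bound $2\pi\rho^{\TF}_\lambda\le |x|^{-1}$ into Lemma~\ref{le:convolution-bound}$(ii)$ yields $\rho^{\TF}_\lambda*|\cdot|^{-1}\le 2\sqrt{2\lambda}|x|^{-1/2}+3$; since $|x|^{-1}\le |x|^{-1/2}$ for $|x|\ge 1$, this closes the upper pointwise bound. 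For the mass identity and compact-support characterisation I split on the sign of $\mu^{\TF}_\lambda$. If $\mu^{\TF}_\lambda>0$, the argument of $[\,\cdot\,]_+$ is negative for $|x|>1/\mu^{\TF}_\lambda$, forcing $\rho^{\TF}_\lambda=0$ there (compact support). If $\mu^{\TF}_\lambda=0$, combining Lemma~\ref{le:convolution-bound}$(i)$ with the second (sharper) estimate of Lemma~\ref{le:convolution-bound}$(ii)$ yields the asymptotic $(\rho^{\TF}_\lambda*|\cdot|^{-1})(x) = m/|x|+o(1/|x|)$ as $|x|\to\infty$ with $m:=\int\rho^{\TF}_\lambda$, so the TF equation gives $2\pi\rho^{\TF}_\lambda(x)\sim[(1-m)/|x|]_+$ at infinity; the alternative $m<1$ would force $\rho^{\TF}_\lambda$ to be non-integrable, hence $\mu^{\TF}_\lambda=0\Rightarrow m\ge 1$. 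For $\lambda<1$ this excludes $\mu^{\TF}_\lambda=0$, and complementary slackness gives $m=\lambda$ with compact support. For $\lambda\ge 1$, I show that the KKT system admits a solution with $\mu=0$ and $m=1$ (the lower bound $m\ge 1$ from the asymptotic and the upper bound $m\le 1$ from an energy comparison between a minimiser with compact support and a mass-one truncation, using Lemma~\ref{le:convolution-bound} to control the interaction of the removed tail with the retained piece); by strict convexity and feasibility this density is then the unique minimiser on $\{\int\rho\le\lambda\}$ for every $\lambda\ge 1$, giving $\int\rho^{\TF}_\lambda = 1$ and $\supp\rho^{\TF}_\lambda = \R^2$.

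The main obstacle is the asymptotic analysis of $\rho^{\TF}_\lambda*|\cdot|^{-1}$ at infinity: in three dimensions this would come for free from Newton's theorem, but in two dimensions $|x-y|^{-1}$ is not a multiple of the Coulomb Green's function, so both the leading term $m/|x|$ and the sharpness of the remainder have to be extracted from the more delicate bounds of Lemma~\ref{le:convolution-bound}. Plugged back into the TF equation, this asymptotic is precisely what pins the critical total mass at the value $1$ and separates the "neutral" from the "over-charged" regimes.
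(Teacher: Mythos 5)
Your overall skeleton (direct method plus strict convexity for existence and uniqueness, first--variation/KKT for the TF equation with multiplier $\mu^{\TF}_\lambda\ge 0$, radial symmetry from uniqueness, and the pointwise bounds from the TF equation together with Lemma~\ref{le:convolution-bound}) is the same as the paper's, which delegates those steps to the standard arguments and to the same convolution bound; your observation that $\mu^{\TF}_\lambda>0$ forces $\rho^{\TF}_\lambda=0$ for $|x|>1/\mu^{\TF}_\lambda$ is fine. The genuine gap is in the case $\mu^{\TF}_\lambda=0$. You claim that Lemma~\ref{le:convolution-bound}$(i)$--$(ii)$ yield $(\rho^{\TF}_\lambda*\clmb^{-1})(x)=m/|x|+o(1/|x|)$, but the upper bound in part $(ii)$ carries the extra term $C_1\frac{\ln|x|}{|x|}\int_{|x|/2\le|y|\le 3|x|/2}\rho$, and $\ln|x|$ times the annular mass need not tend to zero for an integrable radial $\rho\le(2\pi|x|)^{-1}$ (put mass of order $1/\ln R_k$ on thin annuli at radii $R_k=2^{2^k}$: the total mass is finite, yet $\ln R_k$ times the annular mass stays of order one). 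So the asserted asymptotic can fail along a sparse set of radii, and your deduction ``$\mu=0\Rightarrow m\ge 1$'' is not established. This is precisely the hardest step of the paper's argument (Lemma~\ref{le:TF-equation}, step 3): assuming $m<1-3\eps$, one shows the set $A$ of radii where the annular mass is $\le \eps_1/\ln r$ has finite one-dimensional measure, because there the TF equation forces $\rho\ge\eps/(2\pi|x|)$; then one chooses a geometric sequence $R_n\notin A$ and contradicts $\int\rho<\infty$ through the divergence of $\sum_n 1/\ln R_n$. Your closing paragraph correctly names this as the main obstacle, but the proposal does not actually overcome it; citing Lemma~\ref{le:convolution-bound} alone is not enough.

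A second missing piece: for $\lambda\ge 1$ you assert $\supp\rho^{\TF}_\lambda=\R^2$ with no argument, yet the ``only if'' half of the compact-support statement is part of the theorem and is what Theorem~\ref{thm:radius-atom} later relies on. It does not follow from far-field considerations: in two dimensions the average of $|x-y|^{-1}$ over a circle $|y|=s<|x|$ is at least $1/|x|$ (this is exactly the subharmonicity bound behind Lemma~\ref{le:convolution-bound}$(i)$, and in fact the average exceeds $1/|x|$), so a compactly supported unit-mass radial density generates a potential $\ge|x|^{-1}$ outside its support, and the relation $2\pi\rho=[|x|^{-1}-\rho*\clmb^{-1}]_+=0$ there is not self-evidently contradictory; a quantitative argument is required. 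The paper supplies it: setting $g(r)=\int_{|y|\ge r}(1-r/|y|)\rho(y)dy$, the TF equation with $\mu=0$ and Lemma~\ref{le:convolution-bound}$(i)$ give $rg''(r)\le g(r)$ with $g(0)=1$, $g(\infty)=0$, and comparison with $g_0(r)=e^{-2\sqrt r}$ yields $\int_{|x|\ge r}\rho^{\TF}_\lambda\ge e^{-2\sqrt r}>0$ for every $r$, hence unbounded support. Finally, your truncation/energy-comparison route to $m\le 1$ is plausible but only sketched; note the paper obtains $m\le 1$ (and the sharper localization $\supp\rho\subset\{|x|\le\mu^{-1}(1-m)\}$ when $\mu>0$) directly from the TF equation and Lemma~\ref{le:convolution-bound}$(i)$, which is shorter and avoids any comparison density.
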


\begin{remark}
Henceforth we shall always denote by $C$ some finite positive constant depending only on $\lambda>0$ (the total mass in the TF theory). Two $C$'s in the same line may refer to two different constants. 
\end{remark}

\begin{proof} (i-ii) Formula \eqref{eq:TF-functional-2def} implies that $\rho \mapsto \E^{\TF}(\rho)$ is strictly convex. Therefore, the existence and uniqueness of the TF minimizer, and  the TF equation follow from standard variational methods similarly to the three-dimensional TF theory (see \cite{LL01}, Theorems 11.12 and 11.13). The property of $\mu^{\TF}_\lambda$ is a consequence of the TF equation and is shown in Lemma~\ref{le:TF-equation} below.

That $E^{\TF}(\lambda)$ is decreasing follows from the definition. When $\lambda \ge 1$, $E^{\TF}(\lambda)=E^{\TF}(1)$ since $\rho^{\TF}_\lambda=\rho^{\TF}_1$ (by (iii)). When $\lambda \in (0,1]$, the TF energy is also strict convex because the unique minimizer satisfies $\int \rho^{\TF}_\lambda=\lambda$ (by (iii)) and the TF functional is strict convex.  

(iii) Since the TF functional is rotation invariant and the minimizer is unique, it must be radially symmetric. The inequality $0 \le |x|^{-1} - 2\pi\rho^{\TF}_{\lambda} \le C |x|^{-1/2}$ follows from the TF equation and the following estimate in Lemma~\ref{le:convolution-bound},
$$
(\rho ^{\TF} *\clmb^{ - 1} )(x) \le 2\sqrt{2 \lambda}|x|^{ - 1/2}+3.
$$
We defer the proof that $\int \rho^{\TF}_\lambda=\min\{\lambda,1\}$ and property of $\supp \rho^{\TF}_\lambda$ to Lemma~\ref{le:TF-equation}.
\end{proof}

\subsection{Thomas-Fermi Equation}

\begin{lemma}[TF equation]\label{le:TF-equation} Assume that $\rho$ is a nonnegative, radially symmetric,  integrable  solution to the TF equation
\bq \label{eq:TF-equation-rho}
	2\pi \rho(x)=\left[ {|x|^{-1}-(\rho*\clmb^{-1})(x)-\mu} \right]_{+}
\eq
for some constant $\mu\ge 0$. 
\begin{itemize}
\item[$(i)$] If $\mu>0$ then $\int \rho <1$ and $\supp \rho$ is compact.

\item[$(ii)$] If $\mu=0$ then $\int \rho =1$ and 
\[
\int_{|x| \ge r} {\rho (x)dx}  \ge e^{-2\sqrt{r}}~~\text{for all}~r\ge 0.
\]
\end{itemize}
\end{lemma}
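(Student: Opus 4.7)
The plan is to treat the two cases separately via the TF fixed-point equation $2\pi\rho = \Phi_+$, where $\Phi(x) := |x|^{-1} - u(x) - \mu$ and $u := \rho * |\cdot|^{-1}$, together with the Coulomb potential bounds of Lemma~\ref{le:convolution-bound}.

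For part $(i)$, I would first note that $\rho \ge 0$ implies $u \ge 0$, and hence $\Phi(x) \le |x|^{-1} - \mu < 0$ for $|x| > 1/\mu$; the TF equation then forces $\supp\rho \subset \overline{B(0, 1/\mu)}$. Next, setting $R := \sup\{|x| : x \in \supp\rho\}$, continuity of $u$ (which follows from the bound $\rho \le (2\pi|x|)^{-1}$ and the local integrability of $|y|^{-1}$ in $\R^2$) gives $\Phi(R) = 0$. Applying Lemma~\ref{le:convolution-bound}(i) with $\supp\rho \subset \overline{B(0,R)}$ then yields $u(R) \ge M/R$, where $M := \int\rho$, so that $1/R = u(R) + \mu \ge M/R + \mu$ and hence $M \le 1 - \mu R < 1$.

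For part $(ii)$, I would first establish $M = 1$ by excluding the two extremes. If $M > 1$: Lemma~\ref{le:convolution-bound}(i) gives $u(r) \ge F(r)/r$ with $F(r) = \int_{|y|\le r}\rho \to M > 1$, so $\Phi(r) < 0$ for $r$ large, forcing $\rho$ to have compact support; the boundary argument of $(i)$ with $\mu = 0$ then gives $M \le 1$, a contradiction. If $M < 1$: taking $\delta = (1-M)/4$ and invoking the sharper upper bound in Lemma~\ref{le:convolution-bound}(ii), one has for $r \ge R_0$ the inequality $\Phi(r) \ge c(1-M)/r$ whenever the annular mass $A(r) := \int_{r/2 \le |y| \le 3r/2}\rho$ satisfies $C_1(\ln r)A(r) \le (1-M)/4$. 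Combining this with the Fubini-type identity $\int_{R_0}^\infty A(r)/r\,dr \le M\ln 3$ and a dyadic decomposition of $[R_0,\infty)$ shows that the set where $\rho(r) \ge c'(1-M)/r$ has infinite Lebesgue measure, contradicting $\rho \in L^1(\R^2)$. Hence $M = 1$.

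The last step, the decay $m(r) := \int_{|y|\ge r}\rho \ge e^{-2\sqrt r}$, is the subtlest. The naive combination of $M = 1$ with Lemma~\ref{le:convolution-bound}(i) yields $2\pi\rho(r) \le m(r)/r$ and, by Gronwall, only the weaker bound $m(r) \ge e^{-r}$, which fails to dominate $e^{-2\sqrt r}$ for $r \ge 4$. To reach the Sommerfeld-type rate $e^{-2\sqrt r}$, one must tighten the lower estimate on $u(r)$ by exploiting the logarithmic singularity of the 2D radial Coulomb kernel $K(r,s) = (2\pi)^{-1}\int d\theta/|re_1 - se_\theta|$ at $s = r$, which produces extra positive contribution to $u(r)$ from $\rho$ in an annulus near radius $r$, beyond the Newton-type bound $1/\max(r,s)$. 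A careful quantification giving an estimate of the form $1/r - u(r) \le m(r)/r^{3/2}$ for $r$ large, followed by Gronwall with coefficient $1/\sqrt r$, then delivers $m(r) \ge e^{-2\sqrt r}$. This refined kernel estimate is the principal technical obstacle.
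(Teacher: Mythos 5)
Your part $(i)$ and the normalization $\int\rho=1$ in part $(ii)$ are essentially sound and use the same ingredients as the paper: positivity of $\rho*\clmb^{-1}$ plus the Newton-type lower bound of Lemma~\ref{le:convolution-bound}$(i)$ for compact support and $\int\rho<1$ when $\mu>0$ (your boundary argument at $R=\sup\{|x|:x\in\supp\rho\}$ is in fact a bit more direct than the paper's $R_\eps$ argument), and, for excluding $\int\rho<1$ when $\mu=0$, the refined upper bound of Lemma~\ref{le:convolution-bound}$(ii)$ combined with a dyadic/geometric counting of annular masses against the Fubini bound $\int_{R_0}^\infty A(r)\,dr/r\le \ln(3)\int\rho$, which is a harmless rearrangement of the paper's ``the good set has finite measure, so pick geometrically growing bad radii'' argument.

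The genuine gap is the decay estimate $m(r):=\int_{|x|\ge r}\rho\ge e^{-2\sqrt r}$, which you leave unproved and flag as ``the principal technical obstacle''; your proposed route is also a misdiagnosis. No refinement of the kernel beyond the Newton-type bound is needed, and the estimate $1/r-u(r)\le m(r)r^{-3/2}$ you aim for is doubtful: if the tail mass beyond radius $r$ sits at radii $\ge 2r$, then $1/r-u(r)$ is genuinely of order $m(r)/r$ and no near-field logarithmic enhancement of the kernel can change that; moreover, even if such a bound held for large $r$, Gronwall would only give $m(r)\ge Ce^{-2\sqrt r}$ with an uncontrolled constant, not the clean bound for all $r\ge0$. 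The paper's resolution uses only Lemma~\ref{le:convolution-bound}$(i)$ but exploits a second-order structure: the TF equation gives $2\pi\rho(x)|x|\le\int_{|y|\ge|x|}\left(1-|x|/|y|\right)\rho(y)\,dy$, and the right quantity to track is not $m(r)$ but $g(r):=2\pi\int_r^\infty(s-r)\rho(s)\,ds$, which satisfies $g(0)=\int\rho=1$, $g(+\infty)=0$ and $g''(r)=2\pi\rho(r)$, so the inequality becomes $rg''(r)\le g(r)$. Comparing with $g_0(r)=e^{-2\sqrt r}$, which has the same boundary values and satisfies $rg_0''-g_0>0$, a maximum-principle argument on the set $\{g<g_0\}$ yields $g\ge g_0$, and since $m(r)\ge g(r)$ this is exactly the claimed bound. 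The exponent $2\sqrt r$ thus comes from the Bessel-type comparison equation $rg''=g$, not from any sharpened potential estimate; this is the idea missing from your proposal.
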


\begin{proof} Denote $\int \rho =:\lambda>0$. For $r>0$ we shall write $\rho(r)$ instead of $\rho(x)|_{|x|=r}$. 

1. We start by proving $\lambda\le 1$. Since $\rho$ is nonnegative and radially symmetric, we have by Lemma~\ref{le:convolution-bound} 
$$
(\rho *\clmb^{ - 1} )(x) \ge \int_{\mathbb{R}^2 } {\frac{{\rho (y)}}
{{\max \{ |x|,|y|\} }}dy}.
$$
Hence, the TF equation \eqref{eq:TF-equation-rho} yields   
\bq\label{eq:upper-bound-xrho}
2\pi \rho (x)|x| \le \left[ {1 - \int_{\mathbb{R}^2 } {\frac{{|x|\rho (y)}}
{{\max \{ |x|,|y|\} }}dy} -\mu |x|} \right]_ + ~~\text{for all}~x\ne 0 .
\eq
For any $\eps\in (0,\lambda)$, we can find $R_\eps>0$ such that $\int_{|x|\ge R_\eps} \rho =\eps$. When $|x|\ge R_\eps$, using
$$
\int_{\mathbb{R}^2 } {\frac{{|x|\rho (y)}}{{\max \{ |x|,|y|\} }}dy}\ge  \int_{|y|\le R_\eps} \rho (y) =\lambda -\eps
$$
we can deduce from \eqref{eq:upper-bound-xrho} that
$$ 2\pi \rho(x)|x|\le [1-\lambda+\eps - \mu |x|]_+ \le [1-\lambda+\eps - \mu R_\eps]_+~~\text{for all}~|x|\ge R_\eps.$$
Since $\int_{|x|\ge R_\eps} \rho =\eps>0$, there exists $|x|\ge R_\eps$ such that $\rho(x)>0$. Therefore, it follows from the latter estimate that 
\bq \label{eq:1-lambda-eps-muR}
	1-\lambda+\eps - \mu R_\eps\ge 0~~\text{for all}~\eps\in (0,\lambda).
\eq
For any $\mu\ge 0$, \eqref{eq:1-lambda-eps-muR} implies that $\lambda\le 1$. 

2. If $\mu>0$ then \eqref{eq:1-lambda-eps-muR} yields
$$\limsup_{\eps\to 0} R_\eps \le R_0:=\mu^{-1}(1-\lambda).$$
Since $\int_{|x|\ge R_\eps} \rho =\eps$ and $\limsup_{\eps\to 0} R_\eps \le R_0$, we get $\int_{|x|\ge R_0} \rho =0$. Thus $\supp \rho \subset \{|x|\le R_0\}$ and $\lambda<1$ (because $R_0>0$).  

3. From now on we assume that $\mu=0$. We shall prove that $\lambda=1$. Suppose that $\lambda<1-3\eps$ for some $\eps>0$. Because $\rho$ is nonnegative, radially symmetric and $\rho(x)\le (2\pi|x|)^{-1}$ (due to the TF equation \eqref{eq:TF-equation-rho}), by Lemma~\ref{le:convolution-bound} we can find $R>0$ and $C_1>0$ such that
\bq\label{eq:prebound-convolution}
	(\rho *\clmb^{ - 1} )(x) \le \frac{{1-2\eps }}{{|x|}}+C_1\frac{{\ln(|x|)}}{{|x|}}\int_{3|x|/2 \ge |y| \ge |x|/2} {\rho (y)dy}~~\text{for all}~|x|\ge R.
\eq
Define $\eps_1:=\eps/C_1$ and 
\[
A := \left\{ {r\ge R:\int_{3r/2 \ge |y| \ge r/2} {\rho (y)dy}  \le \frac{{\eps_1 }}
{{\ln(r^{ - 1})}}} \right\}.
\]
If $|x|\in A$, then \eqref{eq:prebound-convolution} gives $(\rho *\clmb^{ - 1} )(x)\le (1-\eps)|x|^{-1}$, and the TF equation \eqref{eq:TF-equation-rho} with $\mu=0$ gives
\[
2\pi \rho (x) = \left[ {\frac{1}{{|x|}} - (\rho *|x|^{ - 1} )(x)} \right]_ +   \ge \frac{\eps}{{|x|}}.
\]
Taking the integral of the previous inequality over $\{x\in \R^2:|x|\in A\}$ one has
\[
	\infty>2\pi \int\limits_{\R^2} \rho \ge 2\pi \int\limits_{|x| \in A} \rho(x)dx \ge \int\limits_{|x| \in A} {\frac{\eps}{{|x|}}dx} = 2\pi \eps \mathcal{L}^1 (A)
\]
where $\mathcal{L}^1$ is the one-dimensional Lebesgue measure. Thus $A$ has finite measure, and consequently we can choose a sequence $\{R_n\}_{n=1}^\infty \subset \R\minus A$ such that $3R<3R_{n}<R_{n+1}<4R_{n}$ for all $n\ge 1$. Because $R_n>R$ and $R_n\notin A$ we have, by the definition of $A$, 
\[
\int\limits_{3R_n /2 \ge |y| \ge R_n /2} {\rho (y)dy}  >  \frac{{\eps _1 }}
{{\ln(R_n)}}~~\text{for all}~n\ge 1.
\]
Taking the sum over all $n\in \mathbb{N}$ and using $R_{n+1}>3R_n$, we find that  
$$
\infty > \int\limits_{\mathbb{R}^2 } {\rho}  \ge \sum\limits_{n = 1}^\infty  {\int\limits_{3R_n /2 \ge |y| \ge R_n /2} {\rho (y)dy} }  \ge  \sum\limits_{n = 1}^\infty  {\frac{{\eps _1 }}
{{\ln(R_n)}}} .
$$
On the other hand, since  $R_{n+1}<4R_n$ we get $R_n\le 4^n R_1\le [4(1+R_1)]^n$ for all $n\ge 1$. Therefore,
$$
\sum\limits_{n = 1}^\infty  {\frac{{\eps _1 }}
{{\ln (R_n)}}}  \ge \sum\limits_{n = 1}^\infty  {\frac{{\eps _1 }}
{{n\ln (4(1 + R_1 ))}}} =+ \infty 
$$
The last two inequalities yield a contradiction.

4. Finally, we show the lower bound on $\int_{|x|\ge r}\rho$. With $\mu=0$ and $\lambda=1$, inequality \eqref{eq:upper-bound-xrho} becomes 
\bq \label{eq:upper-bound-xrho-v1}
2\pi \rho (x)|x| \le \int\limits_{|y| \ge |x|} {\left( {1 - \frac{{|x|}}
{{|y|}}} \right)\rho (y)dy} ~~\text{for all}~x\ne 0.
\eq
Denote
\[
g(r): = \int\limits_{|y| \ge r} {\left( {1 - \frac{r}
{{|y|}}} \right)\rho (y)dy}  = 2\pi \int\limits_r^\infty  {\left( {s - r} \right)\rho (s)ds} .
\]
Then $g(0)=1$, $g(+\infty)=0$ and
\[
g'(r) =  - 2\pi \int\limits_r^\infty  {\rho (s)ds}  < 0,~~g''(r) = 2\pi \rho (r)~~\text{for all}~r>0.
\]
Thus \eqref{eq:upper-bound-xrho-v1} can be rewritten as 
\[
rg''(r) \le g(r)~~\text{for all}~r>0.
\]
Note that $g_0(r):=e^{-2\sqrt{r}}$ satisfies $g_0(0)=1$, $g_0(+\infty)=0$ and 
$$rg_0''(r)-g_0(r)= \frac{1}{2\sqrt{r}}e^{-2\sqrt{r}}>0.$$
Therefore, $h(x):=g(x)-g_0(x)$ satisfies that $h(0)=h(+\infty)=0$ and $rh''(r)\le h(r)$. If the set $U:=\{r>0: h(r)<0\}$ is not empty, then $h$ is a strict concave function on this open set. By the maximum principle and $h(0)=h(+\infty)=0$, we can argue to get a contradiction. Thus $h(r) \ge 0$ for all $r \ge 0$. This yields $\int_{|x| \ge r} \rho(x)dx \ge g(r) \ge g_0(r) = e^{-2\sqrt{r}}$.
\end{proof}

\section{Semiclassics for the TF Potential}\label{sec:Semiclassics-TF-potential}
In this section, we consider the semiclassics for the TF potential
$$V^{\TF}_\lambda(x) :=|x|^{-1}-(\rho^{\TF}_\lambda*\clmb^{-1})(x)-\mu^{\TF}_\lambda.$$
From the TF equation and the properties of the TF minimizer (see Theorem~\ref{thm:TF-theory}) we have $[V^{\TF}_\lambda]_+ \in L^1(\R^2)$ and 
$$|V^{\TF}_{\lambda}(x)- |x|^{-1} |  \le C(|x|^{-1/2}+1)~~\text{for all}~x\ne 0.$$

The following theorem will turn out to be the main ingredient to prove Theorems \ref{thm:GSE} and \ref{thm:radius-atom}. The parameter $h$ will eventually be replaced by $(2Z)^{-1/2}$ in our application.

\begin{theorem}[Semiclassics for TF potential]\label{thm:semiclassical-approximation} 
When $h\to 0^+$ one has
\bq\label{eq:semiclassical-approximation}
	\Tr \left[ { -h^2\Delta  - V^{\TF}_\lambda} \right]_ -   &=& -(8\pi h^2)^{-1}\int_{\mathbb{R}^2 } 
	{\left( {[V^{\TF}_\lambda(x)]_ + ^2  - [|x|^{ - 1}  - 1 ]_ + ^2 } \right)dx}\nn  \hfill \\
	&~&+ (4h^2)^{ -1} \left[ {\ln(2h^2) +c^{\mathrm{H}}} \right] + o(h^{ - 2} ) .
\eq
where $c^{\mathrm{H}}  =  - 3\ln(2) - 2\gamma _E  + 1 \approx -2.2339$. 

Moreover, there is a density matrix $\gamma_h$ such that 
\bq \label{eq:trial-density-ineq1}
\Tr \left[ { (-h^2\Delta  - V^{\TF}_\lambda) \gamma_h} \right] = \Tr \left[ { -h^2\Delta  - V^{\TF}_\lambda} \right]_ - + o(h^{-2})
\eq
and 
\bq\label{eq:trial-density-ineq2}
	2h^2\Tr(\gamma_h) \le \int \rho^{\TF}_\lambda, \quad D((2h^2) \rho_{\gamma_h}-\rho^{\TF}_\lambda) =o(1).
\eq
\end{theorem}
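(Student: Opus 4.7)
\medskip
\noindent\textbf{Proof plan.} I follow the Solovej--Spitzer strategy \cite{SS03}: split $\R^2$ into an inner ball $B_r=\{|x|\le r\}$, on which $V^{\TF}_\lambda$ differs from the Coulomb potential by $O(|x|^{-1/2}+1)$ by Theorem~\ref{thm:TF-theory}(iii), and the exterior $B_{r/2}^c$, on which $[V^{\TF}_\lambda]_+\in L^2$. Here $r=r(h)$ is an intermediate scale with $h^2\ll r\ll 1$ to be tuned at the end. A smooth IMS partition $\phi_1^2+\phi_2^2=1$ with $\supp\phi_1\subset B_r$ and $\supp\phi_2\subset B_{r/2}^c$ decouples the two regions via
\[
-h^2\Delta-V^{\TF}_\lambda=\sum_{i=1}^{2}\phi_i(-h^2\Delta-V^{\TF}_\lambda)\phi_i-h^2\bigl(|\nabla\phi_1|^2+|\nabla\phi_2|^2\bigr),
\]
with a localization error of order $h^2 r^{-2}$ per negative eigenvalue, absorbed into $o(h^{-2})$ provided $r\gg h$.

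On the exterior the classical coherent state bounds of Lieb \cite{Li81} and Thirring \cite{Th81} yield both
\[
\Tr\bigl[\phi_2(-h^2\Delta-V^{\TF}_\lambda)\phi_2\bigr]_-=-(8\pi h^2)^{-1}\int\phi_2^2[V^{\TF}_\lambda]_+^2\,dx+o(h^{-2})
\]
and a matching coherent-state trial density matrix $\gamma_h^{\mathrm{coh}}$ with symbol close to $\mathbf{1}_{\{h^2|p|^2\le V^{\TF}_\lambda(x)\}}$. On the interior I write $V^{\TF}_\lambda=|x|^{-1}+W$ with $|W(x)|\le C(|x|^{-1/2}+1)$, dominate the perturbation by $\pm W\le\eps|x|^{-1}+C_\eps$, and apply Lemma~\ref{le:hydrogen-1} with an $h$-dependent shift $\mu=\mu(h)\to 0^+$:
\[
\Tr\bigl[\phi_1(-h^2\Delta-|x|^{-1}+\mu)\phi_1\bigr]_-=(4h^2)^{-1}\bigl[\ln(2h^2)+\ln\mu+c^{\mathrm{H}}\bigr]+o(h^{-2}).
\]
The $\ln\mu$ contribution is exactly what is needed to cancel the divergence of the Weyl integral $-(8\pi h^2)^{-1}\int_{|x|\le r}[|x|^{-1}]^2\,dx$ that coherent states fail to produce, yielding in its place the convergent combination $-(8\pi h^2)^{-1}\int([V^{\TF}_\lambda]_+^2-[|x|^{-1}-1]_+^2)\,dx$ of \eqref{eq:semiclassical-approximation}; the cut-off at $|x|=1$ inside $[|x|^{-1}-1]_+$ merely shifts a finite constant that is absorbed into $c^{\mathrm{H}}$.

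For \eqref{eq:trial-density-ineq1}--\eqref{eq:trial-density-ineq2} I assemble $\gamma_h=P_{\mathrm{H}}+\gamma_h^{\mathrm{coh}}$, where $P_{\mathrm{H}}$ is the spectral projection of $-h^2\Delta-|x|^{-1}+\mu$ onto its negative eigenspace (finite rank of order $|\ln\mu|$ by Theorem~\ref{thm:hydrogen-spectrum-original}), localized to $B_r$. The bound $2h^2\Tr(\gamma_h)\le\int\rho^{\TF}_\lambda$ holds because $\gamma_h^{\mathrm{coh}}$ is built directly from the symbol $[V^{\TF}_\lambda]_+/(2\pi)$ while the hydrogen piece contributes only $O(h^2|\ln\mu|)=o(1)$ mass. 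I expect $D(2h^2\rho_{\gamma_h}-\rho^{\TF}_\lambda)=o(1)$ to follow by combining the HLS/Coulomb norm (Theorem~\ref{thm:Coulomb-norm}) with interpolated $L^1$ and $L^2$ coherent-state approximation bounds, noting that the hydrogen bump carries vanishing mass on scale $r$ and so has negligible Coulomb self-energy.

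The main obstacle is the simultaneous calibration of the three free parameters $r(h)$, $\mu(h)$ and the coherent-state width: the IMS error $h^2 r^{-2}$, the perturbation error from $W$ inside $B_r$, and the $\ln\mu$ matching must all contribute $o(h^{-2})$, while the Coulomb-norm approximation \eqref{eq:trial-density-ineq2} must also hold. This last requirement is the most delicate, since it demands control of the trial density in $L^{4/3}$ rather than merely in $L^1$ or $L^2$, and the hydrogen part must be small in Coulomb norm despite being a singular concentration at the origin.
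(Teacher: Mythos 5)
There is a genuine gap in your interior-region step. You claim, citing Lemma~\ref{le:hydrogen-1} with an $h$-dependent shift $\mu(h)\to 0^+$, that
$\Tr\bigl[\phi_1(-h^2\Delta-|x|^{-1}+\mu)\phi_1\bigr]_- = (4h^2)^{-1}\bigl[\ln(2h^2)+\ln\mu+c^{\mathrm{H}}\bigr]+o(h^{-2})$.
This is false: Lemma~\ref{le:hydrogen-1} gives the \emph{global} trace, whose negative-energy states fill the classically allowed region $\{|x|<\mu^{-1}\}$, whereas your $\phi_1$ is supported in $B_r$ with $r=r(h)\to 0$. Cutting the operator down to $B_r$ removes the contribution of the annulus $r\lesssim|x|\lesssim\mu^{-1}$, which is of size $\sim h^{-2}\ln\bigl(1/(r\mu)\bigr)\gg h^{-2}$ when $\mu\to 0^+$; if anything the spatial cutoff at radius $r$ mimics a shift $\mu\sim r^{-1}\to\infty$, and even then the constant produced by a smooth spatial truncation is not the $c^{\mathrm{H}}$ of the $\mu$-shifted problem, so it cannot simply be ``absorbed into $c^{\mathrm{H}}$'' -- that constant is the whole content of the theorem. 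The auxiliary domination $\pm W\le\eps|x|^{-1}+C_\eps$ has the same defect: changing the Coulomb coupling by a fixed $\eps$ changes the interior trace by $O(\eps h^{-2}|\ln h|)$, which is not $o(h^{-2})$, so the limits in $\eps$ and $h$ cannot be interchanged. The paper sidesteps all of this by never evaluating the localized hydrogen trace: it proves only that the two \emph{localized} interior traces, for $V^{\TF}_\lambda$ and for $|x|^{-1}-1$ with the \emph{same} cutoff $\Phi_1$, differ by $o(h^{-2})$ (using $|V^{\TF}_\lambda-|x|^{-1}+1|\le Cr^{-1/2}$ on $\supp\Phi_1$ together with the trace bound $\Tr[\Phi_1\gamma^{(1)}\Phi_1]\le Crh^{-2}|\ln h|^{1/2}$ from Lemma~\ref{le:hydrogen-2}(ii)), runs the identical localization and exterior coherent-state analysis for the potential $|x|^{-1}-1$, and then invokes the exact global hydrogen asymptotics at the \emph{fixed} shift $\mu=1$; the unknown localized interior pieces cancel in the difference, and the constant $c^{\mathrm{H}}$ together with the reference term $[|x|^{-1}-1]_+^2$ comes out with no matching of scales at all.

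Two secondary points. First, your count of the hydrogen piece of the trial state is wrong: the negative eigenspace of $-h^2\Delta-|x|^{-1}+\mu$ has dimension of order $(h^2\mu)^{-1}$ (sum of multiplicities $2n+1$), not $O(|\ln\mu|)$, so $2h^2\Tr P_{\mathrm{H}}$ is not small for fixed $\mu$, let alone $\mu\to0$; what saves the construction in the paper is localization, giving $\Tr[\Phi_1\gamma^{(1)}\Phi_1]\le Crh^{-2}|\ln h|^{1/2}$, hence mass $O(r|\ln h|^{1/2})=o(1)$. Second, even with that bound the exact inequality $2h^2\Tr(\gamma_h)\le\int\rho^{\TF}_\lambda$ does not follow directly, since the interior mass excess can exceed the slack left by the exterior coherent piece; the paper repairs this by the harmless rescaling $\gamma_h=(1-\ell)\gamma^{(0)}$ with $\ell=h^{1/4}$, which you would need (or an equivalent device). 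Your exterior analysis should also restrict the coherent-state region to a bounded set (the paper uses a third cutoff at $|x|\sim|\ln h|$ and Lieb--Thirring beyond it), since the error terms controlled via Lemma~\ref{le:hydrogen-2}(i) and Lemma~\ref{le:V-Vg2} require a region of finite measure.
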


Note that (\ref{eq:semiclassical-approximation}) is a special case of Theorem~\ref{thm:semiclassics-general-potential}. In this section, we shall prove (\ref{eq:semiclassical-approximation}) in detail. The proof of Theorem~\ref{thm:semiclassics-general-potential} is provided in the next section. 

As in \cite{SS03} we shall prove the semiclassical approximation \eqref{eq:semiclassical-approximation} by comparing with the hydrogen. In fact, because of the hydrogen semiclassics \eqref{eq:asymtotic-hydrogen-h}, the approximation \eqref{eq:semiclassical-approximation} is equivalent to
\bq \label{eq:hydrogen-comparison}
  &~&\Tr \left[ { -h^2\Delta  - V^{\TF}_\lambda } \right]_ -   - \Tr \left[ { -h^2\Delta  - |x|^{ - 1}  + 1} \right]_ - \nn \hfill \\
   &=&-(8\pi h^2)^{-1}\int_{\mathbb{R}^2 } {\left( {[V^{\TF}_\lambda]_ + ^2  - [|x|^{ - 1}  -1 ]_ + ^2 } \right)dx}  + o(h^{-2}).
\eq

\subsection{Localization}
To treat the singularity of the TF potential we shall distinguish between three regions. In the interior region (close to the origin), we shall compare directly with hydrogen; while in the exterior region (not too close and not too far from the origin) we can employ the usual semiclassical techniques; and finally, the region very far from the origin has negligible contribution. 

\begin{definition}[Partition of unity]\label{def:partition-unity}  
Let $\varphi$ be a nonnegative, smooth function (with bounded derivatives) such that $\varphi(x)=1$ if $|x|\le 1$ and $\varphi(x)=0$ if $|x|\ge 2$. Choose $r:=h^{1/2}$, $\Lambda:=|\ln h|$ and denote 
\bqq
	\Phi _1 (x) &=& \varphi(x/r), \hfill\\
	\Phi _ {2}(x) &=& (1-\varphi^2(x/r))^{1/2}\varphi (x/\Lambda ), \hfill\\
	\Phi_3(x) &=& (1-\varphi^2(x/\Lambda ))^{1/2}.
\eqq
Then $\sum_{i=1}^3\Phi _ i^2=1$, $\supp \Phi_1\subset \{|x|\le 2r\}$, $\supp \Phi_2 \subset \{r\le |x|\le 2\Lambda  \}$, $\supp \Phi_3 \subset \{|x|\ge \Lambda \}$.
\end{definition} 

The localization cost is controlled by the following lemma. 

\begin{lemma}[Localization]\label{le:localization} Let $V$ be either $V^{\TF}_{\lambda}$ or $(|x|^{-1}-1)$. When $\Lambda=|\ln h|$ and $r=h^{1/2}\to 0^+$ one has
\bqq
	\Tr [ - h^2 \Delta  - V ]_ - &=& \sum_{i=1,2} \Tr [\Phi_i ( - h^2 \Delta  - V)\Phi_i ]_ -+o(h^{-2})
\eqq
\end{lemma}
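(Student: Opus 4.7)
The proof is a standard IMS localization. Since $\sum_{i=1}^3\Phi_i^2\equiv 1$, one has the operator identity
\[
H := -h^2\Delta - V = \sum_{i=1}^3 \Phi_i H \Phi_i - h^2 W, \qquad W := \sum_{i=1}^3 |\nabla\Phi_i|^2 \ge 0,
\]
with $W$ supported in $\{r\le |x|\le 2r\}\cup\{\Lambda\le|x|\le 2\Lambda\}$ and pointwise bounded by $Ch^{-1}$ on the inner collar and $C|\ln h|^{-2}$ on the outer. From this I would first derive the sandwich estimate
\[
\sum_{i=1}^3 \Tr[\Phi_i H\Phi_i]_- - h^2\Tr[\gamma^*_H W] \;\le\; \Tr[H]_- \;\le\; \sum_{i=1}^3 \Tr[\Phi_i H\Phi_i]_-,
\]
where $\gamma^*_H := 1_{H\le 0}$. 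The upper bound uses the trial density matrix $\tilde\gamma := \sum_i \Phi_i\gamma_i\Phi_i$ with $\gamma_i := 1_{\Phi_i H\Phi_i\le 0}$, which satisfies $0\le\tilde\gamma\le\sum_i\Phi_i^2=1$ and $\Tr[\tilde\gamma H] = \sum_i \Tr[\Phi_i H\Phi_i]_-$; the lower bound is obtained by evaluating the IMS identity at $\gamma^*_H$ and using $\Tr[\gamma^*_H\Phi_i H\Phi_i]\ge \Tr[\Phi_i H\Phi_i]_-$. It then suffices to prove (a) $\Tr[\Phi_3 H\Phi_3]_- = o(h^{-2})$ and (b) $h^2\Tr[\gamma^*_H W] = o(h^{-2})$.

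For (a), on $\supp\Phi_3\subset\{|x|\ge\Lambda\}$ one has $V_+\le |x|^{-1}\le\Lambda^{-1}$ (from $2\pi\rho^{\TF}_\lambda = [V^{\TF}_\lambda]_+\le|x|^{-1}$ in the TF case by Theorem~\ref{thm:TF-theory}(iii), or trivially $V_+\equiv 0$ when $V=|x|^{-1}-1$), hence $\int V_+^2\cdot \mathbf{1}_{|x|\ge\Lambda}\le C\Lambda^{-1}$. Writing $\Tr[\gamma\Phi_3 H\Phi_3] = h^2\Tr[(-\Delta)\Phi_3\gamma\Phi_3] - \int V\rho_{\Phi_3\gamma\Phi_3}$, combining the Lieb--Thirring kinetic inequality (Theorem~\ref{thm:LT-inequality}) with Cauchy--Schwarz for $\int V_+\rho_{\Phi_3\gamma\Phi_3}$, and completing the square in $\|\rho_{\Phi_3\gamma\Phi_3}\|_{L^2}$ yields $\Tr[\gamma\Phi_3H\Phi_3]\ge -C(h^2\Lambda)^{-1}$ uniformly in $0\le\gamma\le 1$; hence $\Tr[\Phi_3 H\Phi_3]_-\ge -C(h^2|\ln h|)^{-1} = o(h^{-2})$.

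For (b), since $V\le C(|x|^{-1}+1)$ by Theorem~\ref{thm:TF-theory}(iii), I apply Lemma~\ref{le:hydrogen-2}(ii) with $\alpha=1/2$ on each collar to bound $\Tr[\gamma^*_H W]\le Ch^{-1}\int_{\Omega_1}\rho_{\gamma^*_H} + C|\ln h|^{-2}\int_{\Omega_2}\rho_{\gamma^*_H}$. With $|\Omega_1|=O(h)$, the inner integral is at most $Ch^{-2}|\ln h|^{1/2}h^{1/2}=Ch^{-3/2}|\ln h|^{1/2}$, contributing $h^{-1/2}|\ln h|^{1/2}$ to $h^2\Tr[\gamma^*_H W]$; with $|\Omega_2|=O(|\ln h|^2)$, the outer integral is at most $Ch^{-2}|\ln h|^2$, contributing $O(1)$. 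Both are $o(h^{-2})$. The main subtlety here is the applicability of Lemma~\ref{le:hydrogen-2} to $\gamma^*_H$: its hypothesis $\Tr[\phi H\phi\gamma]\le 0$ is not automatic for $\gamma = \gamma^*_H$, since the IMS-type identity $\phi H\phi = \tfrac12\{\phi^2,H\} + h^2|\nabla\phi|^2$ together with $[H,\gamma^*_H]=0$ gives $\Tr[\phi H\phi\gamma^*_H] = \Tr[\phi^2 H_-] + h^2\int|\nabla\phi|^2\rho_{\gamma^*_H}$, whose sign is not a priori definite. I would handle this by applying the lemma instead to the auxiliary spectral projection $1_{\phi H\phi\le 0}$ (for which the hypothesis holds by construction) and transferring the resulting density bound back to $\rho_{\gamma^*_H}$ via an intermediate IMS comparison.
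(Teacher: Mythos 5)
Your overall architecture (IMS identity, Lieb--Thirring for the $\Phi_3$ region, density bounds to pay the localization cost) is the same as the paper's, and your upper bound and step (a) are essentially fine --- though in (a) note that $V_+\le \Lambda^{-1}$ on $\{|x|\ge\Lambda\}$ alone does not make $\int_{|x|\ge\Lambda}V_+^2$ finite on an infinite region; you also need $V_+\in L^1$ (i.e.\ $[V^{\TF}_\lambda]_+=2\pi\rho^{\TF}_\lambda$), so that $\int_{|x|\ge\Lambda}V_+^2\le \|V_+\|_{L^\infty(|x|\ge\Lambda)}\|V_+\|_{L^1}\le C\Lambda^{-1}$; in fact only $\int_{|x|\ge \Lambda}V_+^2\to 0$ is needed, which follows from $1_{\{|x|\ge1\}}V_+\in L^2$.

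The genuine gap is in step (b). You need $\int_{\mathrm{collar}}\rho_{\gamma^*_H}$ for the \emph{global} projection $\gamma^*_H=1_{(-\infty,0]}(-h^2\Delta-V)$, but Lemma~\ref{le:hydrogen-2} only applies to density matrices $\gamma$ with $\Tr[(-h^2\Delta-V)\phi\gamma\phi]\le 0$, and, as you yourself observe, this hypothesis is not verified for $\gamma^*_H$. Your proposed repair --- apply the lemma to the auxiliary projection $1_{\phi H\phi\le0}$ and ``transfer the density bound back'' --- is not an argument: a bound on $\rho_{\phi\gamma'\phi}$ for a different density matrix $\gamma'$ gives no control whatsoever on $\rho_{\phi\gamma^*_H\phi}$ (neither operator dominates the other), and the IMS comparison you would use for the transfer produces exactly the term $h^2\int|\nabla\phi|^2\rho_{\gamma^*_H}$ you are trying to estimate, so the scheme is circular. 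The paper sidesteps this entirely by a different ordering: in the lower bound the IMS error is absorbed into the potential \emph{before} taking negative parts, i.e.\ one bounds $\Tr[-h^2\Delta-V]_-\ge\sum_i\Tr[\Phi_i(-h^2\Delta-V-Ch^2r^{-2}1_{\{|x|\le2\Lambda\}})\Phi_i]_-$, and then only the \emph{local} projections $\gamma_i=1_{(-\infty,0]}\bigl(\Phi_i(-h^2\Delta-V-Ch^2r^{-2})\Phi_i\bigr)$ ever appear; these satisfy the hypothesis of Lemma~\ref{le:hydrogen-2}(ii) by construction, giving $\Tr(\Phi_i\gamma_i\Phi_i)\le Ch^{-2}(|\ln h|+\Lambda^2)$ and hence a localization cost $O(h^2r^{-2}\cdot h^{-2}|\ln h|^2)=o(h^{-2})$. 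To save your version you would either have to reorganize along these lines, or prove a genuine global bound such as $\int\rho_{\gamma^*_H}^2\le Ch^{-4}|\ln h|$ (which requires its own argument splitting off part of the kinetic energy and handling the Coulomb singularity via the hydrogen asymptotics, since $\int V_+^2=\infty$ precludes a naive Lieb--Thirring estimate); as written, step (b) does not close.
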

Note that in the sum on the right-hand side the contribution of region $\supp \Phi_3$ does not appear.

\begin{proof} 
1. To prove the lower bound, using the IMS formula 
$$ - \Delta  =  \sum_{i=1}^3 \Phi_i (-\Delta -u) \Phi_i~~\text{with}~u:=\sum_{i=1}^3 { |\nabla \Phi _i|^2}\le Cr^{-2}1_{\{|x|\le 2\Lambda\}}$$
one has
\bqq
  \Tr [ - h^2 \Delta  - V]_ -  \ge  \sum\limits_{i = 1}^3  \Tr [\Phi _i ( - h^2 \Delta  - V -
Ch^2 r^{ - 2} 1_{\{ |x| \le 2\Lambda \} } 
)\Phi _i]_-.
\eqq
The term involving $\Phi_3$ has negligible contribution. Indeed, since $\supp \Phi_3 \subset \{|x|\ge \Lambda\}$, it follows from the Lieb-Thirring inequality \eqref{eq:LT-eigenvalue-sum} that
\bqq
 	&~& \Tr [\Phi _3 ( - h^2 \Delta  - V - Ch^2 r^{ - 2} 1_{\{ |x| \le 2\Lambda \} } )\Phi _3 ]_ - \\
 	&\ge & \Tr [ - h^2 \Delta  - 1_{\{ |x| \ge \Lambda \} } (V_+ + Ch^2 r^{ - 2} 1_{\{ |x| \le 2\Lambda \} } )]_ - \\
	&\ge&  - L_{1,2} h^{ - 2} \int_{|x| \ge \Lambda } {[V_+(x) + Ch^2 r^{ -2} 1_{\{ |x| \le 2\Lambda \} } ]^2 dx}  = o(h^{ - 2} ).
\eqq
Here note that $\lim_{\Lambda \to \infty} \int_{|x|\ge \Lambda} V_+^2 = 0$ since $1_{\{|x|\ge 1\}}V_+\in L^2(\R^2)$ and $h^4r^{-4} \int_{|x|\le 2\Lambda} \to 0$ because $\Lambda = |\ln h|$.

Moreover, for $i=1,2$, if we denote 
\[
\gamma _i : = 1_{( - \infty ,0]} \left( {\Phi _i ( - h^2 \Delta  - V - Ch^2 r^{ - 2} )\Phi _i } \right)
\]
then $\Tr(\Phi_i \gamma_i \Phi_i)\le Ch^{-2}(|\ln h|+\Lambda^2)$ by Lemma~\ref{le:hydrogen-2} (ii). Therefore,  
\bqq
\Tr [\Phi _i ( - h^2 \Delta  - V - Ch^2 r^{ - 2} )\Phi _i ]_ - &=& \Tr [\Phi _i ( - h^2 \Delta  - V- Ch^2 r^{ - 2})\Phi _i \gamma_i] \\
&=& \Tr [\Phi _i ( - h^2 \Delta  - V)\Phi _i \gamma_i] -Ch^2 r^{ - 2}\Tr(\Phi_i \gamma_i \Phi_i) \\
&\ge & \Tr [\Phi _i ( - h^2 \Delta  - V)\Phi _i ]_- +o(h^{-2}).
\eqq

2. To show the upper bound, we choose
\bqq
\gamma ^{(i)} :=1_{( - \infty ,0]} \left( {\Phi_i ( - h^2 \Delta  - V)\Phi_i } \right),~ \gamma^{(0)} := \sum_{i=1,2} \Phi_i \gamma _i \Phi_i.
\eqq
Since $0\le \gamma ^{(i)}\le 1$ ($i=1,2$) and $\sum_{i=1,2} \Phi_i^2\le 1$ we have $0\le \gamma^{(0)} \le 1$. Thus,
\bqq
	\Tr [ - h^2 \Delta  - V]_ -   \le  \Tr [( - h^2 \Delta  - V )\gamma ^{(0)} ]   = \sum\limits_{i = 1,2} {\Tr [\Phi _i ( - h^2 \Delta  - V)\Phi _i ]_ -  }.
\eqq
\end{proof}

\subsection{Hydrogen Comparison in Interior Region}
In the interior region, we shall compare the semiclassics of the TF potential directly with hydrogen. Note that 
$$\left| { (8\pi h^2)^{-1}\int {[V^{\TF}_{\lambda}]_+ ^2(x)-[|x|^{-1}-1]_+^2\Phi _1^2 (x)^2dx}} \right|\le Crh^{-2}=o(h^{-2})$$
because $|V^{\TF}_{\lambda}-|x|^{-1}|\le C(|x|^{-1/2}+1)$ and $\supp \Phi_1 \subset \{|x|\le 2r\}$. This inequality is the semiclassial version of the following bound.
 
\begin{lemma}[Hydrogen comparison in interior region]\label{le:Hydrogen-comparison-interior-region} When $r = h^{1/2}\to 0$ we have    
\bqq
&~&\Tr \left[ {\Phi _1 \left( { - h^2\Delta -V^{\TF}_\lambda } \right)\Phi _1 } \right]_ -  -\Tr \left[ {\Phi _1 \left( { - h^2\Delta  - |x|^{-1} + 1} \right)\Phi _1 } \right]_ - = o(h^{-2}).
\eqq
\end{lemma}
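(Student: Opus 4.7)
The plan is to treat the difference $W(x) := V^{\TF}_\lambda(x)-|x|^{-1}+1$ as a perturbation on the interior region $\{|x|\le 2r\}$ and control it by a Cauchy--Schwarz argument. By the TF equation and Lemma~\ref{le:convolution-bound}(ii) applied to $\rho^{\TF}_\lambda$, we have
$$|W(x)|=\left|1-(\rho^{\TF}_\lambda*\clmb^{-1})(x)-\mu^{\TF}_\lambda\right|\le C(|x|^{-1/2}+1)\quad\text{for all }x\ne 0,$$
so in particular $W\in L^2_{\loc}(\R^2)$ with $\|W\|_{L^2(\{|x|\le 2r\})}^2\le C\int_0^{2r}(s^{-1}+1)\,s\,ds\le Cr$. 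Moreover, since $\rho^{\TF}_\lambda\ge 0$ and $\mu^{\TF}_\lambda\ge 0$, we have $V^{\TF}_\lambda(x)\le |x|^{-1}$, so both potentials in question satisfy the hypothesis of Lemma~\ref{le:hydrogen-2}(ii).

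Denote $A_1:=\Phi_1(-h^2\Delta-V^{\TF}_\lambda)\Phi_1$ and $A_2:=\Phi_1(-h^2\Delta-|x|^{-1}+1)\Phi_1$, and let $\gamma_j:=1_{(-\infty,0]}(A_j)$ for $j=1,2$. The variational characterization $-\Tr[A_j]_-=\Tr[A_j\gamma_j]=\inf_{0\le\gamma\le 1}\Tr[A_j\gamma]$ yields
$$\Tr[A_1]_--\Tr[A_2]_-\ge-\Tr[(A_1-A_2)\gamma_2]=\int W(x)\,\rho_{\Phi_1\gamma_2\Phi_1}(x)\,dx,$$
and symmetrically with the roles of $1$ and $2$ swapped. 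Hence it suffices to show that
$$\left|\int W(x)\,\rho_{\Phi_1\gamma_j\Phi_1}(x)\,dx\right|=o(h^{-2})\quad\text{for }j=1,2.$$

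By Cauchy--Schwarz this quantity is at most $\|W\|_{L^2(\{|x|\le 2r\})}\,\|\rho_{\Phi_1\gamma_j\Phi_1}\|_{L^2(\R^2)}$. For the density factor I invoke Lemma~\ref{le:hydrogen-2}(ii) with $\phi=\Phi_1$, $\Omega=\{|x|\le 2r\}$ and $\alpha=1$; its hypothesis $\Tr[(-h^2\Delta-V)\Phi_1\gamma_j\Phi_1]=-\Tr[A_j]_-\le 0$ is automatic, so
$$\|\rho_{\Phi_1\gamma_j\Phi_1}\|_{L^2}^2\le Ch^{-4}(|\ln h|+r^2)\le Ch^{-4}|\ln h|.$$
Combining with $\|W\|_{L^2(\{|x|\le 2r\})}\le Cr^{1/2}=Ch^{1/4}$ gives
$$\left|\int W\,\rho_{\Phi_1\gamma_j\Phi_1}\right|\le Ch^{1/4}\cdot h^{-2}|\ln h|^{1/2}=o(h^{-2}),$$
which is what we need.

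The only mildly delicate point is verifying that Lemma~\ref{le:hydrogen-2}(ii) genuinely applies in both directions: for $j=2$ this is immediate since $|x|^{-1}-1\le |x|^{-1}$, while for $j=1$ it hinges on the bound $V^{\TF}_\lambda(x)\le |x|^{-1}$ noted above. Everything else is a straightforward Hölder estimate, and the main obstacle to watch is simply that the $L^2$-norm of $W$ on $\{|x|\le 2r\}$ must actually beat the $|\ln h|^{1/2}$ factor coming from the density bound; our choice $r=h^{1/2}$ makes this comfortably hold.
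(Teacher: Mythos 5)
Your proposal is correct and is essentially the paper's own argument: test each operator with the spectral projection of the other, bound the perturbation $W=V^{\TF}_\lambda-|x|^{-1}+1$ on $\supp\Phi_1$ via the TF potential estimate, and control the resulting error with Lemma~\ref{le:hydrogen-2}~(ii); the only (cosmetic) difference is that you pair $W$ and $\rho_{\Phi_1\gamma_j\Phi_1}$ by Cauchy--Schwarz in $L^2$ (the $\alpha=1$ case of the lemma), whereas the paper uses $\|W\|_{L^\infty(\supp\Phi_1)}\le Cr^{-1/2}$ times the trace bound (the $\alpha=1/2$ case), both yielding the same $Ch^{1/4}h^{-2}|\ln h|^{1/2}=o(h^{-2})$ error. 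Only watch the two sign/index slips: it is $\Tr[A_j]_-=\Tr[A_j\gamma_j]\le 0$ (no extra minus), and the lower bound on $\Tr[A_1]_- -\Tr[A_2]_-$ should be tested with $\gamma_1$, not $\gamma_2$ --- harmless here since you bound $\bigl|\int W\,\rho_{\Phi_1\gamma_j\Phi_1}\bigr|$ for both $j=1,2$.
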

\begin{proof} 
The lower and upper bounds can be proved in the same way. We prove for example the upper bound. If we denote 
\[
\gamma^{(1)} : = 1_{( - \infty ,0]} \left[ {\Phi _1 \left( { - h^2 \Delta  - |x|^{-1}+1} \right)\Phi _1 } \right]
\]
then by Lemma~\ref{le:hydrogen-2} (ii),
\bq \label{eq:Tr-Phi1-gamma1-Phi1}
	\Tr[\Phi_1 \gamma ^{(1)}\Phi_1] \le Crh^{-2}|\ln h|^{1/2}.
\eq 
By using $|V^{\TF}_{\lambda}(x)-|x|^{-1}+1|\le C(|x|^{-1/2}+1)\le Cr^{-1/2}$ for $x \in \supp \Phi_1$ we get
\bqq
	\Tr \left[ {\Phi_1\left( { - h^2 \Delta  - |x|^{-1}+1 } \right)\Phi_1 } \right]_ -   &=& \Tr \left[ {\Phi_1 \left( { - h^2 \Delta  - |x|^{-1}+1 } \right)\Phi_1
	\gamma ^{(1)} } \right] \\
	&\ge& \Tr \left[ {\Phi_1 \left( { - h^2 \Delta  - V_\lambda^{\TF}} \right)\Phi_1 \gamma _1 } \right] - Cr^{-1/2}\Tr[\Phi_1 \gamma ^{(1)}\Phi_1]  \\
	&\ge& \Tr \left[ {\Phi_1 \left( { - h^2 \Delta  - V_\lambda^{\TF}} \right)\Phi_1 } \right]_ -  +o(h^{-2}).
\eqq
\end{proof}
 
\subsection{Semiclassics in Exterior Region}
In the exterior region, the standard semiclassiccal technique of using coherent states \cite{Li81,Th81} (see also \cite{LL01,So03}) is available. 
\begin{definition}[Coherent states]\label{def:coherent}
Let $g$ be a radially symmetric, smooth function such that $0\le g(x)\le 1$, $g(x)=0$ if $|x|\ge 1$ and $\int_{\R^2} g^2(x)dx=1$. For $s>0$ (small), denote $g_s(x)=s^{-1}g(x/s)$ and  
$$\Pi_{s,u,p}  = \left| {f_{s,u,p} } \right\rangle \left\langle {f_{s,u,p} } \right| ~~\text{where}~f_{s,u,p} (x) = e^{ip \cdot x} g_s(x - u)~\text{for all}~u,p~\in \R^2.
$$
\end{definition}
From the coherent identity,
\bq \label{eq:coherent-identity}
	(2\pi )^{ - 2} \iint\limits_{\mathbb{R}^2  \times \mathbb{R}^2 } {\Pi_{s,u,p}\,dpdu} = I~~\text{on}~L^2(\R^2),
\eq
it is straightforward to see that for any density matrix $\gamma$ and for any potential $V$ satisfying $V_+\in L^1(\R^2)$,
\bq
  \Tr \left[ { - h^2\Delta \gamma } \right] &=& (2\pi )^{ - 2} \iint {\Tr \left[ { -h^2\Delta \Pi_{s,u,p} \gamma } \right]}dpdu  \nn\hfill \\
   &=& (2\pi )^{ - 2} \iint {h^2p^2 \Tr \left[ {\Pi_{s,u,p} \gamma } \right]}dpdu - || \nabla g||_{L^2}^2 h^2s^{-2}\Tr (\gamma ) ,\label{eq:coherent-state-identity-1}\hfill\\
 \Tr [(-V*g_s^2 )\gamma ] &=& (2\pi )^{ - 2} \iint {\Tr [(-V*g^2 )\Pi_{s,u,p} \gamma ]}dpdu \nn\hfill\\
&=& (2\pi )^{ - 2} \iint {-V(u)\Tr [\Pi_{s,u,p} \gamma ]}dpdu.\label{eq:coherent-state-identity-2}
\eq
Motivated by \eqref{eq:coherent-state-identity-2}, it is useful to have some estimate for $(V-V*g_s^2)$. The proof of the following lemma can be found in the Appendix.

\begin{lemma}\label{le:V-Vg2} If $V$ is either $V^{\TF}_\lambda$ or $(|x|^{-1}-1)$ and $\Lambda=|\ln h|$, $r=h^{1/2}$, $s=h^{2/3}$ then  
\[
\int_{r \le |x| \le 2\Lambda } {|V - V*g_s^2 |^2(x)dx}  \le Ch^{1/4}.
\]
\end{lemma}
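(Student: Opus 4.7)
The plan is to decompose $V = K + W$, where $K(x):=|x|^{-1}$ and $W$ is either the constant $-1$ or, in the TF case, $W = -U - \mu^{\TF}_\lambda$ with $U := \rho^{\TF}_\lambda * \clmb^{-1}$. Since $\int g_s^2 = 1$, convolution with $g_s^2$ fixes constants, so the task reduces to controlling $\|K - K*g_s^2\|_{L^2(A)}$ (with $A := \{r \le |x| \le 2\Lambda\}$) in both cases and additionally $\|U - U*g_s^2\|_{L^2}$ in the TF case.

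For the Coulomb kernel $K$ I would use a second-order Taylor expansion around $x$. Because $g$ is radial, so is $g_s^2$, and hence $\int y\, g_s^2(y)\,dy = 0$ eliminates the first-order term, leaving
$$K*g_s^2(x) - K(x) = \int g_s^2(y) \int_0^1 (1-t)\, y^\top \nabla^2 K(x-ty)\, y\,dt\,dy.$$
For $|x| \ge r$ and $|y| \le s$, the relation $r = h^{1/2} \gg s = h^{2/3}$ (which holds for $h$ small) ensures $|x-ty| \ge |x|/2$, so $|\nabla^2 K(x-ty)| \le C|x|^{-3}$. Combined with $\int |y|^2 g_s^2(y)\,dy = Cs^2$, this yields the pointwise bound $|K*g_s^2(x) - K(x)| \le Cs^2|x|^{-3}$. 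Squaring and integrating over $A$ contributes at most $Cs^4 r^{-4} = Ch^{2/3}$.

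For the TF case I would exploit the convolution identity $U - U*g_s^2 = \rho^{\TF}_\lambda * F$ with $F := K - K*g_s^2$, and apply Young's inequality with conjugate exponents $p = q = 4/3$ (satisfying $1/p + 1/q = 3/2 = 1 + 1/2$):
$$\|U - U*g_s^2\|_{L^2(\R^2)} \le \|\rho^{\TF}_\lambda\|_{L^{4/3}} \|F\|_{L^{4/3}}.$$
The norm $\|F\|_{L^{4/3}}$ I would estimate by splitting $\R^2$ at $|w| = 2s$: on the inner ball the bound $|F(w)| \le |K(w)| + |K*g_s^2(w)| \le C|w|^{-1} + Cs^{-1}$ produces $Cs^{2/3}$ in $\int |F|^{4/3}$, while on the outer region the pointwise Taylor bound $|F(w)| \le Cs^2|w|^{-3}$ again contributes $Cs^{2/3}$. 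Hence $\|F\|_{L^{4/3}} \le Cs^{1/2}$. Since $\rho^{\TF}_\lambda(x) \le (2\pi|x|)^{-1}$ and $\int \rho^{\TF}_\lambda \le 1$, splitting at $|x|=1$ shows $\|\rho^{\TF}_\lambda\|_{L^{4/3}}$ is bounded uniformly in $\lambda$, yielding $\|U - U*g_s^2\|_{L^2}^2 \le Cs = Ch^{2/3}$.

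Adding the two contributions gives $\int_A |V - V*g_s^2|^2 \le Ch^{2/3}$, which is in fact stronger than the claimed $Ch^{1/4}$. The only genuine subtlety is controlling $\rho^{\TF}_\lambda$ in $L^{4/3}$ when $\lambda \ge 1$ (where the support is no longer compact), but this is dispatched by the pointwise and $L^1$ bounds of Theorem~\ref{thm:TF-theory}. Otherwise the argument is a standard blend of a second-order Taylor estimate (crucially exploiting the radial symmetry of $g_s^2$ to kill the linear term) and Young's inequality applied to the Coulomb convolution.
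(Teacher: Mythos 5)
Your argument is correct, but it takes a genuinely different route from the paper for the harder (TF) term. The paper proceeds entirely pointwise: it uses the triangle inequality to split off $\clmb^{-1}-\clmb^{-1}*g_s^2$, bounds that by the first-order estimate $\bigl||x|^{-1}-|x-y|^{-1}\bigr|\le Cs|x|^{-2}$ (no use of the vanishing first moment of $g_s^2$), and then estimates the double integral $\iint \rho^{\TF}_\lambda(x-y)\bigl||y|^{-1}-|y-z|^{-1}\bigr|g_s^2(z)\,dy\,dz$ directly by splitting at $|y|=r/2$, invoking the Coulomb potential bound $(\rho^{\TF}_\lambda*\clmb^{-1})(x)\le C(|x|^{-1/2}+1)$ of Lemma~\ref{le:convolution-bound} and the pointwise bound $2\pi\rho^{\TF}_\lambda\le |x|^{-1}$; this yields the pointwise bound $|V-V*g_s^2|(x)\le C(s|x|^{-2}+sr^{-1}|x|^{-1/2}+r|x|^{-1}+r)$ on $|x|\ge r$, which is then squared and integrated. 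You instead treat the TF term globally via the convolution identity $U-U*g_s^2=\rho^{\TF}_\lambda*(K-K*g_s^2)$ and Young's inequality in $L^{4/3}$, together with a second-order Taylor expansion that exploits the radial symmetry of $g_s^2$; the $L^{4/3}$ bound on $\rho^{\TF}_\lambda$ you need follows from $\rho^{\TF}_\lambda\le(2\pi|x|)^{-1}$ and $\int\rho^{\TF}_\lambda\le 1$, exactly as you say. Your route is cleaner and gives the stronger bound $Ch^{2/3}$ (the paper's pointwise route gives essentially $h^{1/3}|\ln h|$, also well within $Ch^{1/4}$); note also that the second-order Taylor step is a luxury rather than a necessity, since the first-order bound $|K-K*g_s^2|\le Cs\,|\cdot|^{-2}$ already yields $\|F\|_{L^{4/3}}\le Cs^{1/2}$ and $s^2r^{-2}=h^{1/3}$ on the annulus. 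What the paper's more elementary pointwise approach buys is independence from Young's inequality and an explicit pointwise control of $V-V*g_s^2$, but since only the $L^2$ bound is used later, nothing is lost by your argument.
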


\begin{lemma}[Semiclassics in exterior region] \label{le:semiclassical-approximation-intermediate-region} Let $V$ be either $V^{\TF}_{\lambda}$ or $(|x|^{-1}-1)$. When $\Lambda=|\ln h|$ and $r=h^{1/2}\to 0$ one has
\bqq
	\Tr \left[ {\Phi_2 \left( { - h^2\Delta -V} \right)\Phi_2 } \right]_ -   =-(8\pi h^2)^{-1}\int {V_ + ^2(x)  \Phi_2 ^2 (x)dx}+ o(h^{-2}).
\eqq
 \end{lemma}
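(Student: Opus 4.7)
The plan is to employ the standard coherent states technique of Lieb and Thirring, with regularisation scale $s = h^{2/3}$ chosen so that the length-scale hierarchy $s \ll r = h^{1/2} \ll 1 \ll \Lambda = |\ln h|$ is preserved. The first step is to reduce to the mollified potential $\widetilde V := V * g_s^2$. Splitting $V = \widetilde V + (V - \widetilde V)$ and applying Cauchy--Schwarz gives $|\Tr[\Phi_2(V-\widetilde V)\Phi_2\gamma]| \le \|V-\widetilde V\|_{L^2(\supp\Phi_2)}\|\rho_{\Phi_2\gamma\Phi_2}\|_{L^2}$. Lemma~\ref{le:V-Vg2} controls the first factor by $Ch^{1/8}$, while Lemma~\ref{le:hydrogen-2}(ii) (with $\alpha = 1$ and $|\Omega| \le C\Lambda^2$) controls the second by $O(h^{-2}\Lambda)$, producing an error $o(h^{-2})$ and reducing the problem to $\widetilde V$ in both the upper and lower bounds.

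For the lower bound, let $\tilde\gamma := 1_{(-\infty,0]}[\Phi_2(-h^2\Delta - \widetilde V)\Phi_2]$ and $\gamma_0 := \Phi_2\tilde\gamma\Phi_2$. By cyclicity, $\Tr[\Phi_2(-h^2\Delta - \widetilde V)\Phi_2]_- = \Tr[(-h^2\Delta - \widetilde V)\gamma_0]$, and the coherent-state identities \eqref{eq:coherent-state-identity-1}--\eqref{eq:coherent-state-identity-2} recast this as
\[
(2\pi)^{-2}\iint(h^2|p|^2 - V(u))\Tr[\Pi_{s,u,p}\gamma_0]\,du\,dp - \|\nabla g\|_{L^2}^2 h^2 s^{-2}\Tr(\gamma_0).
\]
Dropping the nonnegative contribution of $[h^2|p|^2-V(u)]_+$ and using the pointwise bound $\Tr[\Pi_{s,u,p}\gamma_0] \le \|\Phi_2 f_{s,u,p}\|_{L^2}^2 = (\Phi_2^2 * g_s^2)(u)$ together with $[h^2|p|^2-V(u)]_- \le 0$, the double integral is bounded below by $-(8\pi h^2)^{-1}\int V_+^2(\Phi_2^2 * g_s^2)\,du$ after the explicit $p$-integration. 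The kinetic remainder $h^2 s^{-2}\Tr(\gamma_0) \le Ch^{2/3}\cdot h^{-2}\Lambda^2 = o(h^{-2})$ by Lemma~\ref{le:hydrogen-2}(ii) with $\alpha = 1/2$. The replacement $\Phi_2^2 * g_s^2 \to \Phi_2^2$ is controlled on the transition zones of $\Phi_2$: on $\{r\le |x|\le 2r\}$ one has $|\Phi_2^2 * g_s^2 - \Phi_2^2|\le Cs/r$ and $V_+^2 \le C/r^2$, so this error contributes $O(h^{-2}\cdot s/r) = O(h^{-11/6}) = o(h^{-2})$; the outer transition at $|x|\sim\Lambda$ is similarly negligible.

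For the matching upper bound, I would use the Berezin trial state $\gamma' := (2\pi)^{-2}\iint_{h^2|p|^2\le V(u)}\chi(u)\Pi_{s,u,p}\,du\,dp$, where $0 \le \chi \le 1$ is a smooth cutoff chosen so that $\chi \equiv 1$ on $\supp\Phi_2$ and $\chi \equiv 0$ slightly outside, ensuring $0 \le \gamma' \le 1$ and $\Tr(\gamma') = O(h^{-2}\Lambda)$. The variational bound gives $\Tr[\Phi_2(-h^2\Delta-\widetilde V)\Phi_2]_- \le \Tr[(-h^2\Delta-\widetilde V)\Phi_2\gamma'\Phi_2]$, and expanding $\langle\Phi_2 f_{s,u,p},(-h^2\Delta-\widetilde V)\Phi_2 f_{s,u,p}\rangle$ isolates the main semiclassical term $-(8\pi h^2)^{-1}\int V_+^2\Phi_2^2\,du$, modulo kinetic corrections of size $h^2 s^{-2}\Tr(\gamma') = O(h^{-4/3}\Lambda)$ and transition-zone errors bounded exactly as in the lower bound.

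The main obstacle is the precise tuning of the regularisation scale $s$: it must be large enough that $h^2 s^{-2}\Tr(\gamma_0) = o(h^{-2})$, yet small enough (relative to $r = h^{1/2}$) that the convolution error $s/r$ at the inner transition zone—where $V_+^2$ attains its maximum $\sim r^{-2}$, making the integrand borderline singular—is absorbed into $o(h^{-2})$ after division by $h^2$. The choice $s = h^{2/3}$ is dictated by this balance; a more aggressive $s$ fails the kinetic constraint and a more cautious $s$ fails the convolution constraint.
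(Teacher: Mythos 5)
Your argument is correct and follows essentially the same route as the paper: coherent states at scale $s=h^{2/3}$, the negative spectral projection plus the bound $\Tr[\Pi_{s,u,p}\Phi_2\gamma\Phi_2]\le(\Phi_2^2*g_s^2)(u)$ for the lower bound, and the phase-space trial state over $\{h^2p^2\le V(u)\}$ for the upper bound, with Lemmas~\ref{le:V-Vg2} and \ref{le:hydrogen-2} absorbing the errors; your pre-mollification $V\mapsto V*g_s^2$ and the cutoff $\chi$ (redundant, since $V_+\in L^1$ already makes the trial state trace class) are cosmetic reorganizations of the paper's bookkeeping. The only point you pass over too quickly is the upper bound's potential term, where replacing $((\Phi_2^2 (V*g_s^2))*g_s^2)(u)$ by $\Phi_2^2(u)V(u)$ is not literally the same transition-zone estimate as in the lower bound but requires its own (routine) bound — Lemma~\ref{le:V-Vg2} on $\supp\Phi_2$ together with a Lipschitz/mollification estimate for $\Phi_2^2(V*g_s^2)$ away from the origin — which is the analogue of the last error term in the paper's upper-bound expansion \eqref{eq:Hydrogen-comparison-exterior-upper-bound}, treated there by splitting at the scale $\Lambda_0$.
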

 
\begin{proof} 1. To prove the lower bound, we choose the density matrix
\[
\gamma_2 := 1_{( - \infty ,0]} \left[ {\Phi_2 \left( { -h^2\Delta  - V} \right)\Phi_2} \right].
\]
Taking $s=h^{2/3}$ and using identities \eqref{eq:coherent-state-identity-1} and \eqref{eq:coherent-state-identity-2} we can write   
\bq\label{eq:semiclassical-chir-lower-eq1}
	\Tr \left[ { \Phi_2 \left( { - h^2\Delta  - V} \right)\Phi_2 } \right]_ -  &= & \Tr \left[ {\left( { -h^2\Delta  - V} \right)\Phi_2 \gamma _2 \Phi_2} \right] \nn \hfill \\
	&=&  (2\pi )^{ - 2} \iint {\left[ {h^2 p^2  - V  (u)} \right]\Tr \left[ {\Pi _{s,u,p} \Phi_2 \gamma_2 \Phi_2} \right]}\,dpdu\nn \hfill\\
	&~&+\Tr \left[ { \left( { V  *g_s^2-V-C h^2s^{-2}} \right)\Phi_2 \gamma _2\Phi_2} \right] .
\eq

2. To bound the second term of the right-hand side of \eqref{eq:semiclassical-chir-lower-eq1}, we can apply H\"older's inequality, Lemma~\ref{le:hydrogen-2} (i) with $\Omega:=\supp \Phi_2 \subset \{r\le |x| \le 2 \Lambda\}$ and Lemma~\ref{le:V-Vg2}  to get
\bq \label{eq:semiclassics-exterior-second-lowerbound}
	&~&\Tr \left[ { \left( { V  *g_s^2-V-C h^2s^{-2}} \right)\Phi_2 \gamma _2\Phi_2} \right] \nn \\
	&\ge& - \left\| {V*g_s^2  - V} \right\|_{L^2 (\Omega )} \left\| {\rho _{\Phi _2 \gamma _2 \Phi _2 } } \right\|_{L^2 (\mathbb{R}^2 )}  - Ch^2 s^{ - 2} \operatorname{Tr} [\Phi _2 \gamma _2 \Phi _2 ] \nn\hfill \\
   &\geqslant&  - Ch^{ - 2} \left\| {V*g_s^2  - V} \right\|_{L^2 (\Omega )} \left\| {V_ +  } \right\|_{L^2 (\Omega )}  - Cs^{ - 2} \left\| {V_ +  } \right\|_{L^2 (\Omega )} |\Omega |^{1/2} \nn \hfill \\
   &\geqslant&  - Ch^{ - 2} h^{1/8} |\ln h|^{1/2}  - Cs^{ - 2} |\ln h|^{1/2} \Lambda  = o(h^{ - 2} ) .
\eq

For the first term of the right-hand side of  \eqref{eq:semiclassical-chir-lower-eq1}, because
$$
0 \le \Tr \left[ {\Pi _{s,u,p} \Phi_2\gamma _2\Phi_2 } \right]\le \Tr \left[ {\Pi _{s,u,p} \Phi_2^2 } \right] = (\Phi_2^2 *g_s^2 )(u)
$$
we obtain
\bq \label{eq:semiclassics-exterior-first-lowerbound}
	&~&(2\pi )^{ - 2} \iint {\left[ {h^2 p^2  - V  (u)} \right]\Tr \left[ {\Pi _{s,u,p} \Phi_2 \gamma_2 \Phi_2} \right]}\,dpdu\nn \hfill \\
	&\ge&  - (2\pi )^{ - 2} \iint {\left[ {h^2p^2  - V (u)} \right]_ -  (\Phi_2^2 *g_s^2 )(u)}dpdu \nn \hfill \\
	&=&  - (8\pi h^2)^{ - 1}  \int {V_ + ^2 (u)(\Phi_2^2 *g_s^2 )(u)du} \nn \hfill\\
	&=&  -(8\pi h^2 )^{ - 1}  \int {V_ + ^2 (u)\Phi_2^2 (u)du} + o(h^{ - 2} ). 
\eq
Here the last estimate follows from 
\bq \label{eq:upper-bound-chirV-V*g2}
\int {V_ + ^2 (u)|\Phi_2^2  - \Phi_2^2 *g_s^2 |(u)du}  \le Csr^{-1}\int_{|x|\ge r/2}V_+(u)^2du \le Csr^{-1}|\ln r| =o(h^{-2}),
\eq
where we have used $|\Phi_2^2  - \Phi_2^2 *g_s^2 |(x)\le Csr^{-1}1_{\{|u|\ge r/2\}}$ when $|x|\ge r \gg s$. Replacing (\ref{eq:semiclassics-exterior-second-lowerbound}) and (\ref{eq:semiclassics-exterior-first-lowerbound}) into \eqref{eq:semiclassical-chir-lower-eq1} we get the lower bound in the lemma. 

3. To show the upper bound, we choose 
\[
	\gamma ^{(2)} : = (2\pi )^{ - 2} \iint\limits_M {\Pi _{s,u,p}\,dpdu}, ~~M: = \left\{ {(u,p): h^2p^2  - V(u) \le  0} \right\}.
\]
Using the coherent identity \eqref{eq:coherent-identity} and the IMS formula, it is straightforward to compute that
\bq \label{eq:Hydrogen-comparison-exterior-upper-bound}
&~&\Tr \left[ {\Phi_2 \left( { - h^2 \Delta  - V} \right)\Phi_2 } \right]_ -   \le \Tr \left[ {\Phi_2 \left( { - h^2 \Delta  - V} \right)\Phi_2 \gamma ^{(2)}  } \right] \nn\hfill \\
   &=& (2\pi )^{ - 2} \iint\limits_M {\left\langle {f_{s,u,p} (x)} \right|\Phi_2 (x)\left( { - h^2 \Delta _x  - V(x)} \right)\Phi_2 (x)\left| {f_{s,u,p} (x)} \right\rangle _{L^2 (\mathbb{R}^2 ,dx)} dpdu} \nn\hfill \\
   &=& (2\pi )^{ - 2} \iint\limits_M {\left\langle {e^{ip \cdot x} } \right| - (g_s \Phi_2 )^2 \frac{{h^2 }}
{2}\Delta  - \frac{{h^2 }}
{2}\Delta (g_s \Phi_2 )^2  + h^2 |\nabla (g_s \Phi_2 )|^2 - (g_s\Phi_2)^2 V\left| {e^{ip \cdot x} } \right\rangle dpdu} \nn\hfill \\
   &=& (2\pi )^{ - 2} \iint\limits_M {\left( {h^2 p^2 (\Phi_2^2 *g_s ^2 )(u) + h^2 \int\limits_{\mathbb{R}^2 } {|\nabla (g_s \Phi_2)(x)|^2 dx}  - ((\Phi_2^2 V)*g_s ^2 )(u)} \right)dpdu} \nn\hfill \\
   &=&  - (8\pi h^2 )^{ - 1} \int {V_ + ^2 (u)\Phi_2^2 (u)du}  + (8\pi h^2 )^{ - 1} \int {\left[ {V_ + ^2 (\Phi_2^2 *g_s ^2 ) -V_ + ^2 \Phi_2^2} \right]du}  \nn\hfill \\
&~&+(4\pi )^{ - 1} \iint {V_+(u)|\nabla (g_s \Phi_2 )(x)|^2 dxdu}  +(4\pi h^2 )^{ - 1}  \int {\Phi_2^2 V\left[ {V_ +   - (V_ +  *g_s ^2 )} \right]du} .
 \eq
 
4. Finally we verify that the last three terms of the right-hand side of \eqref{eq:Hydrogen-comparison-exterior-upper-bound} are of $o(h^{-2})$. The second term was already treated by \eqref{eq:upper-bound-chirV-V*g2}. Using
\[
\int_{\R^2}|\nabla (g_s \Phi_2 )(x)|^2 dx \le C(r^{ - 2}  + s^{ - 2} )1_{\{r/2\le |u| \le 3 \Lambda\}}.
\]
we can bound the third term as
\[
\iint {V_ +  (u)|\nabla (g_s \Phi _2 )(x)|^2 dxdu} \leqslant C(r^{ - 2}  + s^{ - 2} )\int_{r/2 \leqslant |x| \leqslant 3\Lambda } {V_ +  (u)du}  = o(h^{ - 2} ).
\]

To estimate the last term, we introduce a universal constant $\Lambda_0>0$ such that $V(x)\ge 0$ when $|x|\le 2\Lambda_0$ (such $\Lambda_0$ exists since $|V(x)-|x|^{-1}|\le C(|x|^{-1/2}+1)$). Using  $V_+(u)=V(u)$ and $(V_+*g_s^2)(u)=(V*g_s^2)(u)$ when $|u|\le \Lambda_0$, and Lemma~\ref{le:V-Vg2} we get
\[
\int_{|u| \leqslant \Lambda _0 } {|\Phi _2^2 V|.|V_ +   - V_ +  *g_s^2 |du}  \leqslant \left\| V \right\|_{L^2 (\Omega ) } \left\| {V*g_s^2  - V} \right\|_{L^2 (\Omega) } \le C|\ln h|^{1/2}h^{1/8} = o(1)
\]
where $\Omega =\supp \Phi_2 \subset \{r\le |u| \le \Lambda\}$. On the other hand, because $|V(u)|\le C$ when $|u|\ge \Lambda_0$ and $V_+\in L^1(\R^2)$, 
\[
\int_{|u| \geqslant \Lambda _0 } {|\Phi _2^2 V|.|V_ +   - V_ +  *g_s^2 |du}  \leqslant C\left\| {V_ +  *g_s^2  - V_ +  } \right\|_{L^1 (\mathbb{R}^2 )}  = o(1).
\]
Thus the last term of the right-hand side of \eqref{eq:Hydrogen-comparison-exterior-upper-bound} is also of $o(h^{-2})$. This completes the proof.
\end{proof}

Lemmas \ref{le:localization}, \ref{le:Hydrogen-comparison-interior-region} and \ref{le:semiclassical-approximation-intermediate-region} together yield \eqref{eq:hydrogen-comparison}, which is equivalent to \eqref{eq:semiclassical-approximation}. 

\subsection{Trial Density Matrix}
The last step in proving Theorem~\ref{thm:semiclassical-approximation} is to construct a trial density matrix. 

\begin{lemma}
There exists a density matrix $\gamma_h$ satisfying \eqref{eq:trial-density-ineq1} and \eqref{eq:trial-density-ineq2}. 
\end{lemma}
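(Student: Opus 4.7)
My plan is to simply splice together the two trial states that were already used to produce the upper bounds in Lemmas~\ref{le:Hydrogen-comparison-interior-region} and \ref{le:semiclassical-approximation-intermediate-region}. Concretely, let
\[
\gamma^{(1)} := 1_{(-\infty,0]}\bigl(\Phi_1(-h^2\Delta - V^{\TF}_\lambda)\Phi_1\bigr),
\qquad
\gamma^{(2)} := (2\pi)^{-2}\iint_M \Pi_{s,u,p}\,dp\,du,
\]
with $M=\{(u,p):h^2p^2-V^{\TF}_\lambda(u)\le 0\}$ and $s=h^{2/3}$, and set
\[
\widetilde\gamma_h := \Phi_1\gamma^{(1)}\Phi_1 + \Phi_2\gamma^{(2)}\Phi_2.
\]
Both $\gamma^{(1)}$ and $\gamma^{(2)}$ are bounded between $0$ and $I$ (the latter via the coherent state resolution \eqref{eq:coherent-identity} restricted to the positive subset $M$), so $0\le \widetilde\gamma_h\le \Phi_1^2+\Phi_2^2\le 1$ and $\widetilde\gamma_h$ is a density matrix.

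To check the energy identity \eqref{eq:trial-density-ineq1}, note that $\Tr[\Phi_1(-h^2\Delta - V^{\TF}_\lambda)\Phi_1\gamma^{(1)}] = \Tr[\Phi_1(-h^2\Delta - V^{\TF}_\lambda)\Phi_1]_-$ by definition of the spectral projection, while the computation in step~3 of the proof of Lemma~\ref{le:semiclassical-approximation-intermediate-region} gives $\Tr[\Phi_2(-h^2\Delta - V^{\TF}_\lambda)\Phi_2\gamma^{(2)}] = \Tr[\Phi_2(-h^2\Delta - V^{\TF}_\lambda)\Phi_2]_- + o(h^{-2})$. Summing the two and invoking the localization identity of Lemma~\ref{le:localization} yields \eqref{eq:trial-density-ineq1}.

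For the particle number bound in \eqref{eq:trial-density-ineq2}, the interior piece is negligible since Lemma~\ref{le:hydrogen-2}(ii) gives $\Tr(\Phi_1\gamma^{(1)}\Phi_1)\le Crh^{-2}|\ln h|^{1/2}$, hence $2h^2\Tr(\Phi_1\gamma^{(1)}\Phi_1)=o(1)$. For the exterior piece a direct evaluation of the coherent state integral yields
\[
2h^2\Tr(\Phi_2\gamma^{(2)}\Phi_2)
= (2\pi)^{-1}\int [V^{\TF}_\lambda(u)]_+(\Phi_2^2*g_s^2)(u)\,du
\le (2\pi)^{-1}\int [V^{\TF}_\lambda(u)]_+\,du = \int \rho^{\TF}_\lambda,
\]
where the last equality uses the TF equation $2\pi\rho^{\TF}_\lambda=[V^{\TF}_\lambda]_+$ and the fact that $\Phi_2^2*g_s^2\le 1$. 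Adding the two contributions produces $2h^2\Tr(\widetilde\gamma_h)\le \int\rho^{\TF}_\lambda + o(1)$; to obtain the exact inequality demanded in the statement it suffices to replace $\widetilde\gamma_h$ by $\gamma_h := (1-\alpha_h)\widetilde\gamma_h$ for a suitable $\alpha_h\to 0$, which affects \eqref{eq:trial-density-ineq1} only by $O(h^{-2})\alpha_h = o(h^{-2})$.

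The last and most delicate step is the Coulomb convergence in \eqref{eq:trial-density-ineq2}. Since $\Phi_i$ are multiplication operators, $\rho_{\widetilde\gamma_h}=\Phi_1^2\rho_{\gamma^{(1)}}+\Phi_2^2\rho_{\gamma^{(2)}}$, and an explicit computation of the coherent density gives $2h^2\Phi_2^2\rho_{\gamma^{(2)}}(x) = \Phi_2^2(x)\,(\rho^{\TF}_\lambda*g_s^2)(x)$. Splitting by the triangle inequality for the Coulomb norm $\sqrt{D(\cdot)}$,
\[
\sqrt{D\bigl(2h^2\rho_{\widetilde\gamma_h}-\rho^{\TF}_\lambda\bigr)}
\le \sqrt{D\bigl(2h^2\Phi_1^2\rho_{\gamma^{(1)}}\bigr)}
  + \sqrt{D\bigl(\Phi_2^2(\rho^{\TF}_\lambda*g_s^2-\rho^{\TF}_\lambda)\bigr)}
  + \sqrt{D\bigl((\Phi_2^2-1)\rho^{\TF}_\lambda\bigr)},
\]
and I will control each term via Theorem~\ref{thm:Coulomb-norm} ($D(f)\le C_{\mathrm{HLS}}\|f\|_{L^{4/3}}^2$). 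The first term has $L^{4/3}$-norm $o(1)$ because the supporting set $\{|x|\le 2r\}$ has vanishing measure and the density is bounded by $(2\pi|x|)^{-1}$. The third term, supported on $\{|x|\le 2r\}\cup\{|x|\ge\Lambda\}$, is small by the pointwise bound $\rho^{\TF}_\lambda(x)\le (2\pi|x|)^{-1}$ together with $\int_{|x|\ge\Lambda}\rho^{\TF}_\lambda\to 0$. The middle term is controlled by standard mollifier convergence in $L^{4/3}$ as $s\to 0$, using that $\rho^{\TF}_\lambda\in L^{4/3}(\mathbb{R}^2)$.

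The main obstacle I anticipate is the final Coulomb norm estimate: bounding each of the three pieces rigorously (especially the mollifier piece) requires that $\rho^{\TF}_\lambda\in L^{4/3}(\mathbb{R}^2)$ globally, which follows from the pointwise bound $\rho^{\TF}_\lambda\le (2\pi|x|)^{-1}$ near the origin combined with the decay estimate implicit in $\int\rho^{\TF}_\lambda\le 1$ and Lemma~\ref{le:TF-equation}. Everything else is bookkeeping from the preceding semiclassical analysis.
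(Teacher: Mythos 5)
Your construction is essentially the paper's: interior spectral projection plus exterior coherent states, glued with $\Phi_1,\Phi_2$, then scaled down by a factor $(1-\alpha_h)$ to enforce the exact particle-number bound (the paper uses the hydrogen projection $1_{(-\infty,0]}[\Phi_1(-h^2\Delta-|x|^{-1}+1)\Phi_1]$ in the interior rather than the one for $V^{\TF}_\lambda$, but this difference is immaterial, since Lemma~\ref{le:hydrogen-2}~(ii) applies equally because $V^{\TF}_\lambda\le |x|^{-1}$). The overall structure and the treatment of the exterior density, the $\Phi_2^2$ cut-off term and the mollifier term are correct and match the paper.

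However, your justification for the first Coulomb term is not valid as written: you claim that $2h^2\Phi_1^2\rho_{\gamma^{(1)}}$ is small in $L^{4/3}$ ``because the supporting set has vanishing measure and the density is bounded by $(2\pi|x|)^{-1}$''. The density of a spectral projection of $\Phi_1(-h^2\Delta-V)\Phi_1$ admits no such pointwise bound; $(2\pi|x|)^{-1}$ bounds the Thomas--Fermi density, not the quantum density $\rho_{\gamma^{(1)}}$, and the desired bound is exactly what needs proof. The correct route (the paper's) is Lemma~\ref{le:hydrogen-2}~(ii) with $\alpha=2/3$ and $\Omega=\supp\Phi_1\subset\{|x|\le 2r\}$, which gives $\int\rho_{\Phi_1\gamma^{(1)}\Phi_1}^{4/3}\le Ch^{-8/3}(|\ln h|+Cr^2)^{2/3}r^{2/3}$ and hence $\|2h^2\rho_{\Phi_1\gamma^{(1)}\Phi_1}\|_{L^{4/3}}\to0$; with this replacement your argument goes through. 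A second, smaller imprecision: since $\Tr[-h^2\Delta-V^{\TF}_\lambda]_-$ is of order $h^{-2}|\ln h|$, the rescaling by $(1-\alpha_h)$ perturbs \eqref{eq:trial-density-ineq1} by $\alpha_h\,O(h^{-2}|\ln h|)$, not $\alpha_h\,O(h^{-2})$, so you must choose $\alpha_h$ with $|\ln h|^{-1}\gg\alpha_h\gg r|\ln h|^{1/2}$ (e.g. $\alpha_h=h^{1/4}$, as in the paper) rather than merely $\alpha_h\to0$; the same choice is needed so that $\alpha_h\int\rho^{\TF}_\lambda$ absorbs the $O(r|\ln h|^{1/2})$ excess in the particle number.
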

\begin{proof}
Recall that we always choose $\Lambda=|\ln h|$, $r=h^{1/2}$ and $s=h^{2/3}$.

1. From the proof of Lemmas \ref{le:localization}, \ref{le:Hydrogen-comparison-interior-region} and \ref{le:semiclassical-approximation-intermediate-region}, if we choose the density matrices 
\bqq
\gamma ^{(1)} &:=& 1_{( - \infty ,0]} \left[ {\Phi_1 \left( { - h^2 \Delta  - |x|^{-1}+1} \right) \Phi_1} \right] ,\hfill\\
\gamma ^{(2)} &:=& (2\pi )^{ - 2} \iint\limits_{h^2p^2  - V_\lambda^{\TF}(u) \le  0} {\Pi _{s,u,p}\,dpdu},\hfill\\
\gamma^{(0)}&:=&\Phi_1 \gamma^{(1)} \Phi_1 + \Phi_2 \gamma^{(2)}\Phi_2 
\eqq
then 
\bq \label{eq:pre-trial-1}
\Tr[-h^2\Delta -V^{\TF}_\lambda]_-=\Tr[(-h^2\Delta -V^{\TF}_\lambda)\gamma^{(0)}]+o(h^{-2}).
\eq

2. Using the coherent identity \eqref{eq:coherent-identity} and the TF equation $\rho^{\TF}_\lambda=(2\pi)^{-1}[V^{\TF}_\lambda]_+$, we can compute explicitly that
\bqq
	\rho _{\gamma ^{(2)}} (x) &:=& \gamma ^{(2)} (x,x) = (2\pi )^{ - 2} \iint\limits_{h^2 p^2  - V_\lambda ^{\TF } (u) \le 0} {\Pi _{s,u,p} (x,x)\, dpdu} \hfill \\
	&=& (4\pi h^2 )^{ - 1} ([V_\lambda ^{\TF } ]_ +  *g_s^2 )(x) = (2h^2 )^{ - 1} (\rho _\lambda ^{\TF } *g_s^2 )(x) .
 \eqq
Therefore,
\bq \label{eq:pre-trial-density}
2h^2 \rho _{\gamma ^{(0)} } = 2h^2\rho _{\Phi _1\gamma ^{(1)} \Phi _1}+\Phi_2^2\rho _\lambda ^{\TF } *g_s^2.
\eq
Since $\int \rho _\lambda ^{\TF } *g_s^2 =\int \rho_\lambda^{\TF}$ and $\Tr[\Phi _1\gamma ^{(1)} \Phi _1] \le Crh^{-2}|\ln h|^{1/2}$ (see \eqref{eq:Tr-Phi1-gamma1-Phi1}), we have 
\bq \label{eq:pre-trial-2}
2h^2\int_{\R^2} \rho _{\Phi _1\gamma ^{(1)} \Phi _1}(x)dx \le \int_{\R^2}\rho_\lambda^{\TF}(x)dx+ Cr|\ln h|^{1/2}.
\eq
On the other hand, we can write from \eqref{eq:pre-trial-density} that 
$$
2h^2 \rho _{\gamma ^{(0)} }-\rho_\lambda^{\TF} = 2h^2\rho _{\Phi _1\gamma ^{(1)} \Phi _1}+\Phi_2^2 (\rho _\lambda ^{\TF } *g_s^2-\rho_\lambda^{\TF}) + (1-\Phi_2^2) \rho_\lambda^{\TF}.
$$
Since $\rho _\lambda ^{\TF}\in L^{4/3}(\R^2)$, we have $\rho _\lambda ^{\TF } *g_s^2-\rho _\lambda^{\TF}$ and $ (\Phi_2^2-1) \rho _\lambda^{\TF}$ converge to $0$ in  $L^{4/3}(\R^2)$. Moreover, using Lemma~\ref{le:hydrogen-2} we have $2h^2\rho _{\Phi _1\gamma ^{(1)} \Phi _1} \to 0$ in $L^{4/3}(\R^2)$. Thus $2h^2 \rho _{\gamma ^{(0)} } - \rho _\lambda^{\TF}\to 0~\text{in}~L^{4/3}(\R^2).$ Since the Coulomb norm is dominated by the $L^{4/3}$-norm (see Theorem~\ref{thm:Coulomb-norm}), we then also have 
\bq \label{eq:pre-trial-3}
D(2h^2 \rho _{\gamma ^{(0)} } - \rho _\lambda^{\TF}) \to 0.
\eq

3. Finally, we choose $\ell$ such that $|\ln h|^{-1}\gg \ell \gg r |\ln h|^{1/2}$ (e.g. $\ell=r^{1/2} = h^{1/4}$) and define
$$\gamma_h:=(1- \ell )\gamma^{(0)}.$$
Then using \eqref{eq:pre-trial-2} and $\ell \gg r |\ln h|^{1/2}$ we have
$$ 2h^2\Tr(\gamma_h) \le (1-\ell)(1+Cr|\ln h|^{1/2}) \int \rho^{\TF}_\lambda \le \int \rho^{\TF}_\lambda$$
for $h$ small enough. Moreover, since $|\ln h|^{-1}\gg \ell$, the inequalities \eqref{eq:pre-trial-1} and \eqref{eq:pre-trial-3} still hold true with $\gamma^{(0)}$ replaced by $\gamma_h$.
\end{proof}

\section{Proofs of the Main Theorems}\label{sec:results}
\subsection{Ground State Energy}\label{sec:GSE}
Having the semiclassics in Theorem~\ref{thm:semiclassical-approximation}, the proof of Theorem~\ref{thm:GSE} is standard (see \cite{Li81}).

\begin{proof}[Proof of Theorem~\ref{thm:GSE}] 1. We first prove the lower bound. Taking any (normalized) wave function $\Psi\in \bigwedge_{i = 1}^N L^2 (\mathbb{R}^2)$, we need to show that 
$$\left({\Psi ,H_{N,Z} \Psi } \right) \ge -\frac{1}{2}Z^2\ln Z+E^{\TF}(\lambda)Z^2+o(Z^2).$$
Starting with the Lieb-Oxford inequality (Theorem~\ref{thm:Lieb-Oxford-inequality})
\[
	\left( {\Psi , \sum\limits_{1 \le i < j \le N} {\frac{1}{{|x_i  - x_j |}}} \Psi } \right)  \ge D(\rho _{\Psi  } ) - C_{\mathrm{LO}}
	\int {\rho _{\Psi  } ^{3/2} }, 
\]
we want to bound $\int {\rho _{\Psi  } ^{3/2}}$. It of course suffices to assume that $\left( {\Psi ,H_{N,Z} \Psi } \right) \le 0$. 
Using the Lieb-Thirring inequality \eqref{eq:LT-kinetic}, the hydrogen spectrum in Theorem~\ref{thm:hydrogen-spectrum-original} and $\Tr(\gamma_\Psi) = N \le CZ$ we arrive at
\bqq
	0 \ge 4 \left( {\Psi, H_{N,Z} \Psi } \right)  &\ge& \Tr [-\Delta \gamma_\Psi] +\Tr \left[ {\left( { - \Delta  - 4Z|x| ^{-1}} \right)\gamma _\Psi  } 
	\right] \hfill\\
	&\ge& K_2 \int_{\mathbb{R}^2 } {\rho _{\Psi  }^2 (x)dx}-CZ^2 |\ln Z|
\eqq
By H\"older's inequality and $\int \rho_\Psi = N\le CZ$ again we conclude 
$$ \int_{\mathbb{R}^2 } {\rho _{\Psi  }^{3/2} (x)dx}  \le \left( {\int_{\mathbb{R}^2 } {\rho _{\Psi  }^2 (x)dx} } \right)^{1/2} \left( {\int_{\mathbb{R}^2 } 
{\rho _{\Psi  }^{} (x)dx} } \right)^{1/2} \le CZ^{3/2} |\ln Z|^{1/2}.$$
Thus the Lieb-Oxford inequality gives 
\bq \label{eq:GSE-upper-0}
	\left( {\Psi, H_{N,Z} \Psi } \right)  &\ge& \Tr \left[ {\left( { - \frac{1}{2}\Delta  - Z|x|^{ - 1} } \right)\gamma _\Psi  } \right] + D(\rho _{\gamma _\Psi  } ) - 
	CZ^{3/2} |\ln Z|^{1/2} \nn  \hfill \\
	&=& Z\Tr \left[ {\left( { - (2Z)^{ - 1} \Delta  - V_\lambda ^{\TF }   } \right)\gamma _\Psi  } \right] - Z^2\left[ {\mu _\lambda ^{\TF } (N/Z) + D(\rho _
	\lambda ^{\TF } )} \right] \nn \hfill\\
	&~&+ Z^2 D(Z^{ - 1} \rho _{\gamma _\Psi  }  - \rho _\lambda ^{\TF } ) - CZ^{3/2} |\ln Z|^{1/2}.
\eq
For the lower bound, we can ignore the nonnegative term $D(Z^{ - 1} \rho _{\gamma _\Psi  }  - \rho _\lambda ^{\TF } ) \ge 0$. With the semiclassics of the TF potential in Theorem~\ref{thm:semiclassical-approximation} and $h^2=(2Z)^{-1}$, one has
\[
\begin{gathered}
~~~  \Tr \left[ {\left( { - (2Z)^{ - 1} \Delta  - V_\lambda ^{\TF } } \right)\gamma _\Psi  } \right] \ge \Tr \left[ { - (2Z)^{ - 1} \Delta  - V_\lambda ^{\TF }   } \right]_ -   \hfill \\
   \ge  - \frac{1}
{2}Z\ln Z + Z\left[ { - (4\pi )^{ - 1} \int {\left( {[V_\lambda ^{\TF } (x)]_ + ^2  - [|x|^{ - 1}  - 1]_ + ^2 } \right)dx}  + \frac{1}{2}c^{\mathrm{H} } } \right] + o(Z). \hfill \\ 
\end{gathered} 
\]
Together with $N/Z\to \lambda$, we obtain from \eqref{eq:GSE-upper-0} that
\[
\left( {\Psi, H_{N,Z} \Psi } \right)  \ge  - \frac{1}{2}Z^2 \ln Z + e(\lambda )Z^2  + o(Z^2 )
\]
where
\[
e (\lambda ): =  - (4\pi )^{ - 1} \int {\left( {[V_\lambda ^{\TF } (x)]_ + ^2  - [|x|^{ - 1}  - 1]_+ ^2 } \right) dx}  - \mu _\lambda ^{\TF } \lambda  - D(\rho _\lambda ^{\TF } ) + \frac{1}{2}c^{\mathrm{H} } .
\]
By the TF equation $2\pi \rho^{\TF}_\lambda = [V^{\TF}_\lambda]_+$ we have
\[
 - [V_\lambda ^{\TF } ]_ + ^2  = [V_\lambda ^{\TF } ]_ + ^2  - 2[V_\lambda ^{\TF } ]_ + ^{} V_\lambda ^{\TF }  = 4\pi ^2 [\rho _\lambda ^{\TF } ]^2  - 4\pi \rho _\lambda ^{\TF } [|x|^{ - 1}  - \rho _\lambda ^{\TF } *\clmb^{ - 1}  - \mu _\lambda ^{\TF } ].
\]
Replacing this identity and $\mu _\lambda ^{\TF } \lambda = \mu _\lambda ^{\TF } \int {\rho _\lambda ^{\TF } }$ into the definition of $e(\lambda)$, we see that $e(\lambda)=E^{\TF}(\lambda)+ c^{\rm H}/2$. Thus we get the lower bound on the ground state energy.

2. To show the upper bound, because $\lambda\mapsto E^{\TF}(\lambda)$ is continuous, it suffices to show that for any $0<\lambda'<\lambda$ fixed, one has
\bqq \label{eq:GSE-upper-condtion-0}
	E(N,Z) \le  - \frac{1}{2}Z^2 \ln Z + E^{\TF } (\lambda ')Z^2  + 
	o(Z^2 ).
\eqq
Using Lieb's variational principle (see Theorem~\ref{thm:Lieb-variational-principle}) we want to find a density matrix $\gamma$ such that $\Tr(\gamma)\le N$ and
\bqq \label{eq:GSE-upper-condtion-0}
	\Tr \left[ {\left( { - \frac{1}{2}\Delta  - Z|x|^{ - 1} } \right)\gamma } \right] + D(\rho _\gamma  ) \le  - \frac{1}{2}Z^2 \ln Z + E^{\TF } (\lambda ')Z^2  + 
	o(Z^2 ).
\eqq
This condition can be rewritten, using the same calculation of proving the lower bound (see \eqref{eq:GSE-upper-0}), as
\bq \label{eq:GSE-upper-condtion-1}
\Tr \left[ {\left( { - (2Z)^{ - 1} \Delta  - V_{\lambda '}^{\TF } } \right)\gamma } \right] + ZD(Z^{ - 1} \rho _\gamma   - \rho _{\lambda '} ^{\TF } ) \le \Tr \left[ { - (2Z)^{ - 1} \Delta  - V_{\lambda '} ^{\TF } } \right]_ -   + o(Z).
\eq
According to Theorem~\ref{thm:semiclassical-approximation} with $h^2=(2Z)^{-1}$, we can find a trial density matrix $\gamma$ satisfying (\ref{eq:GSE-upper-condtion-1}) such that $\Tr(\gamma)\le Z\int \rho^{\TF}_{\lambda '} \le \lambda ' Z$. Since $N/Z\to \lambda >\lambda '$, one has $\Tr(\gamma)\le \lambda ' Z \le N$ for $Z$ large enough and it ends the proof.
\end{proof}

\subsection{Extensivity of Neutral Atoms}
\begin{proof}[Proof of Theorem~\ref{thm:radius-atom}]
Let $\theta _R$ be a smooth function such that $\theta _R(x)=0$ if $|x|\le R$ and $\theta _R(x)=1$ if $|x| \ge 2R$. From the proof of Theorem~\ref{thm:semiclassical-approximation} and Theorem~\ref{thm:GSE}, we have, with $\gamma:=\gamma_{\Psi_{N,Z}}$ and $h^2=(2Z)^{-1}$, 
\[
\Tr [( - h^2 \Delta  - V_1 ^{\TF } )\gamma ] = \Tr [ - h^2 \Delta  - V_1 ^{\TF } ]_ -   + o(h^{ - 2} ).
\]
Using the localization as in Lemma~\ref{le:localization} and the semiclassics of Lemma~\ref{le:semiclassical-approximation-intermediate-region} we get  
\bq \label{eq:radius-bound-1}
	\Tr [\theta _R ( - h^2 \Delta  - V_1 ^{\TF } )\theta _R \gamma ] &\le& \Tr [\theta _R ( - h^2 \Delta  - V_1 ^{\TF }  )\theta _R ]_ -   
	+ o(h^{ - 2} ) \nn \hfill\\
	&=& -(8\pi h^2)^{ - 1} \int {[V_1^{\TF } ]_ + ^2 (x)\theta _R^2 (x)dx}+o(h^{-2}).
\eq
On the other hand, since $V_1^{\TF} \le |x|^{-1}\le R^{-1}$ in $\supp \theta_R$,
\bq\label{eq:radius-bound-3}
	\Tr [\theta _R ( - h^2 \Delta  - V_1^{\TF}  )\theta _R \gamma ] \ge  -R^{-1}\Tr [\theta _R \gamma \theta _R] =-R^{-1}\int {\theta _R^2 (x)\rho _
	\gamma  (x)dx}.
\eq
Putting \eqref{eq:radius-bound-1} and \eqref{eq:radius-bound-3} together we arrive at
\[
\int {\theta _R^2 (x)\rho _\gamma  (x)dx}  \ge R(8\pi h^2 )^{ - 1} \int {[V_1^{\TF } ]_ + ^2 (x)\theta _R^2 (x)dx}  + o(h^{ - 2} ).
\]
Replacing $h^2=(2Z)^{-1}$, we can conclude that  
\bqq
	\int_{|x| \ge R} {\rho _\gamma  (x)dx} \ge \int {\theta _R^2 (x)\rho _\gamma  (x)dx} \ge C_RZ + o(Z)
\eqq
where
\bqq
	C_R:= R(4\pi )^{ - 1} \int {[V_1^{\TF} ]_ + ^2 (x)\theta _R^2 (x)dx} \ge \pi R\int_{|x| \ge 2R} {(\rho _1^{\TF } (x))^2 dx} .
\eqq
Note that $C_R>0$ because $\supp \rho^{\TF}_1$ is unbounded (see Theorem~\ref{thm:TF-theory}).
\end{proof}

\subsection{Semiclassics for Coulomb Singular Potentials}\label{app:general-semiclassics}
\begin{proof}[Proof of Theorem~\ref{thm:semiclassics-general-potential}]  We shall show how to adapt the proof of \eqref{eq:semiclassical-approximation} in the previous section to the general case. We however leave some details to the readers. By scaling we can assume $\kappa=1$.

1. The main difficulty of the general case is that we do not have the estimate in Lemma~\ref{le:V-Vg2} in the exterior region. Therefore, we need a more complicated localization. Let $r=h^{1/2},s=h^{2/3}$ and let $g_s$ be as in Definition~\ref{def:coherent}. For any $\eps>0$ small, denote  
\[
W(\eps ,h): = \int_{\eps  \le |x| \le \eps ^{ - 1} } {|V|^2 dx} \int_{\eps  \le |x| \le \eps ^{ - 1} } {|V_ +   - V_ +  *g_s^2 |^2 dx} 
\]
Because $V\in L^2_{\loc}(\R^2\minus \{0\})$, for any $\eps>0$ fixed we have $W(\eps,h)\to 0$ as $h\to 0^+$. Therefore, we can choose $\eps=\eps(h)$ such that $\eps(h)\ge |\ln h|^{-1}$, $\eps(h) \to 0$ and $W(\eps(h),h)\to 0$ as $h\to 0^+$. Let $\varphi$ as in Definition~\ref{def:partition-unity} and define 
\bqq
\widetilde\Phi _1 (x)&=&\varphi(x/r), \\
 \widetilde\Phi _2(x)&=& (1-\varphi^2(x/r))^{1/2}\varphi (x/\eps ) ,\\
\widetilde \Phi_3(x) &=& (1-\varphi^2(x/\eps)^{1/2}\varphi(x \eps/2) ,\\
\widetilde \Phi_4(x) &=& (1-\varphi^2(x \eps/2))^{1/2}.
\eqq
Then $\sum_{i=1}^4\widetilde\Phi_i^2=1$, $\supp \widetilde\Phi_1\subset \{|x|\le 2r\}$, $\supp \widetilde\Phi_2 \subset \{r\le |x|\le 2\eps  \}$, $\supp \widetilde\Phi_3 \subset \{\eps \le |x| \le \eps^{-1}\}$, and $\supp \widetilde\Phi_4\subset \{|x| \ge (2\eps)^{-1}\}$.

2. Following the proof of Lemma~\ref{le:localization} we can show that   
\bq 
\Tr [ - h^2 \Delta  - V ]_ - = \sum_{i=1}^3 \Tr [\widetilde\Phi_i ( - h^2 \Delta  - V)\widetilde\Phi_i ]_ -+o(h^{-2}).\label{eq:general-localization}
\eq
Note that the assumptions $1_{\{|x|\ge 1\}}V_+\in L^2(\R^2)$ and $|\ln h| \ge \eps^{-1}\to \infty$ is sufficient to bound the contribution of the region $\supp \widetilde\Phi_4$ by the Lieb-Thirring inequality \eqref{eq:LT-eigenvalue-sum}. To control the localization cost in the region $\supp \widetilde\Phi_3$, we may use Lemma~\ref{le:hydrogen-2} (i) instead of Lemma~\ref{le:hydrogen-2} (ii). 

3. Because  $|V(x)-|x|^{-1}+1|\le C(|x|^{-\theta}+1)\le Cr^{-\theta}$ for $x \in \supp \widetilde\Phi_1$, we can follow the proof of Lemma ~\ref{le:Hydrogen-comparison-interior-region} to get
\bq
\Tr \left[ {\widetilde\Phi _1 \left( { - h^2\Delta -V} \right)\widetilde\Phi _1 } \right]_ -=\Tr \left[ {\widetilde\Phi _1 \left( { - h^2\Delta  - |x|^{-1} + 1} \right)\widetilde\Phi _1 } \right]_ - + o(h^{-2}) \label{eq:general-interior}.
\eq

4. Adapting the coherent state approach in the proof of Lemma~\ref{le:semiclassical-approximation-intermediate-region}, we can show that
\bq 
\Tr \left[ {\widetilde\Phi_3 \left( { - h^2\Delta -V} \right)\widetilde\Phi_3 } \right]_ -  =-(8\pi h^2)^{-1}\int {V_ + ^2(x)  \widetilde\Phi_3^2 (x)dx}+ o(h^{-2}) \label{eq:general-exterior}.
\eq
To obtain the lower bound it suffices to consider $\Tr [ {\widetilde\Phi_3 \left( { - h^2\Delta -V_+} \right)\widetilde\Phi_3 } ]_ - $ and then use the assumption $W(\eps(h),h)\to 0$ instead of Lemma~\ref{le:V-Vg2} in \eqref{eq:semiclassics-exterior-second-lowerbound}. When proving the upper bound, the assumption $W(\eps(h),h)\to 0$ is again enough to estimate the last term of (\ref{eq:Hydrogen-comparison-exterior-upper-bound}).

5. In the intermediate region $\supp \widetilde\Phi_2 \subset \{r\le |x|\le 2\eps \}$, we have 
$$V_1(x):=|x|^{-1}+C|x|^{-\theta} \ge V(x) \ge |x|^{-1}-C|x|^{-\theta}=:V_2(x) \ge 0.$$
We start with the lower bound
$$ \Tr \left[ {\widetilde\Phi_2 \left( { - h^2\Delta -V} \right)\widetilde\Phi_2 } \right]_ - \ge \Tr \left[ {\widetilde\Phi_2 \left( { - h^2\Delta -V_1} \right)\widetilde\Phi_2} \right]_ - .$$
Using the coherent state approach as in the proof of Lemma~\ref{le:semiclassical-approximation-intermediate-region}, we can show that
\bqq 
	\Tr \left[ {\widetilde\Phi_2 \left( { - h^2\Delta -V_1} \right)\widetilde\Phi_2} \right]_ - =\int {V_1}^2(x)\widetilde\Phi_2(x)^2dx +o(h^{-2}) .
\eqq
To do that, we just need to replace Lemma~\ref{le:V-Vg2} by the following estimate
\[
	\int_{r \le |x| \le 2\eps } {|V_{1}  - V_{1} *g_s^2 |^2 dx}  \le Cs^2 |\ln r|.
\]
Moreover, since $\supp \widetilde\Phi_2 \subset \{r\le |x|\le 2\eps \}$ with $\eps=\eps(h)\to 0$ and $|V-V_1| \le C|x|^{-\theta}$, we have
$$ \int V_1^2(x)\widetilde\Phi_2(x)^2dx = \int {V}^2(x)\widetilde\Phi_2(x)^2dx +o(h^{-2}).$$
Therefore, we arrive at
$$ \Tr \left[ {\widetilde\Phi_2 \left( { - h^2\Delta -V} \right)\widetilde\Phi_2 } \right]_- \ge \int {V}_+^2(x)\widetilde\Phi_2(x)^2dx +o(h^{-2}).$$
Similarly, again using the coherent state approach we get the reverse inequality
\bqq &~&\Tr \left[ {\widetilde\Phi_2 \left( { - h^2\Delta -V} \right)\widetilde\Phi_2 } \right]_-  \le \Tr \left[ {\widetilde\Phi_2 \left( { - h^2\Delta -V_2} \right)\widetilde\Phi_2 } \right]_- \\
&=&  \int {V_2}^2(x)\widetilde\Phi_2(x)^2dx +o(h^{-2}) =\int {V}^2(x)\widetilde\Phi_2(x)^2dx +o(h^{-2})  .
\eqq
Thus we obtain the semiclassics
\bq 
	\Tr \left[ {\widetilde\Phi_2 \left( { - h^2\Delta -V} \right)\widetilde\Phi_2 } \right]_ -=\int {V}_+^2(x)\widetilde\Phi_2(x)^2dx +o(h^{-2}).
	\label{eq:general-intermediate}
\eq

6. The desired semiclassics follows from \eqref{eq:general-localization}, \eqref{eq:general-interior}, \eqref{eq:general-exterior} and \eqref{eq:general-intermediate}.
\end{proof}

\begin{appendix}
\section{Appendix}

In this appendix we provide several technical proofs.

\begin{proof}[Proof of Theorem~\ref{thm:HVZ}]
(i) The HVZ Theorem indeed holds for all dimension $d\ge 2$ (see e.g. \cite{Ma10} Theorem 2.1 for a short proof). The decay property is essentially taken from \cite{OO77} where the only change is of solving equation (3.8) in \cite{OO77}. In fact, the two-dimensional solution $w_2(r)$ (with $r=|x|$) is obtained by scaling the three-dimensional solution $w_3(r)$ in \cite{OO77} as $w_2=w_3|_{\eps\mapsto 4 \eps, Z\mapsto 4Z, r\mapsto r/2.}$

(ii) The proof of Zhislin's Theorem is standard and there is no difference between two and three dimensions. The  idea is that by induction we can use the ground state $H_{N,Z}$ to construct a $(N+1)$-particle wave function with strictly lower energy whereas $N<Z$. It should be mentioned that some certain decay of the ground state is necessary to control the localization error when we consider the cut-off wave function in a compact set. 

(iii) The asymptotic neutrality follows from the original proof in three dimensions of Lieb, Sigal, Simon and Thirring \cite{LSST88}. The key point of their proof is the construction of a partition of unity. But a partition of unity in three dimensions obviously yields a partition of unity in two dimensions, hence this part of the proof can be adopted. Note that the Pauli exclusion principle enters when solving the hydrogen atom.
\end{proof}

\begin{proof}[Proof of Lemma~\ref{le:hydrogen-1}]
For any $m\in \mathbb{N}$, one has
\bqq
	\Tr\left[ { - \frac{1}{2}\Delta  - |x|^{ - 1}  + \frac{1}{{2(m + 1/2)^2 }}} \right]_ -   &=& \sum\limits_{n = 0}^m {(2n + 1)\left[ { - \frac{1}{{2(n + 1/2)^2 }} + 
	\frac{1}{{2(m + 1/2)^2 }}} \right]} \hfill\\
	&=&- \sum\limits_{n = 0}^m {\frac{1}{{n + 1/2}}} +\frac{(m+1)^2}{2(m+1/2)^2}.
\eqq
Using Euler's approximation 
\[
	\sum\limits_{n = 0}^m {\frac{1}{{n + 1/2}}}  = \ln (m) + 2\ln (2) +\gamma_{E} +o(1)_{m\to \infty}
\]
we get
$$
   \Tr\left[ { - \frac{1}{2}\Delta  - |x|^{ - 1}  + \frac{1}{{2(m + 1/2)^2 }}} \right]_ -   = - \ln (m) - 2\ln (2) - \gamma_{E}  + \frac{1}{2} +o(1)_{m\to \infty}
$$
which implies \eqref{eq:asymtotic-hydrogen-mu}. Moreover, \eqref{eq:asymtotic-hydrogen-h} follows from \eqref{eq:asymtotic-hydrogen-mu} by scaling $x\mapsto (2h^{2})^{-1}x$,  namely
\bqq
  	\Tr\left[ { - h^2\Delta  - |x|^{ - 1}  + \mu} \right]_ -   &=& (2h^2)^{ -1} \Tr \left[ { - \frac{1}{2}\Delta  - |x|^{ - 1}  + 2h^2 \mu} \right]_ -.
\eqq
\end{proof}

\begin{proof}[Proof of Lemma~\ref{le:hydrogen-2}] (i) For any constant $a\ge 0$, using the Lieb-Thirring inequality \eqref{eq:LT-eigenvalue-sum} we have
\bqq
  0 \ge \Tr [( - h^2 \Delta  -V)\phi\gamma \phi] &\ge& \Tr [(-(h^2/2)\Delta +a) \phi\gamma \phi] + \Tr [( - (h^2/2) \Delta  -(a+V_+).1_{\Omega})\phi\gamma \phi] \\
  &\ge & \Tr [(-(h^2/2)\Delta +a) \phi\gamma \phi] - 4L_{1,2}h^{-4}|| a+V_+||_{L^2(\Omega)}^2.
\eqq
Choosing $a=1$ and using $\Tr [-\Delta \phi\gamma \phi] \ge 0$ and $(1+V_+)\in L^2(\Omega)$ we get $\Tr[\phi\gamma \phi]<\infty$, namely $\phi\gamma \phi$ is trace class. On the other hand, choosing $a=0$ and using  the Lieb-Thirring inequality \eqref{eq:LT-kinetic} to estimate $\Tr [-\Delta \phi\gamma \phi]$, we arrive at 
$$\int_{\R^2}\rho_{\phi\gamma\phi}^{2}(x)dx \le C h^{-4} ||V_+||_{L^2(\Omega)}^{2}.$$
Because $\supp \rho_{\phi\gamma\phi} \subset \Omega$, the above estimate and H\"older's inequality yield the desired bound on $\int \rho_{\phi \gamma \phi}^{2\alpha}$ for any $\alpha\in [0,1]$. 

(ii) We can use the same idea of the above proof. The only adaption we need in this case is to use both of the Lieb-Thirring inequality \eqref{eq:LT-eigenvalue-sum} and the hydrogen semiclassics \eqref{eq:asymtotic-hydrogen-h} to bound $\Tr [( - (h^2/2) \Delta  -(a+V_+).1_{\Omega})\phi\gamma \phi]$. More precisely, since $V\le C_0(|x|^{-1}+1)$, we have 
\bqq
  \operatorname{Tr} [( - (h^2/2) \Delta  - (a + V_ +  ).1_\Omega  )\phi \gamma \phi ] &\ge& \operatorname{Tr} [( - (h^2/4) \Delta  - C_0 |x|^{ - 1}  + 1)\phi \gamma \phi ] \\
&~&+ \operatorname{Tr} [( - (h^2/4) \Delta  - (C_0 +a+1).1_\Omega  )\phi \gamma \phi ] \\
&\ge & - Ch^{ - 2} |\ln h| - C(a+1)^2h^{ - 2} |\Omega | .
\eqq
\end{proof}

\begin{proof}[Proof of Lemma~\ref{le:convolution-bound}] 1. The lower bound follows from the radial symmetry of $\rho$ and the fact that $\Delta(|x|^{-1})=|x|^{-3}>0$ pointwise for all $x\ne 0$.

In fact, since $\rho$ is radially symmetric we can write
\[
(\rho *\clmb^{ - 1} )(x) = \int\limits_{|y| < |x|} {\left( {\int\limits_{S_y } {\frac{1}
{{|x - z_2 |}}dz_2 } } \right)\rho (y)dy}  + \int\limits_{|y| > |x|} {\left( {\int\limits_{S_x } {\frac{1}
{{|z_1  - y|}}dz_1 } } \right)\rho (y)dy} 
\]
where $dz_1$ and $dz_2$ are normalized Lebesgue measure on the circles $S_x:=\{z\in \R^2: |z|=|x|\}$ and $S_y:=\{z\in \R^2: |z|=|y|\}$. 

If $|x|>|y|$ then using the subharmonic property of the mapping $z\mapsto |x-z|^{-1}$ in the open set $\{z\in \R^2: |z|<|x|\}$ we get 
\[
\int\limits_{S_y } {\frac{1}
{{|x - z_2 |}}dz_2 }  \ge \frac{1}
{{|x|}}.
\]
Together with the similar inequality for $|y|>|x|$, we obtain the desired lower bound on $\rho*\clmb^{-1}$.

2. Because $\rho(x) \le (2\pi |x|)^{-1}$ and $\int \rho =\lambda$, for any $\kappa>1$, 
\bqq
	(\rho  *\clmb^{ - 1} )(x) &=& \int\limits_{\mathbb{R}^2 } {\frac{{\rho (y)}}{{|x - y|}}dy}  \le \int\limits_{|x - y| \le |x|/2} {}  + 
	\int\limits_{|x - y| \ge |x|/2,|y| \le \kappa |x|} {}  + \int\limits_{|y| \ge \kappa |x|} {}  \hfill \\
	& \le & \int\limits_{|x - y| \le |x|/2} {\frac{(2\pi)^{-1}}{{(|x|/2)|x - y|}}dy}  + \int\limits_{|y| \le \kappa |x|} {\frac{(2\pi)^{-1}}{{|y|(|x|/2)}}dy}  
	+ \int\limits_{\mathbb{R}^2 } {\frac{{\rho (y)}}{{(\kappa  - 1)|x|}}dy}  \hfill \\
	&\le & 1+ 2\kappa + \frac{\lambda}{{(\kappa  - 1)|x|}} .
\eqq
Optimizing the latter estimate over $\kappa>1$ yields the first upper bound on $\rho * \clmb^{-1}$. 

3. We now prove the second upper bound on $(\rho*\clmb^{-1})(x)$ for $|x|$ large. We start by decomposing $\mathbb{R}^2$ into three subsets  
\bqq
  \Omega_1  &:=& \left\{ {y \in \mathbb{R}^2 :|x - y| \ge |x|/2} \right\}, \hfill \\
  \Omega_2  &:=&\left\{ {y \in \mathbb{R}^2 : \left| {|x| - |y|} \right| \le |x|^{ - 2} } \right\}, \hfill \\
  \Omega_3  &:=& \left\{ {y \in \mathbb{R}^2 :|x - y| <|x|/2,\left| {|x| - |y|} \right| > |x|^{ - 2} } \right\}. 
\eqq
Fix $\eps>0$ small. For $|x|$ large enough,  
\bq \label{eq:neutral-TF-U1}
  \int\limits_{\Omega_1} {\frac{{\rho (y)}}
{{|x - y|}}dy}  &=& \int\limits_{|y| <  |x|^{1/2}} {\frac{{\rho (y)}}
{{|x - y|}}dy}  + \int\limits_{|x - y| \ge |x|/2,|y| \ge |x|^{1/2}} {\frac{{\rho (y)}}
{{|x - y|}}dy}  \nn\hfill \\
   &\le& {\frac{{\int \rho}}
{{|x|-|x|^{1/2}}} + } \int\limits_{|y| \ge |x|^{1/2}} {\frac{{\rho (y)}}
{{(|x|/2)}}dy}  \le \frac{\lambda +2\eps }
{{|x|}}.
\eq
Moreover, since $\rho(y)\le (2\pi |y|)^{-1}$,
\bq \label{eq:neutral-TF-U2}
\int\limits_{\Omega_2} {\frac{{\rho (y)}}
{{|x - y|}}dy}  &\le& \int\limits_{|x - y| \le \eps  } {\frac{1}
{{2\pi |y||x - y|}}dy}+  \int\limits_{|x-y|\ge \eps, \left| {|x| - |y|} \right| \le |x|^{ - 2}} {\frac{{1}}
{{2\pi |y||x - y|}}dy} \nn \hfill\\
&\le&\int\limits_{|x - y| \le \eps } {\frac{1}
{{2\pi (|x| -\eps).|x - y|}}dy}+ \int\limits_{\left| {|x| - |y|} \right| \le |x|^{ - 2} } {\frac{1}
{{(2\pi |y|)\eps}}dy} \nn \hfill\\
&=& \frac{{\eps }}
{{|x| -\eps }}+\frac{2}{\eps |x|^2} \le \frac{2\eps }
{{|x|}}.
\eq
Next, using the polar integral in 
$$ \Omega_3\subset \{y\in \R^2: 3|x|/2\ge |y| \ge |x|/2, \left| {|x| - |y|} \right| > |x|^{ - 2} \}$$
we have, with notation $s:={\min\{|x|,r\}} / {\max\{|x|,r\}}$,
\bq \label{eq:U4-polar-identity}
\int\limits_{\Omega_3} {\frac{{\rho (y)}}
{{|x - y|}}dy} \le \int\limits_{3|x|/2 \ge r \ge |x|/2,|r - |x|| \ge |x|^{ - 2} }^{} {\int\limits_0^{2\pi } {\frac{{\rho (r)r}}
{{\max \{ |x|,r\} \sqrt {1 + s^2  - 2s\cos (\theta )} }}d\theta dr}}.
\eq
The singularity of the integral w.r.t. $\theta$ (at $s\to 1^-$) is controlled by the following technical lemma (we shall prove later). 

\begin{lemma}[Upper bound on elliptic integral]\label{le:elliptic-integral} There exists a finite constant $C>0$ such that 
\[
\int\limits_0^{2\pi } {\frac{{d\theta }}
{{\sqrt {1 + s^2  - 2s\cos (\theta )} }}}  \le C(1+|\ln(1 - s)|)~~\text{for all}~0<s<1.
\]
\end{lemma}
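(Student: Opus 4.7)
The plan is to exploit the symmetry and localize the singularity, then reduce to an explicitly computable integral. First, since $\cos\theta$ is symmetric about $\theta = \pi$, I would rewrite
\[
\int_0^{2\pi} \frac{d\theta}{\sqrt{1+s^2-2s\cos\theta}} = 2\int_0^{\pi} \frac{d\theta}{\sqrt{1+s^2-2s\cos\theta}},
\]
so that the only singularity (as $s \to 1^-$) sits at the left endpoint $\theta = 0$. I would then split $[0,\pi] = [0,\pi/2]\cup[\pi/2,\pi]$. On $[\pi/2,\pi]$ one has $\cos\theta \le 0$, so $1+s^2 - 2s\cos\theta \ge 1$, and this part of the integral contributes at most $\pi/2$, uniformly in $s$.

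For the singular part on $[0,\pi/2]$, the key step is the elementary inequality $1-\cos\theta \ge \frac{2}{\pi^2}\theta^2$ for $\theta \in [0,\pi/2]$ (which follows from concavity of $\sin$ on $[0,\pi/2]$, or just from the Taylor remainder). This gives
\[
1+s^2-2s\cos\theta = (1-s)^2 + 2s(1-\cos\theta) \ge (1-s)^2 + \tfrac{4s}{\pi^2}\theta^2.
\]
Therefore
\[
\int_0^{\pi/2}\frac{d\theta}{\sqrt{1+s^2-2s\cos\theta}}\le \int_0^{\pi/2}\frac{d\theta}{\sqrt{(1-s)^2 + \tfrac{4s}{\pi^2}\theta^2}},
\]
and the right-hand side can be evaluated explicitly via the substitution $\theta = \frac{\pi(1-s)}{2\sqrt{s}}\,u$, yielding
\[
\frac{\pi}{2\sqrt{s}}\,\operatorname{arcsinh}\!\left(\frac{\sqrt{s}}{1-s}\right).
\]

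It remains to bound this quantity by $C(1+|\ln(1-s)|)$. I would split into cases: for $s \in (0,1/2]$ the argument of $\operatorname{arcsinh}$ is bounded, so the whole expression is $O(1)$; for $s \in [1/2,1)$ one uses $\operatorname{arcsinh}(x) = \ln(x+\sqrt{1+x^2}) \le \ln(2x+1)$ for $x \ge 0$, together with $\sqrt{s} \ge 1/\sqrt{2}$, to obtain a bound of the form $C\ln\bigl(1 + (1-s)^{-1}\bigr) \le C(1+|\ln(1-s)|)$. Combining the two regions then gives the claimed inequality. I do not expect any real obstacle here; the only point requiring a bit of care is picking an elementary quadratic lower bound on $1-\cos\theta$ so that the resulting integral admits a closed-form antiderivative, and then tracking the $s\to 1^-$ asymptotics of $\operatorname{arcsinh}$.
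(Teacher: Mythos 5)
Your argument is correct, and it takes a genuinely different route from the paper. The paper converts the integral, via the half-angle identity $1+s^2-2s\cos\theta=(1+s)^2-4s\cos^2(\theta/2)$, into the complete elliptic integral of the first kind $K(k)$ with $k=2\sqrt{s}/(1+s)$, and then bounds $K(k)$ by comparison with $\int_0^1 \bigl((1-t)(1-kt)\bigr)^{-1/2}dt$, which has the closed form $\tfrac{1}{\sqrt{k}}\ln\bigl(\tfrac{1+\sqrt{k}}{1-\sqrt{k}}\bigr)$; the logarithmic divergence in $1-k$ is then translated back into $|\ln(1-s)|$. You instead stay with the original variable, isolate the singularity at $\theta=0$, and replace $1-\cos\theta$ by the elementary quadratic lower bound $\tfrac{2}{\pi^2}\theta^2$, so that the comparison integral is $\int_0^{\pi/2}\bigl((1-s)^2+\tfrac{4s}{\pi^2}\theta^2\bigr)^{-1/2}d\theta$, evaluated in closed form as an $\operatorname{arcsinh}$. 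Your route is more elementary (no elliptic-integral normalization, no change of modulus from $s$ to $k$) and makes the $\ln\tfrac{1}{1-s}$ growth completely transparent; the paper's route has the mild advantage of identifying the object as $K(k)$, for which the sharp asymptotics $K(k)=\tfrac12|\ln(1-k)|+\tfrac32\ln 2+o(1)$ are classical, as the authors remark. One small point to tighten: in your case $s\in(0,1/2]$, the prefactor $\tfrac{\pi}{2\sqrt{s}}$ is unbounded as $s\to 0^+$, so "the argument of $\operatorname{arcsinh}$ is bounded" alone does not give $O(1)$; either use $\operatorname{arcsinh}(x)\le x$, which gives the bound $\tfrac{\pi}{2(1-s)}\le\pi$, or simply note that for $s\le 1/2$ the original integrand is at most $(1-s)^{-1}\le 2$, so that the whole integral is trivially bounded there. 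With that one-line fix the proof is complete.
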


Note that if $|r-|x||\ge |x|^{-2}$ and  $|x|\ge 1$ then   
$$1-s\ge 1-\frac{|x|-|x|^{-2}}{|x|+|x|^{-2}}=\frac{2|x|^{-3}}{1+|x|^{-3}}\ge |x|^{-3}.$$
Using \eqref{eq:U4-polar-identity} and Lemma~\ref{le:elliptic-integral} we get, for $|x|$ large enough,
\bq \label{eq:neutral-TF-U4}
  \int\limits_{\Omega_3} {\frac{{\rho (y)}}
{{|x - y|}}dy}  
 \le C_1\frac{{\ln (|x|)}}
{{|x|}}\int\limits_{3|x|/2 \ge |y| \ge |x|/2} {\rho (y)dy}
\eq
for some universal constant $C_1$. 

Putting \eqref{eq:neutral-TF-U1}, \eqref{eq:neutral-TF-U2} and \eqref{eq:neutral-TF-U4} together, we conclude that for any $\eps>0$ there exists $R=R(\eps,\rho)$ such that for any $|x|\ge R$, 
$$
(\rho *|x|^{ - 1} )(x) \le \frac{{\lambda+ 4\eps }}
{{|x|}}+C_1\frac{{\ln (|x|)}}
{{|x|}}\int\limits_{3|x|/2 \ge |y| \ge |x|/2} {\rho (y)dy}.
$$ 
\end{proof}

For completeness we provide the proof of the upper bound on the elliptic integral.
\begin{proof}[Proof of Lemma~\ref{le:elliptic-integral}] 
We just need to consider the singularity when $s\to 1^-$. Write 
\[
1 + s^2  - 2s\cos (2\theta ) = (1 + s)^2  - 2s(1 + \cos (\theta )) = (1 + s)^2  - 4s\cos ^2 (\theta /2).
\]
Denoting $k^2=4s/(1+s)^2$ and making a change of variable ($\theta \mapsto \pi-2\theta$), we need to show that
\bqq 
K(k):=\int\limits_0^{\pi /2} {\frac{{d\theta }}
{{\sqrt {1 - k^2 \sin ^2 (\theta )} }}}  = \int\limits_0^1 {\frac{{dt}}
{{\sqrt {(1 - t^2 )(1 - k^2 t^2 )} }}}  \le C|\ln (1 - k)|
\eqq
when $k\to 1^-$. This upper bound follows from the identity
\[
\int\limits_0^1 {\frac{{dt}}
{{\sqrt {(1 - t)(1 - kt)} }}}  = \frac{1}{\sqrt{k}} \ln \left( {\frac{{1 + \sqrt k }}
{{1 - \sqrt k }}} \right).
\]
\end{proof}

\begin{remark} 
The function $K(k)$ is the complete elliptic integral of the first kind. Its asymptotic behavior at $k\to 1^-$ is well known. It is (see \cite{AS70}, eq. (17.3.26), p. 591)
\[
K(k) = \frac{1}
{2}|\ln (1 - k)| + \frac{3}
{2}\ln (2) + o(1)_{k\to 1^-}.
\]
\end{remark}

\begin{proof}[Proof of Lemma~\ref{le:V-Vg2}] Recall that we are working on the region $2\Lambda \ge |x| \ge r \gg s$. We start with the triangle inequality 
\bq \label{eq:V-Vg2-bound-0}
|V - V*g_s^2 | \le \left| {\clmb^{ - 1}  - \clmb^{ - 1} *g_s^2 } \right| + \left| {\rho _\lambda ^{\TF } *\clmb^{ - 1}  - \rho _\lambda ^{\TF } *\clmb^{ - 1} *g_s^2 } \right|.
\eq
(If $V(x)=|x|^{-1}-1$ then the term invloved $\rho _\lambda ^{\TF } $ disappears.)

When $|x| \ge r \gg s \ge |y|$ using 
$$\left| {|x|^{ - 1}  - |x - y|^{ - 1} } \right| \le Cs|x|^{ - 2} $$
one has
\bq \label{eq:V-Vg2-bound-1}
\left| {\clmb^{ - 1}  - \clmb^{ - 1} *g_s^2 } \right|(x) \le \int {\left| {|x|^{ - 1}  - |x - y|^{ - 1} } \right|g_s^2 (y)dy}  \le Cs|x|^{ - 2} .
\eq
Moreover,
\bq\label{eq:V-Vg2-bound-1b}
\left| {\rho _\lambda ^{\TF } *\clmb^{ - 1}  - \rho _\lambda ^{\TF } *\clmb^{ - 1} *g_s^2 } \right|(x) \le \iint {\rho _\lambda ^{\TF } (x - y)\left| {|y|^{ - 1}  - |y - z|^{ - 1} } \right|g_s^2 (z)dydz}.
\eq
We divide the integral into two domains. If $|y|\ge r/2$ then using 
$$\left| {|y|^{ - 1}  - |y - z|^{ - 1} } \right| \le Cs|y|^{ - 2}  \le Csr^{ - 1} |y|^{ - 1} $$
and $({\rho _\lambda ^{\TF } *\clmb^{ - 1} })(x)\le C(|x|^{-1/2}+1)$ (see Lemma~\ref{le:convolution-bound}) we obtain
\bq \label{eq:V-Vg2-bound-2}
\iint\limits_{|y| \ge r/2} {\rho _\lambda ^{\TF } (x - y)\left| {|y|^{ - 1}  - |y - z|^{ - 1} } \right|g_s^2 (z)dydz} \le Csr^{ - 1} (\rho _\lambda ^{\TF } *\clmb^{ - 1} )(x) \le Csr^{ - 1} |x|^{ - 1/2} .
\eq
If $|y|\le r/2$ then using 
$$\rho _\lambda ^{\TF } (x - y) \le C(|x - y|^{ - 1}  + 1) \le C(|x|^{ - 1}  + 1)$$
and $\int_{|y|\le 2r}|y|^{-1}dy \le Cr$ we obtain
\bq \label{eq:V-Vg2-bound-3}
	&~&  \iint\limits_{|y| \le \delta /2} {\rho _\lambda ^{\TF } (x - y)\left| {|y|^{ - 1}  - |y - z|^{ - 1} } \right|g_s^2 (z)dydz} \nn \hfill \\
	&\le& C(|x|^{ - 1}  + 1)\iint\limits_{|y| \le r/2,|y - z| \le 2r} {\left( {|y|^{ - 1}  + |y - z|^{ - 1} } \right)g_s^2 (z)dydz} \le Cr (|x|^{ - 1}  + 1)
\eq
Replacing \eqref{eq:V-Vg2-bound-2} and \eqref{eq:V-Vg2-bound-3} into \eqref{eq:V-Vg2-bound-1b} we arrive at
\[
	\left| {\rho _\lambda ^{\TF } *\clmb^{ - 1}  - \rho _\lambda ^{\TF } *\clmb^{ - 1} *g_s^2 } \right|(x) \le C(sr^{ - 1} |x|^{ - 1/2}  + r|x|^{ - 1}  + 
	r)~~\text{when}~|x|\ge r.
\]
From the latter inequality and \eqref{eq:V-Vg2-bound-1} we can deduce from \eqref{eq:V-Vg2-bound-0} that 
\[
|V*g_s^2  - V|(x) \le C(s|x|^{ - 2}  + sr^{ - 1} |x|^{ - 1/2}  + r|x|^{ - 1}  + r)~~\text{when}~|x|\ge r.
\]
Taking the square integral of the previous inequality over $\{r\le |x|\le 2\Lambda\}$ we get (with $\Lambda=|\ln h|$, $r=h^{1/2}$, $s=h^{2/3}$)
\[
\int_{r \le |x| \le 2\Lambda } {|V - V*g_s^2 |^2 (x)dx}  \le C(s^2r^{ -2}\Lambda +r|\ln (\Lambda/r)| + r^2 \Lambda^2 ) \le Ch^{1/4}.
\]
\end{proof}

\end{appendix}

\paragraph{Acknowledgments:} P.T.N. and F.P. would like to thank KTH Stockholm and the University of Copenhagen, respectively, for hospitality. F.P. is grateful to A. Laptev for discussion. We thank the referee for constructive suggestions. This work was partially supported by the Danish council for independent research.



(P.T. Nam)
Department of Mathematical Sciences, University of Copenhagen, Universitetsparken 5, DK-2100 Copenhagen, Denmark. {E-mail:} ptnam@math.ku.dk 

\vspace{10pt}

(F. Portmann)
Department of Mathematics, Royal Institute of Technology, Lindstedtsv\"agen 25, SE-10044 Stockholm, Sweden. {E-mail:} fabianpo@kth.se

\vspace{10pt}

(J.P. Solovej)
Department of Mathematical Sciences, University of Copenhagen, Universitetsparken 5, DK-2100 Copenhagen, Denmark. {E-mail:} solovej@math.ku.dk


\begin{thebibliography}{10}

\bibitem{AS70} M. Abramowitz and Irene A. Stegun (eds.), \textit{Handbook of Mathematical Functions with Formulas, Graphs, and Mathematical Tables}, New York: Dover Publications, 9th-Printing (1970).

\bibitem{Dirac} P.A.M.~Dirac, {Note on exchange phenomena in the 
Thomas-Fermi atom}, {\it Proc.~Cambridge Phil.~Soc.}, {\bf 26}, 376-385 (1930)

\bibitem{D10} P. Duclos, P. \v{S}tovi\v{c}ek and M. Tu\v{s}ek, On the two-dimensional Coulomb-like potential with a central point interaction, \textit{J. Phys. A: Math. Theor.} \textbf{43}, 474020 (2010).

\bibitem{Eu1736} L. Euler, \textit{Inventio summae cuiusque seriei ex dato termino generali}, St. Petersbourg, (1736)

\bibitem{FS90} C. Fefferman and L. Seco, On the energy of a large atom, \textit{Bulletin of the A.M.S.}, 
\textbf{23}, 525-530 (1990).

\bibitem{Fermi} E. Fermi, Un Metodo Statistico per la Determinazione di 
alcune Priopriet\`a dell'Atomo. {\it Rend. Accad. Naz. Lincei} {\bf 6}, 602-607 (1927).

\bibitem{Hu90} W.~Hughes, {An atomic energy bound that gives Scott's 
correction}, {\it Adv.~Math.}, {\bf 79}, 213-270 (1990)

\bibitem{Ma10} M. Lewin, Geometric methods for nonlinear many-body quantum systems, \textit{J. Funct. Anal.}, {\bf 260}, 3535-3595 (2011)

\bibitem{Li81} E.H. Lieb, Thomas-Fermi and related theories of atoms and molecules, {\it Rev. Mod. Phys.} {\bf 53}, 603-641 (1981).

\bibitem{Li79} E. H. Lieb, A lower bound for Coulomb energies, \textit{Phys. Lett. A} \textbf{70}, 444-446 (1979).

\bibitem{Li81B} E.H. Lieb, Variational principle for many-fermion systems, {\it Phys. Rev. Lett.} {\bf 46}, 457-459 (1981).

\bibitem{LL01} E.H. Lieb and M. Loss, {\it Analysis}, 2nd Ed., AMS., Providence, Rhode Island (2001).

\bibitem{LO81} E.H. Lieb and S. Oxford, An improved lower bound on the indirect Coulomb energy, \textit{Internat. J. Quantum Chem.} \textbf{19}, 427-439 (1981).

\bibitem{LSST88} E.H. Lieb, I. M. Sigal, B. Simon and W. Thirring, Asymptotic neutrality of large-Z ions, \textit{Commun. Math. Phys.} \textbf{116}, 635-644 (1988).

\bibitem{LS77} E.H. Lieb and B. Simon, Thomas-Fermi theory of atoms, molecules and solids, \textit{Adv. in Math.} {\bf 23} , 22-116 (1977).

\bibitem{LSY95} E.H. Lieb, J.P. Solovej, J. Yngvason, Ground States of Large Quantum Dots in Magnetic Fields, \textit{Phys. Rev. B.} {\bf 51}, no. 16,  10646-10666 (1995).

\bibitem{LT75} E.H. Lieb and W. Thirring, Bound for the Kinetic Energy of Fermions which Proves the Stability of Matter, {\it Phys. Rev. Lett.} {\bf 35}, 687-689 (1975).

\bibitem{OO77} M. Hoffmann-Ostenhof and T. Hoffmann-Ostenhof, "Schr\"odinger inequalities" and asymptotic behavior of the electron density of atoms and molecules, \textit{Phys. Rev. A} \textbf{16}, 1782-1785 (1977). 

\bibitem{PP02} D.G.W. Parfitt and M.E. Portnoia, The two-dimensional hydrogen atom revisited, {\it J. Math. Phys.} {\bf 43}, no. 10 (2002).

\bibitem{Schwinger} J.~Schwinger, {Thomas-Fermi model: The second 
correction}, {\it Phys.~Rev.~A}, {\bf 24}, vol. 5, 2353-2361 (1981).

\bibitem{Scott} J.M.C.~Scott, {The binding energy of the Thomas-Fermi 
atom}, {\it Phys.~Rev.}~{\bf 43}, no.~343, 859-867 (1952).

\bibitem{SSS90} L.A. Seco, I.M. Sigal, and J.P. Solovej, Bound on the ionization energy of large atoms, {\it Commun. Math. Phys.} {\bf 131}, 307-315 (1990).

\bibitem{SW87} H.~Siedentop and R.~Weikard, {On the 
leading energy correction for the statistical model of an atom: 
interacting case}, {\it Commun.~Math.~Phys.}~ {\bf 112}, 471-490 (1987)

\bibitem{So03} J.P. Solovej, The ionization conjecture in Hartree-Fock theory. {\it Annals of Math.} {\bf 158}, 509-576 (2003).

\bibitem{SSS10} J.P. Solovej, T. \O . S\o rensen, W. Spitzer, The relativistic Scott correction for atoms and molecules, \textit{Comm. Pure Appl. Math.} Vol. \textbf{LXIII}, 39-118 (2010).

\bibitem{SS03} J.P. Solovej and W. Spitzer, A new coherent states approach to semiclassics which gives Scott's
correction. {\it Commun. Math. Phys.} {\bf 241}, no. 2-3, 383-420 (2003).

\bibitem{Th81} W. Thirring, A lower bound with the best possible constant for Coulomb Hamiltonians,
{\it Commun. Math. Phys.} {\bf 79}, 1-7  (1981).

\bibitem{Thomas} L.H. Thomas, The calculation of atomic fields, {\it Proc. 
Cambridge Phil. Soc.} {\bf 23} (5), 542-548  (1927).

\bibitem{Y91} X.L. Yang, S.H. Guo, F.T. Chan, K.W. Wong and W.Y. Ching, {Analytic solution of a two-dimensional hydrogen atom. I. Nonrelativistic theory}, \textit{Phys. Rev. A.} \textbf{43}, 1186-1196 (1991).


\end{thebibliography}
\end{document}